\newtheorem{theo}{Theorem}[section]
\newtheorem{cor}{Corollary}[section]
\newtheorem{rem}{Remark}[section]
\newtheorem{defi}{Definition}[section]
\newtheorem{lemma}{Lemma}[section]
\newtheorem{prop}{Proposition}[section]
\newtheorem{ex}{Example}[section]
\newtheorem{op}{Open Problem}
\newcommand{\proof}{\noindent{\em Proof.}\quad}
\newcommand{\FB}{{\mathbb F}_{2}}
\def\whitebox{{\hbox{\hskip 1pt
        \vrule height 6pt depth 1.5pt
        \lower 1.5pt\vbox to 7.5pt{\hrule width
                  3.2pt\vfill\hrule width 3.2pt}%
        \vrule height 6pt depth 1.5pt
        \hskip 1pt } }}
\def\qed{\ifhmode\allowbreak\else\nobreak\fi\hfill\quad\nobreak\whitebox\medbreak}
\begin{document}
\title{Designing plateaued Boolean functions in spectral domain  \\ and their classification}

\author{
S. Hod\v zi\'c \footnote {University of Primorska, FAMNIT, Koper,
Slovenia, e-mail: samir.hodzic@famnit.upr.si}
\and E.~Pasalic\footnote{
University of Primorska, FAMNIT \& IAM, Koper, Slovenia, e-mail: enes.pasalic6@gmail.com}
\and Y. Wei \footnote{
 Guilin University of Electronic Technology, Guilin,
P.R. China,
 e-mail: walker$_{-}$wyz@guet.edu.}
\and F. Zhang \footnote{School of Computer Science and Technology,
China University of
 Mining and Technology, Xuzhou, Jiangsu 221116,  P.R. China, e-mail:
zhfl203@cumt.edu.cn.}
}

\date{}
\maketitle

\begin{abstract}
The design of plateaued functions over $GF(2)^n$, also known as 3-valued Walsh spectra functions (taking the values from the set $\{0, \pm 2^{\lceil \frac{n+s}{2} \rceil}\}$), has been commonly approached by specifying a suitable algebraic normal form  which then induces this particular Walsh spectral characterization.
In this article, we  consider the reversed design method which specifies these functions in the spectral domain by specifying a suitable  allocation of the nonzero spectral values and their signs. We analyze the properties of {\em trivial and nontrivial} plateaued functions (as affine inequivalent distinct subclasses), which are distinguished by their  Walsh support $S_f$ (the subset of $GF(2)^n$ having the nonzero spectral values) in terms of whether it is an affine subspace or not. The former class  exactly corresponds to partially bent functions and admits linear structures, whereas the latter class may contain functions without linear structures. A simple sufficient condition on $S_f$, which ensures the nonexistence of linear structures, is derived  and some generic design methods of nontrivial plateaued functions without linear structures are given.
The extended affine equivalence of plateaued functions is also addressed using the concept of {\em dual} of plateaued functions.
Furthermore, we solve the problem of specifying disjoint spectra (non)trivial plateaued functions of maximal cardinality whose concatenation can be used to construct bent functions in a generic manner. This approach may lead to new classes of bent functions due to large variety of possibilities to select underlying duals that define these disjoint spectra plateaued functions.
An additional method of specifying affine inequivalent plateaued functions, obtained by applying a nonlinear transform to their input domain, is also given.
\newline \newline
\noindent
\textbf{Keywords:} Plateaued functions, Extended affine equivalence, Disjoint spectra plateaued functions, Extendable and nontrivial plateaued functions.

\end{abstract}

\section{Introduction}


A class of Boolean functions on $GF(2)^n$ characterized by the property that their Walsh spectra is three-valued (more precisely taking values in $\{0,\pm 2^{\frac{n+s}{2}}\}$ for a positive integer $s<n$) is called  $s$-\emph{plateaued} functions  \cite{Zheng}.
The notion of plateaued functions as defined here does not include bent and linear functions (which only have  two different values in their Walsh spectrum), though sometimes in the literature these families are also included. This class of functions has a wide range of applications such as  cryptography, in the design of sequences for communications, and related combinatorics and designs.
For instance, vectorial plateaued functions were efficiently used to provide large sets of orthogonal binary sequences suitable for CDMA (Code Division Multiple Access) applications which resulted in a significant improvement of the number of users per cell in regular hexagonal networks \cite{WeiGuoCDMA}.
The design methods of plateaued functions have been addressed in several works in the past twenty years \cite{CarletAPN,CarletQ,Feng2,SihemPOP2014,THOMCusick2016,SihemM2017AMC,Fengrong2018IT,Zheng2}, but none of these construction methods is generic and furthermore a simple characterization of these functions in the Walsh domain has not been utilized as a construction rationale.  In \cite{CarletAPN}, Carlet provided a survey on the construction methods of Boolean plateaued functions along with  certain characterizations of (vectorial) plateaued functions which extends the initial attempt towards their characterization in \cite{CarletQ}. Nevertheless, these  characterizations of plateaued functions, given in terms of their second order derivatives, autocorrelation properties or power moments of the Walsh transform in \cite{CarletAPN}, are not efficient as generic construction methods.

In addition,  some  methods  for specifying  plateaued functions in the algebraic normal form domain was considered in \cite{THOMCusick2016} and \cite{Logatchev}.   In \cite{SihemM2017AMC}, a secondary construction of plateaued functions was presented based on the use of three suitable  bent functions.
The importance of plateaued functions also relates to construction of highly nonlinear resilient functions and for this purpose  the so-called disjoint spectra plateaued functions  are required
\cite{Zhang2009}. Certain methods, mainly employing the (generalized) indirect sum method of Carlet \cite{SecEnf},  for generating the sets of disjoint spectra plateaued functions without linear structures were investigated  in \cite{Feng2,Fengrong2018IT}.
%
We briefly mention that the notion of plateaued functions has been generalized over different mathematical structures, the interested reader is referred to   \cite{AycaWilffried,SihemPOP2014,Jong,SihemPlate,SihemM2017IT}.

The main objective of this article is to completely revise the design approach by transferring it to the Walsh domain which gives much more flexibility and generality.
Indeed, any $s$-plateaued function $f$ on $GF(2)^n$ uniquely corresponds to its Walsh support, denoted by $S_f$ (a subset of $GF(2)^n$ for which the Walsh spectral values are nonzero $\pm 2^{\frac{n+s}{2}}$), and a sequence of $(\pm 1)$-signs that precisely specifies these  non-zero Walsh coefficients.
The cardinality  of these positive and negative values in the Walsh support  is also known and it corresponds to the cardinality of zeros and ones of a bent function defined on $GF(2)^{n-s}$. This implies that an alternative approach to design plateaued functions on $GF(2)^n$ is to provide a proper placement of the spectral values  $\{0,\pm 2^{\frac{n+s}{2}}\}$ over the vector space $GF(2)^n$ so that this placement indeed yields a Boolean function (which is then plateaued). A useful categorization of  plateaued functions in this context is their separation  into two classes that we call {\em trivial} respectively {\em nontrivial} plateaued functions. The former class corresponds to the case when the Walsh support of a plateaued function is an affine subspace. In this case a plateaued function corresponds to a partially bent function and therefore for convenience it is called trivial (see relation (3.1) in \cite{Ayca2}). Furthermore, identifying the signs of nonzero values in the Walsh spectrum of an $s$-plateaued function with any  $(\pm 1)$-bent  sequence on $GF(2)^{n-s}$ gives an efficient design method for trivial plateaued functions. In particular, if our goal is to design a set of disjoint spectra $s$-plateaued functions of maximal cardinality $2^s$ it is sufficient to define these plateaued functions on the cosets of some fixed affine subspace $S_f$ of dimension $n-s$. In difference to previous approaches \cite{Zhang2009,Feng2,Fengrong2018IT} this method is generic, provides a greater variety of the obtained disjoint spectra plateaued functions, and finally the cardinality of these disjoint spectra functions is maximal. We emphasize the fact that any affine subspace $S_f$ and any bent function on $GF(2)^{n-s}$ can be selected in the design, which then essentially exhausts the possibilities of specifying trivial plateaued functions.

To address the problem of EA-equivalence of trivial and nontrivial plateaued functions on $GF(2)^n$, whose Walsh support $S_f \subset \FB^n$ is of cardinality $2^{n-s}$
we use the concept of a {\em dual} of plateaued function, denoted by $f^*$, which essentially gives a correspondence to the (lexicographically) ordered vector space $GF(2)^{n-s}$. It refers to a function associated  to the signs of non-zero Walsh coefficients and is  viewed as a function on $GF(2)^{n-s}$ after fixing a particular ordering on the corresponding Walsh support.
While trivial and nontrivial plateaued functions are always inequivalent, we show that the EA-equivalence between the functions inside these two classes can be completely described in terms of their Walsh supports and duals.

We first completely solved the question related to the design of trivial plateaued functions in the spectral domain and then we investigate the possibilities of designing  nontrivial ones. In this case, the Walsh support is not an affine subspace which then makes a direct correspondence to bent functions on $GF(2)^{n-s}$ more difficult.
The property of being plateaued is then characterized through the bent distance of the dual $f^*$ to so-called {\em sequence profile}  of $S_f$. In difference to trivial plateaued functions the sequence profile of nontrivial plateaued functions necessarily contains sequences of nonlinear functions. In this direction, we give efficient design methods of nontrivial plateaued function which are apparently affine inequivalent to trivial ones. In particular, we also provide an efficient method for constructing nontrivial disjoint spectra plateaued functions of maximal cardinality $2^s$. It is shown that these functions cannot have linear structures which essentially improves upon recent results in \cite{Fengrong2018IT} where large sets of disjoint spectra plateaued functions, but of cardinality less than $2^s$, were specified.

Our analysis leads to two significant research problems that we only address partially.
The first question regards the notion of {\em extendable} set of disjoint spectra $s$-plateaued functions. More precisely, having demonstrated that disjoint spectra $s$-plateaued functions of maximal cardinality can always be specified for trivial plateaued functions and in certain cases for nontrivial ones, the question is whether a given set $\{f_1, \ldots,f_r\}$  of arbitrary nontrivial $s$-plateaued functions can always be extended to a set of disjoint spectra $s$-plateaued functions of maximal cardinality.
Another question which deserves a full attention of the research community is whether the flexibility of defining disjoint spectra $s$-plateaued functions of maximal cardinality might lead to new classes of bent functions. More precisely, using  a set of $2^s$ disjoint spectra of $s$-plateaued functions on $GF(2)^n$ one can easily construct a bent function on the space $GF(2)^{n+s}$ by concatenating these plateaued functions. Then, since the duals of these plateaued functions may be taken from different classes of bent functions it is very challenging to conjecture that the resulting bent function  does not necessarily belong to known primary classes of bent functions. This question is extremely important due to the fact that the primary classes of bent functions only constitute an insignificant
portion of all bent functions.

In the last part of this article we apply a similar method, used to define EA-inequivalent bent functions introduced originally in \cite{HouLang} and later extended in \cite{Houextension2017}, to generate EA-inequivalent plateaued functions by applying a suitable nonlinear permutations of the input space. This approach can also be used to specify nontrivial disjoint spectra plateaued functions but the conditions on the corresponding nonlinear permutations (acting on the input variable space) become rather complicated.

The rest of this article is organized as follows. In Section \ref{sec:pre}, we introduce some basic definitions and the properties of bent and plateaued functions. The EA-equivalence of plateaued functions and their classification into trivial and nontrivial ones is addressed in Section \ref{sec:equivalence}. Furthermore, a necessary and sufficient condition that nontrivial plateaued functions admit linear structures ia also given. In Section \ref{sec:nontrivial}, we provide some generic methods of constructing (non)trivial plateaued functions and show that these methods also gives us the possibility of defining a set of disjoint spectra plateaued functions of maximal cardinality. To generate EA-inequivalent plateaued functions,  we apply in Section \ref{sec:nonlperm} a similar technique of applying suitable nonlinear permutations of the input space, as already employed in \cite{HouLang,Houextension2017}.

\section{Preliminaries}\label{sec:pre}

The vector space $\mathbb{F}_2^n$ is the space of all $n$-tuples $x=(x_1,\ldots,x_n)$, where $x_i \in \mathbb{F}_2$.
For $x=(x_1,\ldots,x_n)$ and $y=(y_1,\ldots,y_n)$  in $\mathbb{F}^n_2$, the usual scalar (or inner) product over $\mathbb{F}_2$ is defined as $x\cdot y=x_1 y_1+\cdots+ x_n y_n.$ The Hamming weight of  $x=(x_1,\ldots,x_n)\in \mathbb{F}^n_2$ is denoted and computed as  $wt(x)=\sum^n_{i=1} x_i.$ 

The set of all Boolean functions in $n$ variables, which is the set of mappings from $\mathbb{F}_2^n$ to $\mathbb{F}_2$, is denoted by $\mathcal{B}_n$.  Especially, the set of affine functions in $n$ variables is given by $\mathcal{A}_n=\{a\cdot x+ b\;|\;a\in\mathbb{F}_2^n,\; b\in\{0,1\}\},$ and similarly  $\mathcal{L}_n=\{a\cdot x:a\in\mathbb{F}_2^n\}\subset \mathcal{A}_n$ 
 denotes the set of linear functions. It is well-known that any $f:\mathbb{F}^n_2 \rightarrow \mathbb{F}_2$ can be uniquely represented by its associated algebraic normal form (ANF) as follows:
\begin{eqnarray}\label{ANF}
f(x_1,\ldots,x_n)={\sum_{u\in \mathbb{F}^n_2}{\lambda_u}}{(\prod_{i=1}^n{x_i}^{u_i})},
\end{eqnarray}
where $x_i, \lambda_u \in \mathbb{F}_2$ and $u=(u_1, \ldots,u_n)\in \mathbb{F}^n_2$. The support of an arbitrary function $f\in \mathcal{B}_n$ is defined as $supp(f)=\{x\in \mathbb{F}^n_2:f(x)=1\}.$ 

For an arbitrary function $f\in \mathcal{B}_n$, the set of its values on $\mathbb{F}^n_2$ (\emph{the truth table}) is defined as $T_f=(f(0,\ldots,0,0),f(0,\ldots,0,1),f(0,\ldots,1,0),\ldots,f(1,\ldots,1,1))$. The corresponding $(\pm 1)$-{\em sequence of $f$} is defined as
$\chi_f=((-1)^{f(0,\ldots,0,0)},(-1)^{f(0,\ldots,0,1)},(-1)^{f(0,\ldots,1,0)}\ldots,$ $(-1)^{f(1,\ldots,1,1)})$. The {\em Hamming distance} $d_H$ between two arbitrary Boolean functions, say $f,g\in \mathcal{B}_n,$ we define by $d_H(f,g)=\{x\in \mathbb{F}^n_2:f(x)\neq g(x)\}=2^{n-1}-\frac{1}{2}\chi_f\cdot \chi_g$, where $\chi_f\cdot \chi_g=\sum_{x\in \mathbb{F}^n_2}(-1)^{f(x)+ g(x)}$.
By $\textbf{0}_k=(0,\ldots,0)$ we denote the all-zero vector in $\mathbb{F}^k_2.$

The \emph{Walsh-Hadamard transform} (WHT) of $f\in\mathcal{B}_n$, and its inverse WHT, at any point $\omega\in\mathbb{F}^n_2$ are defined respectively by
\begin{eqnarray}\label{WHT}
W_{f}(\omega)=\sum_{x\in \mathbb{F}_2^n}(-1)^{f(x)+ \omega\cdot x},\;\;\;\;(-1)^{f(x)}=2^{-n}\sum_{\omega\in \mathbb{F}_2^n}W_f(\omega)(-1)^{\omega\cdot x}, \;\;\;\omega\in \mathbb{F}^n_2.
\end{eqnarray}

For a set $U\subseteq \mathbb{F}^n_2$, by $U^{\bot}$ we denote the set  $U^{\bot}=\{y\in \mathbb{F}^n_2:x\cdot y=0,\;\forall x\in U\}.$ The cardinality of any set $U$ is  denoted by  $\#U.$ The \emph{Sylvester-Hadamard}  matrix of size $2^k \times 2^k$, is defined recursively as:
\begin{eqnarray*}\label{HM}
H_1=(1);\hskip 0.4cm H_2=\left(
                           \begin{array}{cc}
                             1 & 1 \\
                             1 & -1 \\
                           \end{array}
                         \right);\hskip 0.4cm H_{2^k}=\left(
      \begin{array}{cc}
        H_{2^{k-1}} & H_{2^{k-1}} \\
        H_{2^{k-1}} & -H_{2^{k-1}} \\
      \end{array}
    \right).
\end{eqnarray*}
The $i$-th row of $H_{2^k}$ we denote by $H^{(i)}_{2^k}$ ($i\in[0,2^k-1]$). For any two sets $A=\{\alpha_1,\ldots,\alpha_{r}\}$ and $B=\{\beta_1,\ldots,\beta_{r}\}$, we define  $A\wr B=\{(\alpha_i,\beta_i): i=1,\ldots,r\}$.

\subsection{Bent and plateaued functions and their duals}

Throughout this article we use the following definitions and useful facts related to bent and plateaued functions.
 A function $f\in\mathcal{B}_n,$ for even $n$, is called {\em bent} if $W_f(u)=2^{\frac{n}{2}}(-1)^{f^*(u)}$
for a Boolean function $f^*\in \mathcal{B}_n$ which is also a bent function, called the {\it dual} of $f$.
Two functions $f$ and $g$ on $\FB^n$ are said to be at {\em bent distance} if $d_H(f,g)= 2^{n-1}\pm 2^{n/2-1}$. Similarly, for a subset $B\subset \mathcal{B}_n$, a function $f$ is said to be at bent distance to $B$ if for all $g\in B$ it holds that $d_H(f,g)=2^{n-1}\pm 2^{n/2-1}$.

A function $f\in \mathcal{B}_n$ is called {\em $s$-plateaued} if its Walsh spectrum only takes three values $0$ and $\pm 2^{\frac{n+s}{2}}$ ($\leq 2^n$), where $s\geq 1$ if $n$ is odd and $s\geq 2$ if $n$ is even ($s$ and $n$ always have the same parity). The Walsh distribution of $s$-plateaued functions (cf. \cite[Proposition 4]{Decom}) is given by
\begin{table}[H]
\caption{Walsh spectra distribution}
\label{tab:plateaued}
\begin{eqnarray*}
\begin{tabular}{|c|c|}
  \hline
  $W_f(u)$ & Number of $u\in \mathbb{F}^n_2$ \\ \hline
  $0$ & $2^n-2^{n-s}$ \\ \hline
  $2^{\frac{n+s}{2}}$ & $2^{n-s-1}+(-1)^{f(0)}2^{\frac{n-s}{2}-1}$ \\ \hline
  $-2^{\frac{n+s}{2}}$ & $2^{n-s-1}-(-1)^{f(0)}2^{\frac{n-s}{2}-1}$ \\
  \hline
\end{tabular}
\end{eqnarray*}
\end{table}
In particular, a class of $1$-plateaued functions for $n$ odd, or  $2$-plateaued for $n$ even, corresponds to so-called {\em semi-bent} functions.
In general, the {\em Walsh support} of  $f\in \mathcal{B}_n$ is defined as $S_f=\{\omega\in \mathbb{F}^n_2\; :\; W_f(\omega)\neq0\}.$

 For an arbitrary  $s$-plateaued function $f\in \mathcal{B}_n$ with $W_f(u)\in \{0,\pm 2^{\frac{n+s}{2}}\}$ the value $2^{\frac{n+s}{2}}$ is called the \emph{amplitude} of $f$. We define its dual function $f^*$ on the set $S_f$ of cardinality $2^{n-s}$ by $W_f(\omega)=2^{\frac{n+s}{2}}(-1)^{f^*(\omega)},$ for $\omega\in S_f$.
To specify the dual function as $f^*:\FB^{n-s} \rightarrow \FB$ we use the concept of {\em lexicographic ordering}. That is, a subset  $E=\{e_0,\ldots,e_{2^{n-s}-1}\}\subset \mathbb{F}^{n}_2$ is ordered lexicographically if $|e_i| < |e_{i+1}|$ for any $i \in [0,2^{n-s}-2]$, where $|e_i|$ denotes the integer representation of $e_i \in \mathbb{F}^n_2$. More precisely, for $e_i=(e_{i,0}, \ldots, e_{i,n-1})$ we have $|e_i|= \sum_{j=0}^{n-1}e_{i,n-1-j} 2^j$, thus having the most significant bit of $e_i$ on the left-hand side.
Since $S_f$ is not ordered in general, {\em we will always represent it} as $S_f=v + EM$,\footnote{The reason for embedding $M$ to describe $S_f$ and not use simply $S_f=v +E$ will be explained in Section \ref{sec:class}, in the context of EA-equivalence of plateaued functions.} for some lexicographically ordered set $E$ with $e_0=\textbf{0}_{n}$, $M\in GL(n,\mathbb{F}_2)$ and $v \in S_f$. Here, $GL(n,\mathbb{F}_2)$ denotes the group of all invertible $\mathbb{F}_2$-linear transformations on $\mathbb{F}^n_2$, i.e., the group of invertible binary matrices of size $n \times n$.

For instance, if $S_f=\{(0,1,0),(0,1,1),(1,1,1), (1,0,1)\}$, by fixing $v=(0,1,1)\in S_f$ and $M$ to be the identity matrix $I_{3\times 3}$, then $E=\{e_0,e_1,e_2,e_3\}=\{(0,0,0), (0,0,1),(1,0,0),(1,1,0)\}$ is ordered lexicographically and consequently $S_f=v+ E$ is "ordered" as $S_f=\{\omega_0,\omega_1,\omega_2,\omega_3\}=\{(0,1,1),(0,1,0),(1,1,1),(1,0,1)\}$. Notice that in this example $S_f$ is not an affine subspace.

 This way we can make a direct correspondence between  $\FB^{n-s}$ and $S_f$ through $E$ so that for $\FB^{n-s}=\{x_0, x_1, \ldots, x_{2^{n-s}-1}\}$, where $\FB^{n-s}$ is lexicographically ordered, we assign
\begin{eqnarray}\label{DPL}
 \overline{f}^*(x_j)=f^*(v+e_jM)=f^*(\omega_j),\;\;\; x_j \in \FB^{n-s}, \; e_j \in E,\;j\in[0,2^{n-s}-1].
 \end{eqnarray}
Then, the set $S_f=\{\omega_0,\ldots,\omega_{2^{n-s}-1}\}$ is "ordered" so that $\omega_i=v+ e_i M$, where $E=\{e_0,\ldots,e_{2^{n-s}-1}\}$ is ordered lexicographically.  In the above example, we have for instance that $\overline{f}^*(x_1)=\overline{f}^*(0,1)= f^*(0,1,0)=f^*(\omega_1)$, $x_1\in \mathbb{F}^2_2$.
This simply means that the truth table of $\overline{f}^*:\FB^{n-s} \rightarrow \FB$ is fully specified by assigning its  values with respect to   the signs of the Walsh spectral values in the ordered set $S_f=\{\omega_0,\omega_1,\ldots,\omega_{2^{n-s}-1}\}$.
\begin{rem}\label{rem:dual}
This definition of  $\overline{f}^*$ (given by (\ref{DPL})), when viewed as a function on $\FB^{n-s}$ is not unique, since the given set $S_f \subset \FB^n$ can be represented in many ways as $S_f=v+ EM$, for some $v \in \FB^n$, $M \in GL(n,\mathbb{F}_2)$ and lexicographically ordered set $E$.  This means that the same function $f$ gives rise to different duals defined on $\mathbb{F}^{n-s}_2$ through the relation (\ref{DPL}).
An alternative way of describing the relation (\ref{DPL}) would be the use of a mapping, say $P:S_f\rightarrow \mathbb{F}^{n-s}_2$, and write (\ref{DPL}) as $f^*(\omega_j)=f^*(P(\omega_j))=\overline{f}^*(x_j)$ (where $f^*\circ P$ is viewed as $\overline{f}^*$ defined on $\mathbb{F}^{n-s}_2$), but its properties are not known due to the  translation/preservation of an ordering imposed on $S_f$ to $\mathbb{F}^{n-s}_2$ (see \cite[Remark 2.1]{Secondary}). Note that when $S_f$ is an affine subspace, then the mapping $P$ can be described in terms of (\ref{eq:Q}), but if $S_f$ is non-affine its description is not known.
\end{rem}

\section{Classification of plateaued functions}\label{sec:class}

In this section we discuss different ways to classify plateaued functions depending on their spectral characterization. In the first place, we observe that the standard notion of extended affine (EA) equivalence transfers to the spectral domain which enables us to distinguish EA-inequivalent classes of  plateaued functions. A plateaued function with affine Walsh support is then called trivial and its equivalence class is obtained by applying invertible affine transformation in the spectral domain. In difference to these trivial objects (that correspond to partially bent functions \cite[Relation (3.1)]{Ayca2}) there are nontrivial plateaued functions with non-affine Walsh support that are EA-inequivalent to any trivial plateaued function. By analyzing the EA-equivalence through relation (\ref{DPL}), we provide both description and generic construction of inequivalent plateaued functions, regardless of whether they are trivial or not. We also show that certain subclasses of nontrivial plateaued functions functions do not admit linear structures.

\subsection{On affine equivalence between (non)trivial plateaued functions}\label{sec:equivalence}

Since the design of plateaued functions described in \cite[Section 3]{Secondary} is related to the Walsh spectrum, it is convenient to analyze the EA-equivalence between two arbitrary plateaued functions in terms of their Walsh supports and duals. The  notion of EA-equivalence, which goes back to early works of Marioana-McFarland \cite{MMclass} and Dillon \cite{Dillon} can be stated as follows. Throughout this article $GL(n,\mathbb{F}_2)$ denotes the group of all invertible $\mathbb{F}_2$-linear transformations on $\mathbb{F}^n_2$.
\begin{defi}\label{def:EA}
Two Boolean functions $h,f:\mathbb{F}^n_2\rightarrow \mathbb{F}_2$ are said to be EA-equivalent if there exists a matrix $A\in GL(n,\mathbb{F}_2)$,
vectors $b,c\in \mathbb{F}^n_2$ and a constant $\varepsilon \in \{0,1\}$ such that
\begin{equation} \label{eq:EAeq}
h(x)=f(xA+b)+c\cdot x+\varepsilon.
\end{equation}
\end{defi}
This equivalence naturally applies to plateaued functions and if $f$ is an $s$-plateaued function on $\FB^n$ with the support $S_f$ then the Walsh spectra of $h(x)=f(xA+b)+c\cdot x+\varepsilon$ (in terms of relation (\ref{DPL})), at any $u\in \mathbb{F}^n_2$, is given by:
\begin{eqnarray}\label{eq:equiv}\nonumber
W_h(u)&=&\sum_{x\in \mathbb{F}^n_2}(-1)^{f(xA+b)+c\cdot x+\varepsilon+u\cdot x}=\sum_{y\in \mathbb{F}^n_2}(-1)^{f(y)+(u+c)A^{-T}\cdot y+(u+c)\cdot bA^{-1}+\varepsilon}\\
&=&\left\{\begin{array}{cc}
                                                                                                 2^{\frac{n+s}{2}}(-1)^{f^*((u+c)A^{-T})+(u+c)\cdot bA^{-1}+\varepsilon}=2^{\frac{n+s}{2}}(-1)^{h^*(u)}, & (u+c)A^{-T}\in S_f \\
                                                                                                 0, &  (u+c)A^{-T}\not\in S_f
                                                                                               \end{array}
\right..
\end{eqnarray}
From (\ref{eq:equiv}), we have that the Walsh supports of $h$ and $f$ are related by the equality $S_h=c+S_fA^T$. In order to establish a more precise connection between the duals of two EA-equivalent $s$-plateaued functions, say $f$ and  $h$ on $\mathbb{F}^n_2$, we fix the following notation.
\begin{itemize}
 \item We  always assume that  $S_f=v+E=\{\omega_0,\ldots,\omega_{2^{n-s}-1}\}$ is  ordered so that $\omega_i=v+e_i$, where $v\in S_f$ and $E=\{e_0,\ldots,e_{2^{n-s}-1}\}$ is ordered lexicographically ($e_0=\textbf{0}_n$).
\item  Since by (\ref{eq:equiv}) we have $S_h=c+S_fA^T$, the support of $h$ is  ordered as $S_h=\{z_0,\ldots,z_{2^{n-s}-1}\}$ with $z_i=c+\omega_iA^T$ $(i\in[0,2^{n-s}-1])$. With respect to  $S_f$ and $S_h$, the duals $\overline{f}^*,\overline{h}^*:\mathbb{F}^{n-s}_2\rightarrow \mathbb{F}_2$ (where e.g. $x_i \rightarrow \overline{f}^*(x_i$)) will be defined using the identification:
\begin{equation}
\label{eq:identif}
\overline{f}^*(x_i)= f^*(\omega_i);  \;\;\;\;\;\;\;\;  \overline{h}^*(x_i)= h^*(z_i); \; \; \;\; x_i \in \FB^{n-s}, \omega_i \in S_f, z_i \in S_h,
\end{equation}
where in particular (by relations (\ref{eq:equiv}) and (\ref{eq:identif})) we have that
 \begin{eqnarray}\label{eq:equiv22}
h^*(z_i)=f^*(\omega_i)+\omega_i\cdot b+\varepsilon. 
\end{eqnarray}
where $z_i=c+\omega_iA^T$  and $i\in[0,2^{n-s}-1]$.
\item
Using $S_f=v+E$ (with lexicographically ordered $E$) and expressing   $S_h=c+S_fA^T=\widehat{v}+EA^T$, where $\widehat{v}=c+vA^T\in \mathbb{F}^n_2$,  gives us the possibility to analyze the EA-equivalence in a sensible way by representing $S_h=\widehat{v} + EM$, where we have that $M=A^T$.
\end{itemize}
%
\begin{rem}\label{rem:ordering}
In relation (\ref{eq:equiv22}), if  $S_f$ is an ordered set then it  induces some  ordering of $S_h$ via the equality $z_i=c+\omega_iA^T$, and clearly the Walsh supports of any two equivalent plateaued functions $h$ and $f$ are then related using this connection. 
\end{rem}
In general, the equivalence between $h$ and $f$ is possible regardless of whether $S_h$ and $S_f$ are affine subspaces or not. 
\begin{defi} \label{def:nontr}
An $s$-plateaued function $f \in \mathcal{B}_n$ whose Walsh support $S_f$ is an affine subspace is called {\em trivial}, otherwise  $f$ is  called {\em nontrivial} if $S_f$ is not an affine subspace.
\end{defi}
%
%
In order to proceed further with our analysis, we recall some  notation introduced in \cite{Secondary}. More precisely, for an arbitrary $s$-plateaued function $f$ defined on $\mathbb{F}^n_2$ let $S_f\subset \mathbb{F}^n_2$ ($\#S_f=2^{n-s}$) be its Walsh support ordered as $S_f=\{\omega_{0},\ldots,\omega_{2^{n-s}-1}\}$. The \emph{sequence profile} of $S_f$,
 which is a multi-set of $2^n$ sequences of length $2^{n-s}$ induced by $S_f$, is defined as 
\begin{eqnarray}\label{RSf}
\Phi_{f}=\{\phi_u:\mathbb{F}^{n-s}_2\rightarrow \mathbb{F}_2\;:\; \chi_{\phi_u}=((-1)^{u\cdot \omega_{0}},(-1)^{u\cdot \omega_{1}},\ldots,(-1)^{u\cdot \omega_{2^{n-s}-1}}),\;\omega_i\in S_f,\; u\in \mathbb{F}^{n}_2\}.
\end{eqnarray}
As noted in \cite{Secondary}, $\Phi_{f}$ depends on the ordering of $S_f$ and it is spanned by the functions $\phi_{b_1},\ldots,\phi_{b_n}$, i.e., $\Phi_{f}=\langle \phi_{b_1},\ldots,\phi_{b_n}\rangle$, where $b_1,\ldots,b_n$ is the canonical basis of $\mathbb{F}^n_2$ ($b_i$ contains the non-zero coordinate at the $i$-th position).

The following result  combines the relations (\ref{eq:identif}), (\ref{eq:equiv22}) and (\ref{RSf}) to specify EA-(in)equivalence between two plateaued functions.
\begin{theo}\label{th:eq}
Let $f,h:\mathbb{F}^n_2\rightarrow \mathbb{F}_2$ be $s$-plateaued functions whose Walsh supports are given as $S_f=\{\omega_0,\ldots,\omega_{2^{n-s}-1}\}$ and $S_h=\{z_0,\ldots,z_{2^{n-s}-1}\}$. Suppose that the duals $\overline{f}^*, \overline{h}^*:\mathbb{F}^{n-s}_2\rightarrow \mathbb{F}_2$ are defined using (\ref{eq:identif}).
Then:
\begin{enumerate}[i)]
\item If there does not exist a vector $c\in \mathbb{F}^n_2$ and a matrix $A\in GL(n,\mathbb{F}_2)$ such that $S_h=c+S_fA^T$, then $f$ and $h$ are EA-inequivalent.
\item $f$ and $h$ are EA-equivalent (in terms of (\ref{eq:EAeq})) if and only if there exist  $c,b\in \mathbb{F}^n_2$, $A\in GL(n,\mathbb{F}_2)$ and $\varepsilon \in \mathbb{F}_2$ such that $S_h=c+S_fA^T$ $($thus $z_i=c+\omega_iA^T)$ and $\overline{h}^*(x_i)=\overline{f}^*(x_i)+\phi_{b}(x_i)+\varepsilon$, for all $i \in [0,2^{n-s}-1]$. 
\end{enumerate}
\end{theo}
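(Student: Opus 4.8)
The plan is to read off both parts directly from the Walsh-spectrum computation already carried out in relation (\ref{eq:equiv}), combined with the single structural fact that a Boolean function is uniquely recovered from its Walsh spectrum via the inversion formula in (\ref{WHT}). No new computation is really needed; the work is entirely in translating (\ref{eq:equiv}) and (\ref{eq:equiv22}) through the identification (\ref{eq:identif}) and keeping the orderings straight.

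I would dispose of part i) immediately as the contrapositive of what (\ref{eq:equiv}) already proves: whenever $h(x)=f(xA+b)+c\cdot x+\varepsilon$ holds, relation (\ref{eq:equiv}) forces $S_h=c+S_fA^T$. Hence if no pair $(c,A)$ realizes $S_h=c+S_fA^T$, then $h$ and $f$ cannot be EA-equivalent. For the forward implication of part ii), I would start from an EA-equivalence $h(x)=f(xA+b)+c\cdot x+\varepsilon$, read off $S_h=c+S_fA^T$ from (\ref{eq:equiv}) so that under the induced ordering $z_i=c+\omega_iA^T$, and invoke (\ref{eq:equiv22}) to get $h^*(z_i)=f^*(\omega_i)+\omega_i\cdot b+\varepsilon$. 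Passing through (\ref{eq:identif}) and noting that the definition (\ref{RSf}) of the sequence profile gives $\phi_b(x_i)=b\cdot\omega_i$, this is exactly $\overline{h}^*(x_i)=\overline{f}^*(x_i)+\phi_b(x_i)+\varepsilon$.

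For the converse, suppose $S_h=c+S_fA^T$ (so $z_i=c+\omega_iA^T$) and $\overline{h}^*(x_i)=\overline{f}^*(x_i)+\phi_b(x_i)+\varepsilon$ for all $i$. I would introduce the candidate function $g(x):=f(xA+b)+c\cdot x+\varepsilon$ and show $g=h$. Applying (\ref{eq:equiv}) and (\ref{eq:equiv22}) to $g$ with this same quadruple $(A,b,c,\varepsilon)$ yields $S_g=c+S_fA^T=S_h$ and $g^*(z_i)=f^*(\omega_i)+\omega_i\cdot b+\varepsilon=\overline{f}^*(x_i)+\phi_b(x_i)+\varepsilon=\overline{h}^*(x_i)=h^*(z_i)$. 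Thus $g$ and $h$ share the same Walsh support and the same signs $(-1)^{g^*}=(-1)^{h^*}$ on it, so $W_g\equiv W_h$, and the inversion formula in (\ref{WHT}) forces $g=h$; that is, $h$ is EA-equivalent to $f$ in the sense of (\ref{eq:EAeq}).

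The one delicate point, which I would make explicit, is the bookkeeping of orderings (cf. Remark \ref{rem:ordering}): the dual $\overline{h}^*$ is defined relative to a fixed ordering of $S_h$, and the argument closes only if that ordering is the one induced from $S_f$ through $z_i=c+\omega_iA^T$, since otherwise $\phi_b(x_i)=b\cdot\omega_i$ would be evaluated at a mismatched index. The hypothesis ``$S_h=c+S_fA^T$ (thus $z_i=c+\omega_iA^T$)'' is precisely the assertion that the two chosen orderings are compatible, so this obstacle is resolved by the statement itself, and everything else reduces to substitution into the already-established relations (\ref{eq:equiv}) and (\ref{eq:equiv22}).
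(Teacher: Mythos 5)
Your proposal is correct and follows essentially the same route as the paper: part i) as the contrapositive of relation (\ref{eq:equiv}), and part ii) via the computation $h^*(z_i)=f^*(\omega_i)+\omega_i\cdot b+\varepsilon$ identified with $\overline{f}^*(x_i)+\phi_b(x_i)+\varepsilon$ through (\ref{eq:identif}) and (\ref{RSf}). If anything, you are more explicit than the paper on the converse half of ii) --- the paper only carries out the forward computation and declares the proof complete, whereas your step of forming $g(x)=f(xA+b)+c\cdot x+\varepsilon$, matching $W_g\equiv W_h$ on the common support, and invoking the inversion formula in (\ref{WHT}) to conclude $g=h$ spells out exactly the uniqueness argument the paper leaves implicit.
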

\proof While $i)$ follows trivially from (\ref{eq:equiv}), we focus on $ii)$. From (\ref{eq:equiv}), we have that
$$h^*(z_i)=f^*((z_i+c)A^{-T})+(z_i+c)\cdot bA^{-1}+\varepsilon,\;\;\;\omega_i=(z_i+c)A^{-T}\in S_f.$$
Since  $z_i\in S_h$, then from $(z_i+c)A^{-T}\in S_f$ we can write $z_i=c+\omega_iA^T$, for some (unique) $\omega_i\in S_f$. Thus, the equality above can be written as
\begin{eqnarray*}
h^*(z_i)=h^*(c+\omega_iA^T)&=&f^*(\omega_i)+(c+\omega_iA^T+c)\cdot bA^{-1}+\varepsilon\\
&=&f^*(\omega_i)+\omega_iA^T\cdot bA^{-1}+\varepsilon=f^*(\omega_i)+\omega_i\cdot bA^{-1}A+\varepsilon\\
&=&f^*(\omega_i)+\omega_i\cdot b+\varepsilon\stackrel{(\ref{RSf})}{=}f^*(\omega_i)+\phi_b(\omega_i)+\varepsilon,
\end{eqnarray*}
which completes the proof due to relations (\ref{eq:identif}), (\ref{eq:equiv22}) and (\ref{RSf}).\qed  
%
%
By Theorem \ref{th:eq}-$ i)$, we clearly have that trivial and nontrivial $s$-plateaued functions in $n$ variables cannot be EA-equivalent. In the case of nontrivial plateaued functions, we notice that their Walsh supports in general may contain different number of linearly independent vectors. This fact can be used as  an efficient indicator for EA-inequivalence  since affine mappings preserve the cardinality of linearly independent vectors and the equality $S_h=c+S_fA^T$ will not  always be satisfied (thus we can distinguish EA-inequivalent functions $f$ and $h$). Nevertheless, in the case when this equality holds (which also applies to trivial plateaued functions), then the EA-equivalence of $f$ and $h$  strictly depends on the relation between its duals $\overline{h}^*$ and $\overline{f}^*$, which is given in Theorem \ref{th:eq}-$(ii)$ as $\overline{h}^*=\overline{f}^*+\phi_b+\varepsilon$.

Now we focus on the analysis of trivial plateaued functions. We firstly provide the following result which slightly extends \cite[Lemma 3.1]{Secondary} by involving invertible matrices, where for conciseness we omit mentioning that $E$ is lexicographically ordered.
\begin{prop}\label{prop:Hrow}
Let $S=\{\omega_0,\ldots,\omega_{2^m-1}\}\subseteq\mathbb{F}^n_2$ be an affine subspace represented as $S=v+E$, where $E=\{e_0,\ldots,e_{2^m-1}\}\subset \mathbb{F}^n_2$ is a linear subspace.
Let for an arbitrary  $M\in GL(n,\mathbb{F}_2)$ and $c\in \mathbb{F}^n_2$ the set $\widetilde{S}$ be given by $\widetilde{S}=c+SM=\{\alpha_0,\ldots,\alpha_{2^m-1}\}$ ($\alpha_i=c+(v+e_i)M$, $i\in[0,2^m-1]$). Then, for any $u\in \mathbb{F}^n_2$, it holds that
$$((-1)^{u\cdot \alpha_0},\ldots, (-1)^{u\cdot \alpha_{2^m-1}})=(-1)^{\varepsilon_u}H^{(r_u)}_{2^{m}},$$
for some $0\leq r_u\leq 2^m-1$ and $\varepsilon_u\in \mathbb{F}_2$. In addition,  it holds that $\{T_{\ell}:\ell\in \mathcal{L}_m\}\subseteq \{(u\cdot e_0,\ldots,u\cdot e_{2^m-1}):u\in \mathbb{F}^n_2\}$. 
\end{prop}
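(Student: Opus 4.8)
The plan is to reduce everything to a single linear-algebra identity together with the tensor structure of the Sylvester-Hadamard matrix. First I would expand $\alpha_i = c + (v+e_i)M = w + e_i M$, where $w := c + vM \in \mathbb{F}^n_2$ is independent of $i$. Writing the inner product through the transpose, $u \cdot (e_i M) = (uM^T)\cdot e_i$, I obtain $u\cdot\alpha_i = u\cdot w + (uM^T)\cdot e_i$, so that $(-1)^{u\cdot\alpha_i} = (-1)^{\varepsilon_u}(-1)^{u'\cdot e_i}$ with $\varepsilon_u := u\cdot w$ and $u' := uM^T$. This factors out the global sign $(-1)^{\varepsilon_u}$ and leaves me only to show that $((-1)^{u'\cdot e_0},\ldots,(-1)^{u'\cdot e_{2^m-1}})$ is a row of $H_{2^m}$; note that the invertible matrix $M$ has been absorbed into the free vector $u' = uM^T$, which is exactly the extension of \cite[Lemma 3.1]{Secondary} that the proposition asks for.

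The key step is to identify the lexicographically ordered linear subspace $E$ with $\mathbb{F}^m_2$ in a linear fashion. I would take the reduced row echelon generator matrix $G$ (of full rank $m$, size $m\times n$) of $E$, with pivot columns $p_1 < \cdots < p_m$, and claim that $e_j = x^{(j)}G$, where $x^{(j)} \in \mathbb{F}^m_2$ denotes the binary representation of the index $j$ (most significant bit first, matching the paper's integer convention). The verification amounts to checking that the lexicographic order on $E$ coincides with the natural order on the vectors $x^{(j)}$: for $y = x^{(j)}G$ the pivot bit $y_{p_k}$ equals the $k$-th coordinate $x^{(j)}_k$, while every non-pivot coordinate of $y$ depends only on the coordinates $x^{(j)}_r$ with $p_r$ to its left, and all positions before $p_1$ vanish. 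Comparing two elements from the left, the first position at which they differ is then forced to be some pivot $p_k$, and there the bits are exactly the differing coordinates; hence the order of elements is governed by the binary order of $(x^{(j)}_1,\ldots,x^{(j)}_m)$, which establishes linearity of $j \mapsto e_j$. I expect this pivot-bookkeeping to be the main (if routine) obstacle, since one must argue carefully that the non-pivot columns never disturb the comparison.

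With $e_i = x^{(i)}G$ in hand, I would rewrite $u'\cdot e_i = (u'G^T)\cdot x^{(i)} = r\cdot x^{(i)}$, where $r := u'G^T = uM^TG^T \in \mathbb{F}^m_2$. Recalling that the recursive (tensor) definition of $H_{2^m}$ gives its $(a,b)$-entry as $(-1)^{\langle a,b\rangle}$, with $\langle a,b\rangle$ the inner product of the binary expansions of $a$ and $b$ in the same MSB-first convention, the sequence $((-1)^{r\cdot x^{(0)}},\ldots,(-1)^{r\cdot x^{(2^m-1)}})$ is precisely the row $H^{(r_u)}_{2^m}$ with $r_u := |r|$. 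Combined with the factored sign, this yields $((-1)^{u\cdot\alpha_0},\ldots,(-1)^{u\cdot\alpha_{2^m-1}}) = (-1)^{\varepsilon_u}H^{(r_u)}_{2^m}$, proving the first assertion.

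For the final inclusion I would use the same identity $u\cdot e_j = (uG^T)\cdot x^{(j)}$. Since $G$ has rank $m$, the linear map $u \mapsto uG^T$ from $\mathbb{F}^n_2$ onto $\mathbb{F}^m_2$ is surjective, so every $a\in\mathbb{F}^m_2$ equals $uG^T$ for some $u\in\mathbb{F}^n_2$. For the linear function $\ell(x)=a\cdot x$ this gives $T_\ell = (a\cdot x^{(0)},\ldots,a\cdot x^{(2^m-1)}) = (u\cdot e_0,\ldots,u\cdot e_{2^m-1})$, so that $\{T_\ell : \ell\in\mathcal{L}_m\}$ is contained in (indeed equals) the set of sequences $\{(u\cdot e_0,\ldots,u\cdot e_{2^m-1}) : u\in\mathbb{F}^n_2\}$, as required.
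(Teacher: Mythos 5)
Your proof is correct, but it is more self-contained than the paper's. The opening reduction is identical to the paper's argument: write $\alpha_i = c' + e_iM$ with $c' = c+vM$, transpose $M$ onto $u$ via $u\cdot e_iM = uM^T\cdot e_i$, and absorb the invertible matrix into the free parameter $u' = uM^T$ while factoring out the global sign $(-1)^{u\cdot c'}$. At that point, however, the paper simply cites \cite[Lemma 3.1]{Secondary} as a black box, both for the claim that $((-1)^{u'\cdot e_0},\ldots,(-1)^{u'\cdot e_{2^m-1}})$ is a row of $H_{2^m}$ and for the inclusion $\{T_{\ell}:\ell\in \mathcal{L}_m\}\subseteq \{(u\cdot e_0,\ldots,u\cdot e_{2^m-1}):u\in \mathbb{F}^n_2\}$, whereas you prove that content from scratch. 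Your pivot argument is sound: with $G$ the reduced-row-echelon generator matrix, the coordinate of $xG$ at pivot $p_k$ equals $x_k$, non-pivot coordinates left of $p_k$ depend only on $x_r$ with $r<k$, so the first position at which $xG$ and $x'G$ differ is exactly the pivot of the first differing index bit; this shows $j\mapsto e_j=x^{(j)}G$ is order-preserving, hence linear in the index under the paper's MSB-first convention. Combined with the standard fact that the $(a,b)$-entry of $H_{2^m}$ is $(-1)^{\langle a,b\rangle}$, this gives the row identification with $r_u = |uM^TG^T|$, and the surjectivity of $u\mapsto uG^T$ (rank $m$) correctly yields the final inclusion --- indeed equality, as you observe. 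In short, you and the paper take the same reduction, but your write-up replaces the external dependency on \cite{Secondary} with an explicit (and correct) proof of the underlying lemma, trading brevity for self-containedness.
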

\begin{proof} Denoting by $c'=c+vM\in \mathbb{F}^n_2$, we have that $\alpha_i=c+(v+e_i)M=c'+e_iM$ for all $i\in[0,2^m-1]$. Consequently, for any $u\in \mathbb{F}^n_2$, we have that $u\cdot \alpha_i=u\cdot c'+u\cdot e_iM=u\cdot c'+uM^T\cdot e_i$, and thus by \cite[Lemma 3.1]{Secondary} it holds that
$$((-1)^{u\cdot \alpha_0},\ldots, (-1)^{u\cdot \alpha_{2^m-1}})=(-1)^{u\cdot c'}((-1)^{uM^T\cdot e_0},\ldots,(-1)^{uM^T\cdot e_{2^m-1}})=(-1)^{\varepsilon_u}H^{(r_u)}_{2^m},$$
for some $0\leq r_u\leq 2^m-1$ and $\varepsilon_u\in \mathbb{F}_2$. The rest of the statement follows by \cite[Lemma 3.1]{Secondary}.\qed
\end{proof}
Suppose that a linear subspace $E=\{e_0,\ldots,e_{2^{n-s}-1}\}\subset {\Bbb F}_2^n$ ($\dim(E)=n-s$) and ${\Bbb F}_2^{n-s}=\{x_0,x_1,x_2,\ldots,x_{2^{n-s}-1}\}$ are ordered lexicographically.  From the proof of Lemma 3.1-$(ii)$ given in \cite{Secondary}, the linear mapping $\Theta(x_i)=e_i$, with $\Theta: {\Bbb F}_2^{n-s} \rightarrow E$,  can be described via a matrix $R=R_{(n-s)\times n}$ so that
\begin{eqnarray}\label{eq:Q}
\Theta(x_i)=x_iR=x_i\left(
                                                                                        \begin{array}{c}
                                                                                          e_{2^{n-s-1}} \\
                                                                                          \vdots \\
                                                                                          e_2 \\
                                                                                          e_1 \\
                                                                                        \end{array}
                                                                                      \right)=e_i,\;\;\;\;i\in[0,2^{n-s}-1],
\end{eqnarray}
where the rows of  $R$, $e_{2^j}$ for $j =0,\ldots, n-s-1$, are basis vectors of $E$. We have the following technical result useful for establishing  EA-equivalence of plateaued functions.
%
\begin{lemma}\label{prop:Hrow1}
Let a linear subspace $E\subset {\Bbb F}_2^n$ and ${\Bbb F}_2^{n-s}$ be lexicographically ordered, and let the mapping $\Theta:{\Bbb F}_2^{n-s} \rightarrow E$ be defined by (\ref{eq:Q}). Then for an arbitrary (fixed) matrix $B\in GL(n-s,\mathbb{F}_2)$ and vector $t\in \mathbb{F}^{n-s}_2$, there exists a vector $\gamma\in E$ and a matrix $D\in GL(n,\mathbb{F}_2)$ which is invariant on $E$ (that is $E=ED$) such that
 $$\Theta(x_iB+t)=e_iD+\gamma,$$
 holds for all $i\in[0,2^{n-s}-1]$.
\end{lemma}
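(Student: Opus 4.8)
The plan is to exploit the linearity of $\Theta$. Since $\Theta(x)=xR$ by (\ref{eq:Q}), I would first expand
$$\Theta(x_i B + t) = (x_i B + t)R = x_i B R + tR,$$
so that $tR$ is a fixed vector lying in $E$, being a linear combination of the rows of $R$ (which span $E$). This makes $\gamma := tR \in E$ the natural candidate for the additive constant, and reduces the statement to producing a matrix $D \in GL(n,\mathbb{F}_2)$ with $ED = E$ satisfying $e_i D = x_i BR$ for every $i$. Using $e_i = x_i R$ together with the fact that the $x_i$ range over all of $\mathbb{F}^{n-s}_2$ (in particular over a basis), this system of equalities is equivalent to the single matrix identity $RD = BR$.

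Thus the heart of the proof is to construct $D \in GL(n,\mathbb{F}_2)$ with $RD = BR$ that is moreover invariant on $E$. Since $R$ has full row rank $n-s$ and row space $E$, I would complete it to an invertible matrix by choosing an $s \times n$ matrix $\bar{R}$ whose rows span a complement of $E$, so that $P := \binom{R}{\bar{R}} \in GL(n,\mathbb{F}_2)$. I would then define $D$ implicitly by
$$P D = \binom{BR}{\bar{R}}, \qquad \text{i.e.} \qquad D = P^{-1}\binom{BR}{\bar{R}}.$$
Reading off the top $n-s$ rows of $PD$ gives $RD = BR$ directly, which is exactly what is required.

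The two points that deserve care — and where I expect the only (mild) obstacles — are the invertibility of $D$ and the invariance $ED = E$. For invertibility I would argue that because $B \in GL(n-s,\mathbb{F}_2)$, the rows of $BR$ span the same subspace $E$ as the rows of $R$, so $\binom{BR}{\bar{R}}$ still has rank $n$ and hence $D \in GL(n,\mathbb{F}_2)$. For the invariance, any $v \in E$ can be written as $v = uR$ for some $u \in \mathbb{F}^{n-s}_2$, whence $vD = uRD = u(BR) = (uB)R \in E$; thus $ED \subseteq E$, and equality follows since $D$ is invertible. Combining everything, $\Theta(x_iB + t) = x_i BR + tR = x_i(RD) + \gamma = (x_iR)D + \gamma = e_i D + \gamma$ for all $i$, which completes the argument.
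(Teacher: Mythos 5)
Your proof is correct and takes essentially the same route as the paper: the paper also sets $\gamma = tR$, completes the rows of $R$ by a complementary matrix $\Lambda$ to form $U=\binom{R}{\Lambda}\in GL(n,\mathbb{F}_2)$, and defines $D=U^{-1}\binom{BR}{\Lambda}$, which is precisely your $D=P^{-1}\binom{BR}{\bar{R}}$ solving $RD=BR$, with the invertibility of the top block argued the same way from $B\in GL(n-s,\mathbb{F}_2)$. If anything, your explicit check of the invariance $ED=E$ (via $vD=(uB)R\in E$ plus invertibility) spells out a point the paper leaves implicit.
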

\begin{proof}
Since the rows of  $R$ are basis vectors $e_1,e_2,e_4,\ldots,e_{2^{n-s-1}}\in E$, then clearly we can always find a set of vectors $\lambda_1,\ldots,\lambda_s\in \mathbb{F}^n_2$ such that $\{e_1,\ldots,e_{2^{n-s-1}},\lambda_1,\ldots,\lambda_s\}$ is a basis of $\mathbb{F}^n_2$. Now, let $\Lambda=\Lambda_{s\times n}=(\lambda_1,\ldots,\lambda_s)^T$ denotes the matrix whose rows are $\lambda_i$ vectors. Then, for a matrix $B\in GL(n-s,\mathbb{F}_2)$ and an arbitrary $x_i\in \mathbb{F}^{n-s}_2$ it holds that
\begin{eqnarray}\label{eq:ED}
\Theta(x_iB)=(x_iB)R=(x_i,\textbf{0}_s)\left(
                                           \begin{array}{c}
                                             BR \\
                                             \Lambda \\
                                           \end{array}
                                         \right)_{n\times n}=(x_i,\textbf{0}_s)Q=e_i\in E,
\end{eqnarray}
since $\tilde{x}_i=x_iB\in \mathbb{F}^{n-s}_2$ and $\Theta(\mathbb{F}^{n-s}_2)\in E$. Here, the block matrix $Q=\left(
                                           \begin{array}{c}
                                             BR \\
                                             \Lambda \\
                                           \end{array}
                                         \right)_{n\times n}$ is clearly invertible, i.e., $Q\in GL(n,\mathbb{F}_2)$, since the matrix $BR$ (of size $(n-s)\times n$) contains rows which are linear combinations of vectors $e_{2^j}$, $j\in[1,n-s-1]$ (which are linearly independent of $\lambda_i$).
Moreover, denoting by $U=\left(
                                           \begin{array}{c}
                                             R \\
                                             \Lambda \\
                                           \end{array}
                                         \right)_{n\times n}\in GL(n,\mathbb{F}_2)$, we have that
$$\Theta(x_i)=x_iR=(x_i,\textbf{0}_s)U=e_i,$$ implies that $(x_i,\textbf{0}_s)=e_iU^{-1}$ holds for all $i\in[0,2^{n-s}-1].$ Consequently, for an arbitrary vector $t\in \mathbb{F}^{n-s}_2$ we have that
\begin{eqnarray*}
\Theta(x_iB+t)=(x_iB+t)R=x_i(BR)+tR=(x_i,\textbf{0}_s)Q+tR=e_iU^{-1}Q+tR=e_iD+\gamma\in E,
\end{eqnarray*}
where $D=U^{-1}Q\in GL(n,\mathbb{F}_2)$ and $tR=\gamma\in E$ due to the facts that $\Theta(x_iB)=x_i(BR)=e_iD\in E$ holds by (\ref{eq:ED}). Also, $\Theta(x_iB+t)\in \Theta(\mathbb{F}^{n-s}_2)=E$ since $E$ is a linear subspace.\qed
\end{proof}
Recall that any two EA-equivalent trivial plateaued functions $f$ and $h$ have the Walsh supports related in an affine manner as $S_h=c+S_fM$ (for some invertible matrix $M$ and vector $c$). If there do not exist $M$ and $c$ such that $S_h=c+S_fM$ holds, then by Theorem \ref{th:eq}-$(i)$ (or relation (\ref{eq:equiv})) the functions $f$ and $h$ are not equivalent. In other words, the EA-equivalence between $f$ and $h$ is reasonable to consider only under the assumption that $S_h$ and $S_f$ are related in an affine manner.
The following result characterizes the EA-equivalence between trivial plateaued functions.
\begin{theo}\label{th:EA1}
Let $f,h:\mathbb{F}^n_2\rightarrow \mathbb{F}_2$ be two trivial $s$-plateaued functions whose Walsh supports are related as $S_h=c+S_fM$, for some matrix $M\in GL(n,\mathbb{F}_2)$ and $c\in \mathbb{F}^n_2.$ Representing $S_f=v+E=\{\omega_i=v+e_i:e_i\in E\}$ for a lexicographically ordered linear space $E=\{e_0,\ldots,e_{2^{n-s}-1}\}$, let the functions $\overline{f}^*$ and $\overline{h}^*$ be defined as
$$\overline{f}^*(x_i)=f^*(\omega_i)\;\;\;\text{and}\;\;\;\overline{h}^*(x_i)=h^*(z_i),\;\;\;(i\in[0,2^{n-s}-1]),$$
where $z_i=c+\omega_iM\in S_h$. Then $f$ and $h$ are EA-equivalent if and only if their duals $\overline{f}^*,\overline{h}^*:\mathbb{F}^{n-s}_2\rightarrow\mathbb{F}_2$ 
are EA-equivalent bent functions. 
\end{theo}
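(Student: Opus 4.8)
The plan is to use Theorem~\ref{th:eq} as the master criterion and to exploit the fact that, for a trivial function, the sequence profile $\Phi_f$ of relation~(\ref{RSf}) collapses to the set of \emph{all} affine functions on $\mathbb{F}^{n-s}_2$. Indeed, since $S_f=v+E$ with $E$ a linear subspace, Proposition~\ref{prop:Hrow} (applied with $M=I$) shows that each sequence $\chi_{\phi_b}$ is, up to sign, a row of $H_{2^{n-s}}$, so every $\phi_b$ is affine; conversely its last assertion shows that every linear function on $\mathbb{F}^{n-s}_2$ is realised as some $x_i\mapsto u\cdot e_i$. Before treating equivalence I would first check that both duals are genuinely bent. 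Writing $W_{\overline{f}^*}(u')=\sum_i(-1)^{f^*(\omega_i)}(-1)^{u'\cdot x_i}$, choosing $u\in\mathbb{F}^n_2$ with $u'\cdot x_i=u\cdot e_i$ (Proposition~\ref{prop:Hrow}), substituting $(-1)^{f^*(\omega_i)}=2^{-(n+s)/2}W_f(\omega_i)$, using $e_i=\omega_i+v$ and the inverse Walsh transform~(\ref{WHT}) collapses the sum to $(-1)^{u\cdot v+f(u)}2^{(n-s)/2}$. Hence $|W_{\overline{f}^*}(u')|=2^{(n-s)/2}$ for every $u'$, so $\overline{f}^*$, and by the same argument $\overline{h}^*$, is bent.

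For the \emph{only if} direction, assume $f$ and $h$ are EA-equivalent. Theorem~\ref{th:eq}$(ii)$ then yields $\overline{h}^*(x_i)=\overline{f}^*(x_i)+\phi_b(x_i)+\varepsilon$ with respect to the ordering $z_i=c+\omega_iA^T$ induced by the equivalence. Since $\phi_b$ is affine by the paragraph above, the duals for this ordering differ by an affine function and are therefore EA-equivalent. It only remains to pass to the fixed $M$-ordering of the statement: the two affine parametrisations $z_i=c+\omega_iA^T$ and $z_i=c+\omega_iM$ of the same affine set $S_h$ differ by an $E$-invariant map, which by Lemma~\ref{prop:Hrow1} corresponds to post-composing the dual with an affine bijection of $\mathbb{F}^{n-s}_2$; EA-equivalence is stable under such composition, so $\overline{h}^*$ and $\overline{f}^*$ are EA-equivalent bent functions.

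The substance lies in the converse. Here I start from a general EA-equivalence of the duals, $\overline{h}^*(x)=\overline{f}^*(xB+t)+c'\cdot x+\varepsilon'$ with $B\in GL(n-s,\mathbb{F}_2)$, and must produce a full equivalence $h(x)=f(xA+b)+c\cdot x+\varepsilon$. The difficulty is that Theorem~\ref{th:eq} only allows \emph{adding} an affine $\phi_b$ to $\overline{f}^*$, with no inner matrix, whereas the inner pair $(B,t)$ acts on the argument. The idea is to absorb $(B,t)$ into the support-level transformation by invoking Lemma~\ref{prop:Hrow1}: it supplies an $E$-invariant $D\in GL(n,\mathbb{F}_2)$ and $\gamma\in E$ with $\Theta(x_iB+t)=e_iD+\gamma$, so that $\overline{f}^*(x_iB+t)=f^*(v+e_iD+\gamma)$ is nothing but $f^*$ read off at a reordering $\omega_{\sigma(i)}=v+e_iD+\gamma$ of $S_f$. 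I would then set $A^T=D^{-1}M$ (invertible since $D,M$ are), which forces the linear parts of the support relation $z_i=c+\omega_iM$ and $z_i=\widehat{c}+\omega_{\sigma(i)}A^T$ to coincide, and solve for the remaining data: the constant $\widehat{c}$ is fixed by matching the $\gamma$- and $v$-translations, the linear term $c'\cdot x$ is realised as the $x_i$-dependent part of $\omega_{\sigma(i)}\cdot b$ for a suitable $b\in\mathbb{F}^n_2$ (possible by the surjectivity in Proposition~\ref{prop:Hrow}), and the leftover constant is absorbed into $\varepsilon$. Checking this $(A,b,\widehat{c},\varepsilon)$ against relation~(\ref{eq:equiv22}) (equivalently Theorem~\ref{th:eq}$(ii)$) certifies that $f$ and $h$ are EA-equivalent.

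The main obstacle is precisely this converse bookkeeping: one must extend the $(n-s)$-dimensional datum $(B,t,c')$ to an honest $A\in GL(n,\mathbb{F}_2)$ on $\mathbb{F}^n_2$ whose restriction to the direction space $E$ agrees with the $E$-invariant map from Lemma~\ref{prop:Hrow1}, while simultaneously keeping the support relation $S_h=c+S_fA^T$ consistent and matching every $x_i$-dependence in the dual identity. The $E$-invariance provided by Lemma~\ref{prop:Hrow1} is exactly what guarantees that $S_fA^T$ remains an affine subspace affinely related to $S_h$ throughout; once the choice $A^T=D^{-1}M$ is made, the remaining verification is a routine but careful tracking of indices, signs and translations.
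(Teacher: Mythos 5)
Your proposal is correct and follows essentially the same route as the paper's own proof: both directions rest on the same ingredients (bentness of the duals via Proposition~\ref{prop:Hrow} and Theorem~\ref{theo:plateH}-$(ii)$; the forward implication from Theorem~\ref{th:eq}-$(ii)$ together with the affineness of $\phi_b$ over an affine support; and the converse by absorbing the inner pair $(B,t)$ into the $E$-invariant $D$ and $\gamma\in E$ of Lemma~\ref{prop:Hrow1}, realising $r\cdot x_i+\kappa$ as $b\cdot\omega_i+\varepsilon$ via Proposition~\ref{prop:Hrow}, and arriving at exactly the transformation $A^T=D^{-1}M$ that the paper derives). The only difference is presentational: the paper certifies the converse by writing out the inverse-WHT chain explicitly, whereas you delegate that last step to the sufficiency half of Theorem~\ref{th:eq}-$(ii)$ (legitimate, since that sufficiency is precisely the uniqueness-of-inverse-WHT argument the paper carries out by hand), and your forward direction is self-contained where the paper cites an external reference.
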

\begin{proof} $(\Rightarrow)$ By \cite[Theorem 3.1-$(ii)$]{Secondary} and relations (\ref{eq:identif})-(\ref{eq:equiv22}) it is clear that $\overline{f}^*,\overline{h}^*:\mathbb{F}^{n-s}_2\rightarrow\mathbb{F}_2$  are EA-equivalent bent functions.

$(\Leftarrow)$ In this part, we assume that $f$ and $h$ are two fixed trivial plateaued functions and thus the values of $h^*$ (on $S_h$) and $f^*$ (on $S_f$) are fixed in advance.
%
Due to Proposition \ref{prop:Hrow} and Theorem \ref{theo:plateH}-$(ii)$ (which is given later on in Section \ref{sec:ineq}), it is not difficult to see that both duals $\overline{h}^*$ and $\overline{f}^*$ are bent functions.

Now, let us assume that $\overline{f}^*,\overline{h}^*$  are affine equivalent, that is for some $B\in GL(n-s,\mathbb{F}_2)$, $t,r\in \mathbb{F}^{n-s}_2$ and $\kappa\in \mathbb{F}_2$, we have
\begin{eqnarray}\label{eq:w0}
\overline{h}^*(x_i)=\overline{f}^*(x_iB+t)+r\cdot x_i+\kappa, \;\;\;\;\; x_i\in \mathbb{F}^{n-s}_2.
\end{eqnarray}
%
%
Since $S_f$ is an affine subspace, then by Proposition \ref{prop:Hrow1} there exist  $b\in \mathbb{F}^n_2$ and $\varepsilon\in \mathbb{F}_2$ such that
\begin{eqnarray}\label{eq:w1}
b\cdot \omega_i+\varepsilon=r\cdot x_i+\kappa,
\end{eqnarray}
where $(b\cdot e_0,\ldots,b\cdot e_{2^{n-s}-1})=(r\cdot x_0,\ldots,r\cdot x_{2^{n-s}-1})$ ($x_i\in \mathbb{F}^{n-s}_2$) is truth table of a linear function in $n-s$ variables, and $b\cdot v+\varepsilon=\kappa$.

Furthermore, since by Lemma \ref{prop:Hrow1} we have that $\Theta(x_iB+t)=e_iD+\gamma\in E$, then clearly we can identify vectors $x_iB+t\in \mathbb{F}^{n-s}_2$ with vectors $v+(e_iD+\gamma)\in v+E=S_f$, and thus we have that
\begin{eqnarray}\label{eq:w2}
\overline{f}^*(x_iB+t)=f^*(v+(e_iD+\gamma)),\;\;\;\;i\in[0,2^{n-s}-1].
\end{eqnarray}
This is possible due to the fact that the definition of $\overline{f}^*$ on $\mathbb{F}^{n-s}_2$ (given by (\ref{DPL})) is strictly given with respect to vectors $e_i\in E$ in $f^*(v+e_i)$, where the vector $v$ is ignored.
%
Combining the previous computation with Lemma \ref{prop:Hrow1} and relations (\ref{eq:w1})-(\ref{eq:w2}), we have that
\begin{eqnarray}\label{eq:w3}\nonumber
\overline{f}^*(x_iB+t)+r\cdot x_i+\kappa \hskip -2mm &=&\hskip -2mm f^*(v+(e_iD+\gamma))+b\cdot \omega_i+\varepsilon=f^*(v+\gamma+vD+\omega_iD)+b\cdot \omega_i+\varepsilon\\
&=&f^*(t_i)+b\cdot (t_i+v+\gamma+vD)D^{-1}+\varepsilon,
\end{eqnarray}
where we have used the substitution $t_i=v+\gamma+vD+\omega_iD\in S_f$ which gives that $\omega_i=(t_i+v+\gamma+vD)D^{-1}.$

Consequently, by relations (\ref{eq:w0})-(\ref{eq:w3}), for arbitrary $x\in \mathbb{F}^n_2$ it holds that
 \begin{equation*}
 \begin{array}{rl}
 (-1)^{h(x)}=&2^{-n}\sum\limits_{z_i\in S_h}W_h(z_i)(-1)^{x\cdot z_i}=2^{\frac{s-n}{2}}\sum\limits_{z_i\in S_h}(-1)^{h^*(z_i)+x\cdot z_i}\\
=&2^{\frac{s-n}{2}}\sum\limits_{e_i\in E}(-1)^{f^*(v+e_iD+\gamma)+b\cdot (t_i+v+\gamma+vD)D^{-1}+x\cdot (c+vM+e_iM)+\varepsilon}\\
\stackrel{t_i=v+e_iD+\gamma}{=}&2^{\frac{s-n}{2}}\sum\limits_{t_i\in S_f}(-1)^{f^*(t_i)+b\cdot (t_i+v+\gamma+vD)D^{-1}+x\cdot [c+(t_i+v+\gamma+vD)D^{-1}M]+\varepsilon}\\
=&2^{-n}\sum\limits_{t_i\in S_f}W_f(t_i)(-1)^{t_i\cdot [bD^{-T}+x(D^{-1}M)^{T}]+ b\cdot (v+\gamma+vD)D^{-1}+ x\cdot [c+(v+\gamma+vD)D^{-1}M] +\varepsilon}\\
{=}& (-1)^{f(bD^{-T}+x(D^{-1}M)^{T})+x\cdot [c+(v+\gamma+vD)D^{-1}M]+[b\cdot (v+\gamma+vD)D^{-1} +\varepsilon]},\\
 \end{array}
 \end{equation*}
which means that $f$ and $h$ are affine equivalent.\qed

\begin{rem}\label{rem:ordering}
Note that Theorem \ref{th:EA1} also holds  when the set $E$ is ordered as $E=\{\hat{e}_0,\hat{e}_1,\ldots,\hat{e}_{2^{n-s}-1}\}$, where the vectors $\hat{e}_j$, for any $i\in[0,n-s-1]$, satisfy the recursion $\hat{e}_j=\hat{e}_{2^i}+\hat{e}_{j-2^i}$ for all $2^i\leq j\leq 2^{i+1},$ in which case the set $\{\hat{e}_{2^i}:i\in[1,n-s-1]\}$ is a basis of $E$. Regarding the equality $S_h=c+S_fM$, it is clear that the corresponding ordering of $S_h=\{\hat{z}_0,\ldots,\hat{z}_{2^{n-s}-1}\}$ with $\hat{z}_j=c+(v+\hat{e}_j)M$ (where $v+\hat{e}_j\in S_f$) satisfies the same recursion $\hat{z}_j=\hat{z}_{2^i}+\hat{z}_{j-2^i}$. Thus, by Lemma 3.1-$(i)$ and Theorem 3.1 given in \cite{Secondary} (or Theorem \ref{theo:plateH}),  both duals $\overline{f}^*(x_i)=f^*(v+\hat{e}_j)$ and $\overline{h}^*(x_i)=f^*(\hat{z}_j)$ are still bent functions on $\mathbb{F}^{n-s}_2$. Another proof in this context (without using a restriction on lexicographic ordering of $E$) is given in Appendix.
\end{rem}
Note that second part of the proof of Theorem \ref{th:EA1} embeds a spectral equivalence which is used in Section \ref{sec:ineq} (cf. Theorem \ref{theo:plateH}) to provide an efficient method of constructing inequivalent trivial plateaued functions through suitably  specified duals.
%
\end{proof}

\subsection{Generic construction of EA-inequivalent plateaued functions}\label{sec:ineq}

A generic construction method of plateaued functions, which specifies these functions in their Walsh spectra, has been introduced in \cite[Section 3]{Secondary}. In order to distinguish it from the direct or shortly \emph{ANF-construction methods}\footnote{An  \emph{ANF-construction method} refers to any construction technique which specifies a form or ANF of the function, for which it can be shown to posses the plateaued property.}, the construction technique given in \cite[Section 3]{Secondary} will be called a \emph{spectral method}\footnote{A \emph{spectral method} refers to any construction method which specifies a suitable Walsh spectrum (its Walsh support and the signs of nonzero spectral values) which after applying the inverse WHT gives a  Boolean function with prescribed properties.}.
Utilizing Proposition \ref{prop:Hrow}, the following result generalizes \cite[Theorem 3.1]{Secondary} and shows how one can efficiently construct plateaued functions (using the spectral method) by using the representation of a Walsh support as in (\ref{DPL}).
\begin{theo}\label{theo:plateH}
Let  $S_f=v+ EM=\{\omega_0,\ldots,\omega_{2^{n-s}-1}\} \subset \FB^n$, 
for some $v \in \mathbb{F}^n_2$, $M\in GL(n,\mathbb{F}_2)$ and subset $E=\{e_0,e_1, \ldots, e_{2^{n-s}-1}\}\subset \mathbb{F}^n_2$. For a function $g:\FB^{n-s}\rightarrow \mathbb{F}_2$ such that  $wt(g)=2^{n-s-1}+ 2^{\frac{n-s}{2}-1}$ or $wt(g)=2^{n-s-1}-2^{\frac{n-s}{2}-1}$, let the Walsh spectrum of $f$ be defined (by identifying $x_i \in \FB^{n-s}$ and $\omega_i \in S_f$ through $e_i \in E$) as
\begin{eqnarray}\label{eq:walshX}
W_f(u)= \left \{  \begin{array}{ll}
  2^{\frac{n+s}{2}}(-1)^{g(x_i)} & \textnormal{ for } u= v + e_iM \in S_f,\\
 0 & u \not \in S_f. \\
\end{array}  \right.
\end{eqnarray}
Then:
\begin{enumerate}[i)]
\item $f$ is an $s$-plateaued function if and only if $g$ is at bent distance to $\Phi_f$ defined by (\ref{RSf}).
\item If $E\subset \mathbb{F}^n_2$ is a linear subspace, then $f$ is an $s$-plateaued function if and only if $g$ is a bent function on $\mathbb{F}^{n-s}_2$.
\end{enumerate}
\end{theo}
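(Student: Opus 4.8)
The plan is to work directly from the prescribed Walsh spectrum (\ref{eq:walshX}) and to apply the inverse WHT from (\ref{WHT}), thereby reducing the plateaued property to a pointwise $\pm 1$ condition. Substituting (\ref{eq:walshX}) into the inverse transform gives, for every $x\in\mathbb{F}^n_2$,
$$(-1)^{f(x)}=2^{\frac{s-n}{2}}\sum_{i=0}^{2^{n-s}-1}(-1)^{g(x_i)+\omega_i\cdot x}.$$
The central observation is that the inner exponent is exactly the sequence $\chi_{\phi_x}$ from the sequence profile (\ref{RSf}): recognizing $(-1)^{\omega_i\cdot x}=\chi_{\phi_x}(x_i)$, the sum becomes the correlation $\chi_g\cdot\chi_{\phi_x}=2^{n-s}-2\,d_H(g,\phi_x)$. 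Hence $f$ is $\pm 1$-valued at every $x$ if and only if $|\chi_g\cdot\chi_{\phi_x}|=2^{\frac{n-s}{2}}$ for all $x\in\mathbb{F}^n_2$, i.e. $d_H(g,\phi_x)=2^{n-s-1}\pm 2^{\frac{n-s}{2}-1}$ for every $\phi_x\in\Phi_f$. Once $f$ is $\pm 1$-valued it is automatically a Boolean function whose (forward) Walsh spectrum recovers the prescribed values in $\{0,\pm 2^{\frac{n+s}{2}}\}$ over a support of size $2^{n-s}$, hence $s$-plateaued. Thus the condition ``$g$ is at bent distance to $\Phi_f$'' is equivalent to $f$ being $s$-plateaued, proving part $i)$ in both directions simultaneously.

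For part $ii)$ I would specialize the analysis of $\Phi_f$ to the case where $E$ (hence $EM$, since $M\in GL(n,\mathbb{F}_2)$) is a linear subspace, so that $S_f=v+EM$ is an affine subspace. The goal is to show that $\Phi_f$ collapses to the set of all affine functions on $\mathbb{F}^{n-s}_2$, after which being at bent distance to $\Phi_f$ becomes the standard definition of a bent function. Applying Proposition \ref{prop:Hrow} to the affine support $S_f$, each sequence $\chi_{\phi_u}=((-1)^{u\cdot\omega_0},\ldots,(-1)^{u\cdot\omega_{2^{n-s}-1}})$ equals $(-1)^{\varepsilon_u}H^{(r_u)}_{2^{n-s}}$, a signed row of the Sylvester-Hadamard matrix, and the auxiliary inclusion in that proposition guarantees that as $u$ ranges over $\mathbb{F}^n_2$ every row index is attained. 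Since the rows of $H_{2^{n-s}}$ are precisely the $(\pm1)$-sequences of the linear functions in $\mathcal{L}_{n-s}$, the signs $\varepsilon_u$ then yield all of $\mathcal{A}_{n-s}$, so $\Phi_f=\mathcal{A}_{n-s}$. Consequently $g$ is at bent distance to $\Phi_f$ iff $g$ is at bent distance to every affine function, which by $W_g(a)=\chi_g\cdot\chi_{\ell_a}$ (with $\ell_a(x)=a\cdot x$) is equivalent to $|W_g(a)|=2^{\frac{n-s}{2}}$ for all $a$, i.e. $g$ is bent. Combined with part $i)$, this yields $ii)$.

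The weight hypothesis $wt(g)=2^{n-s-1}\pm 2^{\frac{n-s}{2}-1}$ is in fact the $u=\mathbf{0}_n$ instance of the bent-distance condition, since $\phi_{\mathbf{0}_n}$ is the zero function with $\chi_{\phi_{\mathbf{0}_n}}$ the all-ones sequence; it also guarantees consistency with the Walsh distribution in Table \ref{tab:plateaued}, so I would note it but not treat it separately. The only delicate point is bookkeeping: one must keep the identification $x_i\leftrightarrow\omega_i$ through $e_i\in E$ fixed throughout, as in (\ref{DPL}), so that the correlation $\chi_g\cdot\chi_{\phi_x}$ is formed against the correctly ordered entries of $\chi_{\phi_x}$. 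The main obstacle is therefore organizational rather than analytic --- ensuring that the ordering imposed on $S_f$ matches the domain ordering of both $g$ and the profile functions $\phi_u$ --- after which both parts follow from the single inverse-WHT computation together with Proposition \ref{prop:Hrow}.
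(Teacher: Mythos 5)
Your proposal is correct and takes essentially the same route as the paper's own proof: apply the inverse WHT to the prescribed spectrum, read the resulting sum as the correlation of $\chi_g$ with the sequences of $\Phi_f$ so that $\pm 1$-valuedness of $f$ is precisely the bent-distance condition (giving part $i)$ in both directions at once), and then invoke Proposition \ref{prop:Hrow} to reduce $\Phi_f$ to affine functions when $E$ is a linear subspace. One small inaccuracy: $\Phi_f$ need not equal all of $\mathcal{A}_{n-s}$ (e.g., for $v=\textbf{0}_n$ and $M$ the identity it consists of linear functions only); what Proposition \ref{prop:Hrow} gives is that every element of $\Phi_f$ is affine and every linear function occurs up to complementation, which suffices because bent distance is invariant under complementation --- the same point the paper phrases as ``all linear functions (or their affine equivalents).''
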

\proof $i)$ By relations (\ref{WHT}) and (\ref{eq:walshX}), for arbitrary $u\in \mathbb{F}^n_2$, we have that
\begin{eqnarray}\label{eq:g1}
2^n(-1)^{f(u)}=\sum_{\omega\in S_f}W_f(\omega)(-1)^{u\cdot \omega}=2^{\frac{n+s}{2}}(-1)^{u\cdot v}\sum_{e\in E}(-1)^{f^*(v+eM)+uM^T\cdot e}
\end{eqnarray}
Since we have the correspondence  $f^*(\omega_i)=f^*(v+e_iM)=g(x_i)$ and $uM^T\cdot e_i=\phi_{uM^T}(x_i)\in \Phi_f$ ($x_i\in \mathbb{F}^{n-s}_2$, $e_i\in E$, $i\in[0,2^{n-s}-1]$), then using (\ref{eq:g1})  we get
\begin{eqnarray}\label{eq:g2}
2^{\frac{n+s}{2}}(-1)^{u\cdot v}\sum_{e\in E}(-1)^{f^*(v+eM)+uM^T\cdot e}=2^{\frac{n+s}{2}}(-1)^{u\cdot v}\sum_{x_i\in \mathbb{F}^{n-s}_2}(-1)^{g(x_i)+\phi_{uM^T}(x_i)}.
\end{eqnarray}
Thus,  we have that $f$ is $s$-plateaued if and only if $\sum_{x_i\in \mathbb{F}^{n-s}_2}(-1)^{g(x_i)+\phi_{uM^T}(x_i)}=2^{\frac{n-s}{2}} (-1)^{\tau_u}$, for some $\tau_u\in \{0,1\}$. However,  the bent distance between $g$ and $\phi_{uM^T}\in\Phi_f$ in (\ref{eq:g2}) is in accordance with relation (\ref{eq:g1}) since these two relations imply that $(-1)^{f(u)}=(-1)^{u\cdot v+ \tau_u}$ holds. Thus, the inverse WHT applied to the spectrum given by (\ref{eq:walshX}) generates a Boolean function which is $s$-plateaued.

$ii)$ If $E$ is a linear subspace (ordered lexicographically),  by Proposition \ref{prop:Hrow} the sequence profile $\Phi_f$ contains sequences of all linear functions (or their affine equivalents) and  $g$ has to be a bent function.\qed

Essentially, combining Theorem \ref{theo:plateH}  with results given in Section \ref{sec:equivalence} gives us the possibility  to construct inequivalent plateaued functions regardless of whether they are trivial or not. For instance, the construction of inequivalent trivial plateaued functions can be deduced as follows. We first remark  that for any two trivial $s$-plateaued functions $f,h:\mathbb{F}^{n}_2\rightarrow \mathbb{F}_2$ one can always relate their Walsh supports as $S_h=c+S_fA^T$, for  a matrix $A\in GL(n,\mathbb{F}_2)$ and some vector $c\in \mathbb{F}^n_2$. Then, one simply employs EA-inequivalent bent functions in $n-s$ variables as duals for $f$ and $h$ and the inequivalence follows by Theorem \ref{th:EA1}.

On the other hand, the construction of inequivalent non-trivial plateaued functions (by Theorem \ref{theo:plateH}) is achieved by setting non-affine Walsh supports of different dimensions (in combination with suitable duals).
%
The following example illustrates the former case, thus employing Theorem \ref{th:EA1} for this purpose.
\begin{ex}\label{ex:ineq1}
Let $f:\mathbb{F}^7_2\rightarrow \mathbb{F}_2$ ($n=7$) be a semi-bent function ($s=1$) defined as $f(x_1,\ldots,x_7)=x_1 x_4 + x_2 x_5 + x_3 x_6 + x_4 x_5 x_6$. One can verify that the Walsh support of $f$ is $S_f=v+E=\mathbb{F}^6_2\times \{0\}\subset \mathbb{F}^7_2$, where $v=\textbf{0}_7$ and $e_i\in E$ are given as $e_i=(x_i,0)$ with $x_i\in \mathbb{F}^6_2$ (and $\mathbb{F}^6_2$ is ordered lexicographically). In addition, the dual $f^*(\omega_i)=f^*(v+e_i)\stackrel{(\ref{DPL})}{=}\overline{f}^*(x_i)$ ($\overline{f}^*:\mathbb{F}^6_2\rightarrow \mathbb{F}_2$) is given as
$$\overline{f}^*(x_1,\ldots,x_6)=(x_1,x_2,x_3)\cdot (x_4,x_5,x_6)+x_1x_2x_3.$$

To construct a (trivial) semi-bent function $h:\mathbb{F}^5_2\rightarrow \mathbb{F}_2$ which is EA-inequivalent to $f$,  by Theorem \ref{th:EA1}, it is sufficient to take a bent function $\overline{h}^*:\mathbb{F}^6_2\rightarrow \mathbb{F}_2$ as its dual which is affine inequivalent to $\overline{f}^*$. For instance, let us define $\overline{h}^*$ as
$$\overline{h}^*(x_1,\ldots,x_6)=(x_1,x_2,x_3)\cdot (x_4,x_5,x_6).$$
Clearly, $\overline{f}^*$ and $\overline{h}^*$ are inequivalent since they are of different degrees.  Furthermore, taking that $S_h=(1,0,1,0,1,1,1)+S_f=(1,0,1,0,1,1,1)+\mathbb{F}^6_2\times \{0\}$, then by Theorem \ref{theo:plateH} (considering $h^*(z_i)=h^*((1,0,1,0,1,1,1)+(x_i,0))= \overline{h}^*(x_i)$, $x_i\in \mathbb{F}^6_2$) we obtain the function
$$h(x_1,\ldots,x_7)=x_1 + x_3 + x_1 x_4 + x_5 + x_2 x_5 + x_6 + x_3 x_6 + x_7.$$
Clearly, $f$ and $h$ are inequivalent due to $deg(f)\neq deg(h)$.
\end{ex}
Some generic construction methods,  using the spectral approach of Theorem \ref{theo:plateH}, of nontrivial plateaued functions are presented in Section \ref{sec:nonaffine}. Two EA-inequivalent nontrivial plateaued functions can be constructed by ensuring that their Walsh supports contain different number of linearly independent vectors (thus applying Theorem \ref{th:eq}-$(i)$). Alternatively, if their supports have the same number of linearly independent vectors then the idea from Example \ref{ex:ineq1} can be used where additionally the condition of Theorem \ref{th:eq}-$(ii)$ on duals has to be satisfied.

\subsection{Plateaued functions without linear structures}\label{sec:linear}

It has already been observed that trivial plateaued functions essentially correspond to partially bent functions {\cite{Zheng,Ayca2}} introduced by Carlet in \cite{CarletPartial}. Furthermore, it is well-known that partially bent functions  always admit linear structures. Therefore, it is of interest to analyze whether  nontrivial plateaued functions admit linear structures which extends the work of Zhang and Zheng \cite{Zheng} where a design method of nontrivial plateaued functions without linear structures were given in terms of sequences.
%
A useful concept related to a Boolean function $f \in \FB^n$ is the so-called autocorrelation spectrum whose spectral values are given by  $\triangle_f(a)=\sum_{x\in \FB^n}(-1)^{f(x)+f(x+a)}$, for $a \in {\FB^n}^*$. The property of being partially bent can be stated as follows.

\begin{defi}\label{def partially bent}\cite{CarletPartial}
  A function $f\in B_n$ with the Walsh support $S_f$  which satisfies the equality
  $\# S_f\cdot \# \mathfrak{R}=2^n$, where $\mathfrak{R}=\{u\in \mathbb{F}^n_2:\; \Delta_f(u)\neq 0\}$,
  is called partially-bent.
\end{defi}
In has been shown  \cite[Corollary 3.1]{CCA}, using somewhat different notation, that the autocorrelation values can be computed in terms of spectral values over the Walsh support.
\begin{lemma}\cite{CCA}\label{lemma:connection}
Let $f \in \mathcal{B}_n$ be an arbitrary functions with Walsh support $S_f$ represented as $S_f=v+ E$, for some subset $E$ of $\FB^n$ and $v\in S_f$. Then
\begin{equation}\label{eq:connect}
\triangle_f(a)=2^{-n}(-1)^{v\cdot a}\sum\limits_{u\in E}W_f^2(u+v)(-1)^{u\cdot a}.
\end{equation}
\end{lemma}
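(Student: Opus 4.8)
The plan is to first derive the standard Wiener–Khinchin identity expressing $\triangle_f(a)$ as a sum over the full space of squared Walsh coefficients, and then to restrict that sum to the Walsh support $S_f=v+E$.

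First I would expand the square of the Walsh transform. Starting from the definition $W_f(u)=\sum_{x\in\FB^n}(-1)^{f(x)+u\cdot x}$, I would write
\begin{eqnarray*}
\sum_{u\in\FB^n}W_f^2(u)(-1)^{u\cdot a}=\sum_{x,y\in\FB^n}(-1)^{f(x)+f(y)}\sum_{u\in\FB^n}(-1)^{u\cdot(x+y+a)}.
\end{eqnarray*}
The inner character sum $\sum_{u\in\FB^n}(-1)^{u\cdot(x+y+a)}$ equals $2^n$ when $y=x+a$ and $0$ otherwise, so the right-hand side collapses to $2^n\sum_{x\in\FB^n}(-1)^{f(x)+f(x+a)}=2^n\triangle_f(a)$. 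This gives the identity $\triangle_f(a)=2^{-n}\sum_{u\in\FB^n}W_f^2(u)(-1)^{u\cdot a}$.

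Next I would use the fact that $W_f(u)=0$ for every $u\notin S_f$ to cut the summation index down to $S_f$. Writing each $u\in S_f$ as $u=v+w$ with $w\in E$ (recall $S_f=v+E$), and factoring $(-1)^{(v+w)\cdot a}=(-1)^{v\cdot a}(-1)^{w\cdot a}$, the sum becomes
\begin{eqnarray*}
\triangle_f(a)=2^{-n}(-1)^{v\cdot a}\sum_{w\in E}W_f^2(v+w)(-1)^{w\cdot a},
\end{eqnarray*}
which is exactly (\ref{eq:connect}) after renaming the dummy index $w$ back to $u$.

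The argument is essentially a routine Fourier computation, so there is no genuine obstacle; the only points requiring care are the orthogonality of the additive characters (the collapse of the inner sum to $2^n$ precisely when $y=x+a$) and the bookkeeping of the affine shift $u=v+w$, which is what produces the prefactor $(-1)^{v\cdot a}$. One could alternatively phrase the whole derivation as applying the inverse Walsh transform to the function $u\mapsto W_f^2(u)$, but the direct expansion above is the cleanest route and makes the restriction to $S_f$ transparent.
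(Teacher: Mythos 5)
Your proof is correct. The paper does not actually prove this lemma—it cites it as \cite[Corollary 3.1]{CCA}—and your derivation (the Wiener--Khinchin identity $\triangle_f(a)=2^{-n}\sum_{u\in\mathbb{F}_2^n}W_f^2(u)(-1)^{u\cdot a}$ via character orthogonality, followed by restricting the sum to $S_f=v+E$ and factoring out $(-1)^{v\cdot a}$) is exactly the standard argument underlying the cited result, so nothing is missing.
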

The following result specifies the existence of linear structures for an arbitrary Boolean function, thus including plateaued functions as a special case.
 \begin{theo}\label{theo baselinear}
Let $f \in \mathcal{B}_n$  whose Walsh spectrum
is given by  $S_f= v+E\subset \FB^n$, where  $v\in S_f$.
Let $m$ denote the maximum number  of linearly independent elements of $E$ and  $\Lambda=\{a\in \FB^n: f(x)+f(x+a)=constant, x\in \FB^n \}$. Then $$m+\dim(\Lambda)=n.$$
\end{theo}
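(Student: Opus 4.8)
The plan is to characterize $\Lambda$ completely as the orthogonal complement of the linear span of $E$, after which the dimension count is immediate. First I would recall the standard fact that $a\in\Lambda$ (i.e. $f(x)+f(x+a)$ is constant) if and only if the autocorrelation attains its extreme magnitude $|\triangle_f(a)|=2^n$. Indeed, $\triangle_f(a)=\sum_{x\in\FB^n}(-1)^{f(x)+f(x+a)}$ is a sum of $2^n$ terms each equal to $\pm 1$, so $|\triangle_f(a)|=2^n$ forces all summands to agree, which is exactly the constancy condition; the converse is clear. Thus $\Lambda=\{a\in\FB^n:|\triangle_f(a)|=2^n\}$, and it suffices to determine this set.

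Next I would invoke Lemma \ref{lemma:connection}, which gives $\triangle_f(a)=2^{-n}(-1)^{v\cdot a}\sum_{u\in E}W_f^2(u+v)(-1)^{u\cdot a}$. Writing $c_u:=W_f^2(u+v)$ for $u\in E$, each $c_u$ is strictly positive since $u+v\in S_f$ means $W_f(u+v)\neq 0$. The triangle inequality then yields $|\triangle_f(a)|\le 2^{-n}\sum_{u\in E}c_u$, and by Parseval's identity $\sum_{u\in E}c_u=\sum_{\omega\in S_f}W_f^2(\omega)=\sum_{\omega\in\FB^n}W_f^2(\omega)=2^{2n}$, so $|\triangle_f(a)|\le 2^n$. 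Equality holds if and only if all the signs $(-1)^{u\cdot a}$ coincide, that is, $u\cdot a$ is constant over $u\in E$.

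Finally, since $v\in S_f=v+E$ forces $\textbf{0}_n\in E$, this common value must equal $\textbf{0}_n\cdot a=0$; hence equality occurs precisely when $u\cdot a=0$ for every $u\in E$. Combined with the first step this gives $\Lambda=E^{\bot}$, the orthogonal complement of the linear span $\langle E\rangle$. Consequently $\dim(\Lambda)=n-\dim\langle E\rangle=n-m$, where $m$ is the maximal number of linearly independent vectors in $E$, which is exactly the claimed identity $m+\dim(\Lambda)=n$. I expect the only delicate point to be the equality case of the triangle inequality: one must use that every coefficient $c_u$ is strictly positive, so that no cancellation is possible unless the signs $(-1)^{u\cdot a}$ are uniform; everything else is Parseval together with the elementary dimension formula $\dim\langle E\rangle+\dim E^{\bot}=n$.
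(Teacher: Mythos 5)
Your proof is correct and follows essentially the same route as the paper: both rest on Lemma \ref{lemma:connection}, Parseval's identity, and the fact that $\textbf{0}_n\in E$ forces the common sign to be $+1$, yielding $\Lambda=E_*^{\perp}$ (the paper merely splits this into the two inclusions $\Lambda\subseteq E_*^{\perp}$ and $E_*^{\perp}\subseteq\Lambda$). Your explicit handling of the equality case of the triangle inequality, using strict positivity of $W_f^2$ on $S_f$, is precisely the step the paper leaves implicit when passing from $\sum_{u\in E}W_f^2(u+v)(-1)^{u\cdot a}=\pm 2^{2n}$ to $\sum_{u\in E}(-1)^{u\cdot a}=\pm\,\#S_f$.
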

\begin{proof}
  If $a $ is a linear structure of $f$,
  then $\triangle_f(a)=2^n ~\emph{or} ~-2^n$.  By relation (\ref{eq:connect}), this is equivalent to \begin{equation}\label{equ lsup00}
\sum\limits_{u\in \FB^n}W_f^2(u)(-1)^{u\cdot a}= (-1)^{v \cdot a}\sum\limits_{u\in E}W_f^2(u+v)(-1)^{u\cdot a}=2^{2n} ~\emph{or} ~-2^{2n}.\end{equation}
Using the Parseval's equality $ \sum\limits_{u\in \FB^n}W_f^2(u)=2^{2n}$ in (\ref{equ lsup00}), we have that
$$\sum\limits_{u\in E}(-1)^{u\cdot a}=\# S_f\; \textnormal{ or }\;-\# S_f. $$
Consequently, using the fact that $S_f=v+E$ and $v\in S_f$, we have that $0_n\in E$ and thus the sum on the left-hand side above contains the term $(-1)^{0_n \cdot a}=1$.
%
Hence, $$ \sum\limits_{u\in E}(-1)^{u\cdot a}=\# S_f.$$

 On the other hand,  $\sum\limits_{u\in E}(-1)^{u\cdot a}=\# S_f$ if and only if $E\subseteq \{a\}^{\perp}$.  Similarly,  we have $E\subseteq \Lambda^{\perp}$, that is,
  \begin{equation}\label{equ lsup 1}\dim(E_*)\leq n-\dim(\Lambda),
   \end{equation}where $E_*$ denotes the minimum subspace including $E$.
From (\ref{eq:connect}), we know that  $|\triangle_f(a)|=2^n$ for any $a\in E_*^{\perp} $. Thus, we have $ E_*^{\perp}\subseteq\Lambda$, that is,
\begin{equation}\label{equ lsup 2} \dim(\Lambda)\geq n-\dim(E_*).
   \end{equation}
  Combining (\ref{equ lsup 1}) and (\ref{equ lsup 2}),  we have $m+\dim(\Lambda)=n.$ \qed

\end{proof}

 \begin{cor}\label{theo:baselinear2}
Let $f:\mathbb{F}^n_2\rightarrow \mathbb{F}_2$ be an arbitrary $s$-plateaued function with its Walsh spectrum given as  $S_f= v+E\subset \FB^n$, where  $v\in S_f$.
Then $f$ does not contain (non-zero) linear structures if and only if the number of linearly independent vectors of $E$ is equal to $n.$
\end{cor}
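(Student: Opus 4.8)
The plan is to deduce this corollary directly from Theorem~\ref{theo baselinear}, which was established for an arbitrary Boolean function and therefore applies verbatim to the $s$-plateaued $f$ at hand. The only conceptual point I need to make explicit is the translation between the two formulations of ``having no linear structures''.

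First I would recall that the set $\Lambda=\{a\in \FB^n: f(x)+f(x+a)=\text{constant},\, x\in \FB^n\}$ is a linear subspace of $\FB^n$ and that it always contains $\textbf{0}_n$, since $f(x)+f(x+\textbf{0}_n)=0$ is constant. Hence $\dim(\Lambda)\geq 0$ in all cases, and the assertion that $f$ possesses \emph{no non-zero} linear structure is precisely the statement $\Lambda=\{\textbf{0}_n\}$, i.e.\ $\dim(\Lambda)=0$. This is the step where some care is warranted: one must state clearly that $\Lambda$ is a subspace and that $\textbf{0}_n\in\Lambda$ trivially, so that the absence of non-zero linear structures is faithfully captured by $\dim(\Lambda)=0$ rather than by $\Lambda$ being empty.

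Then I would invoke the identity $m+\dim(\Lambda)=n$ supplied by Theorem~\ref{theo baselinear}, where $m$ is the maximum number of linearly independent elements of $E$, which is exactly ``the number of linearly independent vectors of $E$'' appearing in the statement. The desired equivalence is now immediate in both directions: $f$ has no non-zero linear structure $\iff \dim(\Lambda)=0 \iff m=n$. There is no genuine obstacle remaining, as the substantive work has already been carried out in proving Theorem~\ref{theo baselinear}; the corollary is a one-line specialization obtained by substituting $\dim(\Lambda)=0$ into that identity.
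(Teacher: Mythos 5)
Your proof is correct, but it is not the proof the paper gives. You treat the statement as a genuine corollary: since Theorem~\ref{theo baselinear} holds for an \emph{arbitrary} Boolean function with $S_f=v+E$, $v\in S_f$, you specialize it to the plateaued $f$, observe that $\Lambda$ is a linear subspace containing $\textbf{0}_n$ so that ``no non-zero linear structures'' is exactly $\dim(\Lambda)=0$, and read off $m=n$ from the identity $m+\dim(\Lambda)=n$; that is a complete argument. The paper instead re-derives the result from scratch out of Lemma~\ref{lemma:connection}: for an $s$-plateaued $f$ the squared Walsh values on $S_f$ equal the constant $2^{n+s}$, so (\ref{eq:connect}) collapses to $\Delta_f(a)=2^{s}(-1)^{v\cdot a}\sum_{u\in E}(-1)^{u\cdot a}$, hence $a\neq \textbf{0}_n$ is a linear structure iff $\sum_{u\in E}(-1)^{u\cdot a}=\pm 2^{n-s}$, iff the function $\varphi_a$ with truth table $(a\cdot e_0,\ldots,a\cdot e_{2^{n-s}-1})$ is constant, iff the columns of $E$ (written as a $2^{n-s}\times n$ matrix) admit a linear dependency, i.e.\ iff $E$ spans a proper subspace of $\FB^n$. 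Your route is shorter, avoids duplicating a computation already carried out for Theorem~\ref{theo baselinear}, and makes the logical dependence transparent (the statement really is a corollary); the paper's route is self-contained for the plateaued case and yields the explicit combinatorial criterion --- a non-zero linear structure corresponds to a constant linear combination of the columns of the support matrix --- which is the concrete form that is convenient when verifying that specific supports constructed later (e.g.\ in Section~\ref{sec:nontrivial}, where $S_f$ spans $\FB^n$) give functions without linear structures.
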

\begin{proof}
Let us assume that $f$ has a non-zero linear structure $a\in \mathbb{F}^n_2$. Then by relation (\ref{eq:connect}) we have that
$$\Delta_f(a)=2^{-n}(-1)^{v\cdot a}2^{n+s}\sum_{u\in E}(-1)^{u\cdot a}=2^{s}(-1)^{v\cdot a}\sum_{u\in E}(-1)^{u\cdot a},$$
and thus $a$ is a linear structure of $f$ (that is $|\Delta_f(a)|=2^n$) if and only if $\sum_{u\in E}(-1)^{u\cdot a}=\pm 2^{n-s}$. However, using the fact that $\# E=2^{n-s}$, we have that
\begin{eqnarray}\label{eq:phi}
\pm 2^{n-s}=\sum_{u\in E}(-1)^{u\cdot a}=2^{n-s}-2wt(\varphi_a),
\end{eqnarray}
where the truth table of the function $\varphi_a$ is given as $T_{\varphi_a}=(u\cdot e_0,\ldots,u\cdot e_{2^{n-s}-1})$ (and $E=\{e_0,\ldots,e_{2^{n-s}-1}\}$ can be considered in any ordering). However, (\ref{eq:phi}) holds if and only if $\varphi_a$, viewed as a function on $\mathbb{F}^{n-s}_2$, is a constant function for $a\neq \textbf{0}_n$ (thus equal to $0$ or $1$).

 Equivalently, representing $E$ as a matrix, using $e_0,\ldots,e_{2^{n-s}-1}$ as its rows, then some of its columns are either constant or there is a linear dependency among its columns. Thus, the number $E$ of linearly independent columns of $E$ is less than $n$.
\qed
\end{proof}
It is rather challenging to conjecture that only trivial plateaued functions admit linear structures which is however not true as illustrated in the example below.
\begin{ex}
Let us consider the function $f:\mathbb{F}^6_2\rightarrow \mathbb{F}_2$ $(n=6)$ given as $$f(x_1,\ldots,x_6)=x_1 (x_3 + x_2 x_5) + x_2 (x_4 + x_6).$$
One can verify that the autocorrelation spectrum of $f$ is given by
$$(\Delta_f(u_0),\ldots,\Delta_f(u_{63}))=(64, 0, 32, 32, 0, 64, 32, 32, 0, 0, 32, -32, 0, 0, -32, 32,\bf{0}_{48}),$$
where $\bf{0}_{48}$ means that the last $48$ coefficients are equal to $0$. Also, the Walsh support of $f$ is given as $S_f=\mathbb{F}^4_2\wr T_g\wr T_{\ell}$, where $g,\ell:\mathbb{F}^4_2\rightarrow \mathbb{F}_2$ are given as $g(x_1,\ldots,x_4)=x_3x_4$ and $\ell(x_1,\ldots,x_4)=x_4$, and thus $f$ is a non-trivial plateaued function. Since $S_f$ contains only $5$ linearly independent vectors, then $dim(\Lambda)=n-m=6-5=1$ (Theorem \ref{theo baselinear}) and we have that $\Lambda=\{{\bf 0}_6,(0, 0, 0, 1, 0, 1)\}.$
\end{ex}
Clearly, if the Walsh support of $f$ contains $n$ linearly independent vectors then $\dim(\Lambda)=0$ and such a function cannot have linear structures. This result will be useful in Section \ref{sec:nontrivial} to draw a conclusion  that the constructed nontrivial plateaued functions are without linear structures.
\begin{rem}
The above result implies that a trivial $s$-plateaued function on $\FB^n$ (which corresponds to a partially bent function) always admit linear structures of dimension $s$, see also \cite{Zheng}. This is a consequence of the fact that $\dim(S_f)=n-s=m$ and thus $\dim(\Lambda)=s$.
\end{rem}

\section{Constructions of (non)trivial disjoint spectra plateaued functions} \label{sec:nontrivial}

A construction method of  trivial disjoint spectra plateaued functions is easily established by using Theorem \ref{theo:plateH} and setting the Walsh supports to be disjoint affine subspaces (where duals can be arbitrary bent functions).
The design of nontrivial plateaued functions and in particular the problem of deriving  disjoint spectra nontrivial $s$-plateaued functions of maximum cardinality appears to be a more difficult task. In this direction, we provide some efficient design methods  of  nontrivial plateaued functions  which also might have  interesting implications on the design of bent functions.

\subsection{Disjoint spectra plateaued functions  via $\mathcal{MM}$ class}\label{sec:MM}

For convenience and due to the consequences related to generic design methods of bent functions, we briefly discuss the possibility of obtaining trivial disjoint spectra plateaued functions of maximal cardinality. Since any trivial $s$-plateaued function has an affine subspace $S_f=\alpha + E$ of dimension $n-s$ as its Walsh support, it is natural to consider the space $\FB^n$ as a union of disjoint cosets, thus $\FB^n=\cup_{\alpha_i \in \mathcal{G}} \alpha_i +E$ ($\mathcal{G}$ is orthogonal complement of $E$), for some linear subspace $E$ and elements $\alpha_i$ such that $\alpha_i + E \cap \alpha_j +E = \emptyset$. Then, clearly $\#\mathcal{G}=2^{s}$ and any function $f_i \in B_n$ whose support is defined on the affine subspace $S_{f_i}=\alpha_i +E$ is a disjoint trivial plateaued function to $f_j$, for $\alpha_i \neq \alpha_j$. It is important to notice that such a set of disjoint spectra plateaued functions (of maximal cardinality) can be defined using  arbitrary  bent duals for any $f_i$ which additionally may belong  to different classes of bent functions. This leads to an important research problem that can be stated as follows.
\begin{op}
Assume that a  set of disjoint spectra trivial $s$-plateaued functions on $\FB^n$ of (maximal) cardinality $2^s$ is constructed using the dual bent functions (by means of Theorem \ref{theo:plateH}) which are taken from different classes of bent functions on $\FB^{n-s}$.  To which class of bent functions a bent function on $\FB^{n+s}$, defined as a concatenation of these $2^s$ functions, do belong?
\end{op}
 The above approach, specifying the bent dual in spectral domain, provides much more flexibility  compared to a standard design in the ANF domain. The only generic class of Boolean functions that admits an efficient construction method of disjoint spectra plateaued functions of maximal cardinality in the ANF domain is the Marioana-McFarland ($\mathcal{MM}$) class \cite{MMclass}. This calls  was initially introduced as a primary class of bent functions though it can be used for plateaued functions as well.

Let $n=s+k$ where $s > n/2$ and denote the set of linear functions in $s$ variables as $\mathcal{L}_s=\{a_1x_1 + \ldots + a_s x_s: a_i \in \mathbb{F}_2\}$. Let $f(y,x)= \phi(y) \cdot x + g(y)$, where $x=(x_1,\ldots,x_s) \in \FB^s$, $y=(y_1,\ldots,y_k) \in \FB^k$, and $\phi : \FB^k \rightarrow \FB^s$ be an injective mapping. Since $s>k$ we can partition the vector space $\FB^s$ of cardinality $2^s$ into $2^{s-k}$ disjoint subsets $E_i$ each of cardinality $2^k$. That is, $\FB^s=\cup_i^{2^{s-k}}E_i$, where $E_i \cap E_j = \emptyset$.
Then we can define  a set of functions
\begin{equation} \label{eq:mmdisjoint}
f^{(i)}(y,x)=\phi_i(y) \cdot x + g_i(y), \;\;  i=1, \ldots, 2^{s-k},
\end{equation}
where  $\phi_i : \FB^k \rightarrow E_i \subset \FB^s$ is a bijection and $g_i$ is arbitrary. It is straightforward to verify that each $f_i$ is $s$-plateaued (see also \cite{CarletQ}) and that $f_i$ are disjoint spectra functions.


\begin{prop}\label{prop:disspec_affine}
The set of functions $f^{(i)}$, for $i=1, \ldots, 2^{s-k}$, defined by (\ref{eq:mmdisjoint}) is a set of disjoint spectra plateaued functions. Furthermore, if the sets $E_i$ used to define the functions $\phi_i : \FB^k \rightarrow E_i \subset \FB^s$ are not affine subspaces then $f_i$ are nontrivial disjoint spectra plateaued functions.
\end{prop}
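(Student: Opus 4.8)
The plan is to compute the Walsh--Hadamard transform of each $f^{(i)}$ explicitly via the standard Maiorana--McFarland splitting of the summation, read off its Walsh support and three-valued spectrum, and then dispatch disjointness and nontriviality as elementary consequences of the way the sets $E_i$ enter the support.

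First I would write an arbitrary point of $\FB^n$ as $(a,b)$ with $a\in\FB^k$ (paired with $y$) and $b\in\FB^s$ (paired with $x$), and compute
\begin{eqnarray*}
W_{f^{(i)}}(a,b) &=& \sum_{y\in\FB^k}\sum_{x\in\FB^s}(-1)^{\phi_i(y)\cdot x + g_i(y) + a\cdot y + b\cdot x} \\
&=& \sum_{y\in\FB^k}(-1)^{g_i(y)+a\cdot y}\sum_{x\in\FB^s}(-1)^{(\phi_i(y)+b)\cdot x}.
\end{eqnarray*}
The inner sum equals $2^s$ when $b=\phi_i(y)$ and vanishes otherwise. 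Since $\phi_i$ is a bijection onto $E_i$, for $b\in E_i$ there is a unique preimage $y=\phi_i^{-1}(b)$ and for $b\notin E_i$ there is none; hence $W_{f^{(i)}}(a,b)=2^s(-1)^{g_i(\phi_i^{-1}(b))+a\cdot\phi_i^{-1}(b)}$ for $b\in E_i$ and $0$ otherwise. This shows that the spectrum of $f^{(i)}$ is contained in $\{0,\pm 2^s\}$, so each $f^{(i)}$ is plateaued (with parameter $s-k$, since $n=s+k$ gives $2^s=2^{(n+(s-k))/2}$), and its Walsh support is exactly $S_{f^{(i)}}=\FB^k\times E_i$.

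With the supports identified, disjointness is immediate: for $i\neq j$ we have $S_{f^{(i)}}\cap S_{f^{(j)}}=\FB^k\times(E_i\cap E_j)=\emptyset$, using $E_i\cap E_j=\emptyset$. For the nontriviality claim I would invoke Definition \ref{def:nontr}: $f^{(i)}$ is nontrivial precisely when $S_{f^{(i)}}=\FB^k\times E_i$ fails to be an affine subspace of $\FB^n$. The key remaining point is the equivalence that $\FB^k\times E_i$ is affine if and only if $E_i$ is affine. The reverse implication is the obvious fact that a direct product of affine subspaces is affine; the forward implication follows because the coordinate projection $\pi:\FB^n\to\FB^s$, $(a,b)\mapsto b$, is a linear surjection, hence maps affine sets to affine sets, and $\pi(\FB^k\times E_i)=E_i$. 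Thus if $E_i$ is not an affine subspace then neither is $S_{f^{(i)}}$, and $f^{(i)}$ is nontrivial.

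There is no serious obstacle here: the computation is the routine Maiorana--McFarland Walsh evaluation and the two remaining assertions are set-theoretic. The only step requiring a little care is the \emph{only if} direction of the affine-subspace equivalence (the projection argument), which is what guarantees that non-affineness of $E_i$ genuinely propagates to $S_{f^{(i)}}$ and is not accidentally destroyed by forming the product with $\FB^k$.
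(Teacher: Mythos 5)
Your proof is correct and is precisely the standard Maiorana--McFarland Walsh computation that the paper itself relies on: the paper gives no proof of this proposition at all, merely asserting just before its statement that the claims are ``straightforward to verify'' (citing prior work for the plateaued property). Your write-up supplies the omitted details accurately --- including the correct identification of the plateaued parameter as $s-k$ (amplitude $2^s$), where the paper's surrounding text loosely says ``$s$-plateaued'', and the one point the paper leaves entirely implicit, namely that non-affineness of $E_i$ propagates to $S_{f^{(i)}}=\FB^k\times E_i$, which your linear-projection argument settles cleanly.
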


\begin{rem}
It can be easily checked that if $E_i$ are  affine subspaces then  the concatenation of $f^{(1)}, \ldots,f^{(2^{s-k})}$  will again provide a bent function in the $\mathcal{MM}$ class.
\end{rem}

\subsection{Construction of nontrivial plateaued functions}\label{sec:nonaffine}

An example of a plateaued function with non-affine Walsh support (which is not partially bent and thus nontrivial), has been given in \cite[Example 3.2]{Secondary}. In difference to the approach taken in the previous section, using again  the $\mathcal{MM}$ class of bent functions, we will  provide a generic construction of nontrivial plateaued functions in the spectral domain. Furthermore, we show that the classes $\mathcal{C}$ and $\mathcal{D}$ \cite{CarletTNC} can also be used for the same purpose, but with certain limitations (Remark \ref{rem:CD}).
Before we present a general framework, we illustrate the main construction idea of nontrivial plateaued functions through an example. 
In what follows,  the truth table $T_g$ of a function $g\in \mathcal{B}_n$ will be viewed as a column vector of length $2^n$.
\begin{ex}\label{ex1}
Using Theorem \ref{theo:plateH}, one can construct a plateaued function with non-affine Walsh support as follows. Let us define the  columns of the Walsh support $S_f$ as
$$S_f=T_{\phi_{b_1}}\wr\ldots\wr T_{\phi_{b_4}}\wr T_{\phi_{b_5}}=\mathbb{F}^4_2\wr T_{\phi_{b_5}},$$
where $\phi_{b_5}:\mathbb{F}^4_2\rightarrow \mathbb{F}_2$ is defined as $\phi_{b_5}(x_1,\ldots,x_4)=x_3x_4$.  Let also the dual $\overline{f}^*=g$ be defined as $g(x_1,\ldots,x_4)=x_1x_3+ x_2x_4\in \mathcal{MM}$, see Table \ref{tab:ex1}. Notice that the vectors $x=(x_1,\ldots,x_4)$ are sorted lexicographically.
\begin{table}[H]
\scriptsize
\centering
\begin{tabular}{|r|c|c|}
  \hline
  \makecell{$S_f=\mathbb{F}^4_2\wr T_{\phi_{b_5}}$} & $f^*(\omega_i)=\overline{f}^*(x_i)=g(x_i),$ $x_i\in \mathbb{F}^4_2$\\ \hline
  $\omega_0=(0, 0, 0, 0, 0)$  & 0 \\
  $\omega_1=(0, 0, 0, 1, 0)$ &  0 \\
  $\omega_2=(0, 0, 1, 0, 0)$  & 0 \\
  $\omega_3=(0, 0, 1, 1, 1)$  & 0 \\
  $\omega_4=(0, 1, 0, 0, 0)$  & 0 \\
  $\omega_5=(0, 1, 0, 1, 0)$  & 1 \\
  $\omega_6=(0, 1, 1, 0, 0)$  & 0 \\
  $\omega_7=(0, 1, 1, 1, 1)$  & 1 \\
  $\omega_8=(1, 0, 0, 0, 0)$  & 0 \\
  $\omega_9=(1, 0, 0, 1, 0)$  & 0 \\
$\omega_{10}=(1, 0, 1, 0, 0)$  & 1 \\
$\omega_{11}=(1, 0, 1, 1, 1)$  & 1 \\
$\omega_{12}=(1, 1, 0, 0, 0)$  & 0 \\
$\omega_{13}=(1, 1, 0, 1, 0)$  & 1 \\
$\omega_{14}=(1, 1, 1, 0, 0)$  & 1 \\
$\omega_{15}=(1, 1, 1, 1, 1)$  & 0 \\
  \hline
\end{tabular}
\caption{Values of $f^*$ with respect to $S_f$.}
\label{tab:ex1}
\end{table}
Then, by applying the inverse WHT where  the non-zero Walsh spectral values are given with respect to Table \ref{tab:ex1} and using (\ref{eq:walshX}), we obtain a semi-bent function $f$ given as
$$f(x_1,\ldots,x_5)=x_1 x_3 + x_2 x_4 + x_1 x_2 x_5.$$
It can be verified that $\#\mathfrak{R}=5$, hence $f$ is not a partially bent function.
\end{ex}
\begin{rem} The function $f$ in Example \ref{ex1} does not admit linear structures. This is the consequence of Theorem \ref{theo baselinear} and the fact that $S_f$ contains five linearly independent vectors, thus spanning $\FB^5$.
\end{rem}
In general, the construction of $S_f$ by means of Theorem \ref{theo:plateH} has to satisfy two important conditions. More precisely, $S_f$ must not  be a multiset and its  columns (when $S_f$ is represented as a matrix of size $2^{n-s} \times n$) must lie at bent distance to the dual $\overline{f}^*=g$. The  requirement that $S_f=\mathbb{F}^4_2\wr T_{\phi_{b_5}}$ in Example \ref{ex1} is not a multi-set is apparently fulfilled regardless of the choice of  $\phi_{b_5}$. The second condition is satisfied since the first four columns of $S_f$ are sequences or linear functions (\cite[Lemma 3.1]{Secondary}). Furthermore, $g+\phi_{b_5}$ is a bent function in $n-s$ variables which belongs to the $\mathcal{MM}$ class and therefore $g$ and $\phi_{b_5}$ are at bent distance.

In order to construct a plateaued function $f$ on $\FB^n$ (of any amplitude) we will employ the $\mathcal{MM}$ class of bent functions on $\FB^{n-s}$ and define   its dual as $\overline{f}^*=g\in \mathcal{MM}$. This approach will be later extended  to cover the cases  when $\overline{f}^*=g$ belongs to $\mathcal{C}$ or $\mathcal{D}$ \cite{CarletTNC} classes of bent functions. The main idea behind  our approach is to specify  the Walsh support of $f$ as  $S_f=\mathbb{F}^{n-s}_2\wr S$, for some suitable set $S$ of cardinality $2^{n-s}$ whose elements belong to $\FB^{s}$.

To construct  nontrivial plateaued functions one can employ  bent functions from the  $\mathcal{MM}$ class defined by \begin{equation} \label{def:bentmm}
g(x,y)=x\cdot \psi(y)+ t(y); \;\;x,y\in \mathbb{F}^{(n-s)/2}_2,
\end{equation}
where $\psi$ is an arbitrary permutation on $\mathbb{F}^{(n-s)/2}_2$ and $t \in \mathcal{B}_{(n-s)/2}$ is arbitrary.
\begin{theo}\label{th:Sf}
Let $g:\mathbb{F}^{n-s}_2\rightarrow \mathbb{F}_2$ be defined by (\ref{def:bentmm}).
For an arbitrary matrix $M\in GL(n-s,\mathbb{F}_2)$ and vector $c\in \mathbb{F}^{n-s}_2$, let $S_f$ be defined as
\begin{eqnarray}\label{Sff}
S_f=T_{\phi_{b_1}}\wr\ldots \wr T_{\phi_{b_{n-s}}} \wr T_{\phi_{b_{n-s+1}}}\wr \ldots\wr T_{\phi_{b_n}}=(c+EM)\wr T_{t_1}\wr\ldots\wr T_{t_s},
\end{eqnarray}
where $t_i=t_i(y)\in \mathcal{B}_{(n-s)/2}$ $(i\in[1,s])$ are arbitrary and $E=\mathbb{F}^{n-s}_2$. Then
$f:\mathbb{F}^n_2\rightarrow \mathbb{F}_2$ constructed using Theorem \ref{theo:plateH}, with $S_f$ defined by (\ref{Sff}) and taking $\overline{f}^*=g$,  is an $s$-plateaued function.
\end{theo}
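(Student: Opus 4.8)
The plan is to apply Theorem \ref{theo:plateH}-$(i)$, which reduces the claim to verifying that $g$ lies at bent distance to every member of the sequence profile $\Phi_f$ attached to the support $S_f$ given by (\ref{Sff}). Before doing so I would first confirm that $S_f$ is a genuine Walsh support and not a multiset: since the first $n-s$ coordinates of the rows of $S_f$ run through the affine image $c+EM$ with $E=\mathbb{F}^{n-s}_2$ and $M\in GL(n-s,\mathbb{F}_2)$, the map $e_i\mapsto c+e_iM$ is a bijection of $\mathbb{F}^{n-s}_2$, so the rows are already pairwise distinct and $\#S_f=2^{n-s}$ as required.

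The next step is to make the profile $\Phi_f=\langle \phi_{b_1},\ldots,\phi_{b_n}\rangle$ explicit. Writing $u=(u',u'')\in\mathbb{F}^{n-s}_2\times\mathbb{F}^s_2$ and using the lexicographically ordered $E=\mathbb{F}^{n-s}_2$ (so that $e_i$ also indexes the domain of $g$), a direct expansion of $\phi_u(e_i)=u\cdot\omega_i$ from (\ref{Sff}) gives $\phi_u(e_i)=u'\cdot c+e_i\cdot(u'M^T)+\sum_{j=1}^s u''_j\,t_j(y_i)$. Because $M^T$ is invertible, $w:=u'M^T$ ranges over all of $\mathbb{F}^{n-s}_2$ as $u'$ does (this is exactly the content of Proposition \ref{prop:Hrow} applied to the affine block $c+EM$: the first $n-s$ columns contribute every linear function on $\mathbb{F}^{n-s}_2$), while $u''$ ranges freely over $\mathbb{F}^s_2$. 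Splitting $e_i=(x,y)$ and $w=(w_1,w_2)$ in $\mathbb{F}^k_2\times\mathbb{F}^k_2$ with $k=(n-s)/2$, every element of $\Phi_f$ therefore has the shape $\phi_u(x,y)=x\cdot w_1+y\cdot w_2+\sum_{j=1}^s u''_j t_j(y)+u'\cdot c$, i.e. an affine function of $x$ whose nonlinear part depends only on $y$.

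The heart of the argument is then a single Walsh-type sum. For $g(x,y)=x\cdot\psi(y)+t(y)$ and an arbitrary $\phi_u\in\Phi_f$ as above, I would group the $x$-linear terms to write $g+\phi_u=x\cdot(\psi(y)+w_1)+\Psi(y)$, where $\Psi(y)=t(y)+y\cdot w_2+\sum_{j=1}^s u''_j t_j(y)+u'\cdot c$ depends only on $y$. Summing over $x$ first yields $\sum_{x}(-1)^{x\cdot(\psi(y)+w_1)}$, which equals $2^k$ when $\psi(y)=w_1$ and $0$ otherwise; since $\psi$ is a permutation there is exactly one $y_0$ with $\psi(y_0)=w_1$. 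Hence $\sum_{(x,y)}(-1)^{g(x,y)+\phi_u(x,y)}=2^k(-1)^{\Psi(y_0)}=\pm 2^{(n-s)/2}$, which translates into $d_H(g,\phi_u)=2^{n-s-1}\mp 2^{(n-s)/2-1}$, exactly the bent distance. As this holds for every $u$, the function $g$ is at bent distance to all of $\Phi_f$, and Theorem \ref{theo:plateH}-$(i)$ certifies that $f$ is $s$-plateaued.

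The only genuinely delicate point is the second step: correctly identifying $\Phi_f$ and recognizing that the construction (\ref{Sff}) is engineered so that, after adding $g$, the dependence on $x$ stays linear while all the nonlinearity is confined to $y$. Once this is in place the permutation property of $\psi$ does all the work — it is precisely what forces the inner sum over $x$ to select a unique $y_0$ and thereby guarantees the uniform value $\pm 2^{(n-s)/2}$ over the whole profile. The weight hypothesis of Theorem \ref{theo:plateH} is automatic here, being the special case $u=\textbf{0}_n$, i.e. the bent distance of $g$ to the constant function $0$.
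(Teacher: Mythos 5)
Your proof is correct and follows essentially the same route as the paper: both reduce the claim to Theorem \ref{theo:plateH}-$(i)$ and verify that $g$ is at bent distance to $\Phi_f$ by observing that every profile function is affine in $x$ with all nonlinearity confined to $y$, so that $g+\phi_u$ retains the Maiorana--McFarland structure. The only difference is presentational: where the paper cites the bentness of $\mathcal{MM}$ functions to conclude, you unfold that citation into the explicit character sum over $x$ (using that $\psi$ is a permutation), and you additionally make explicit two points the paper leaves implicit, namely that the rows of $S_f$ are pairwise distinct and that the weight hypothesis on $g$ in Theorem \ref{theo:plateH} is just the $u=\textbf{0}_n$ case of the bent-distance condition.
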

\proof  By Proposition \ref{prop:Hrow}, we have that the columns corresponding to $c+EM$  are  affine functions in $n-s$ variables. Thus, we may write $\phi_{b_i}=\ell_i$  for some $\ell_i\in \mathcal{A}_{n-s}$, where $i\in[1,n-s]$.
 Since $g$ is a bent function in the $\mathcal{MM}$ class, then for any $v\in \mathbb{F}^k_2$
 $$g(x,y)+ v\cdot (\phi_{b_1}(x,y),\ldots,\phi_{b_n}(x,y))=g(x,y)+ v\cdot(\ell_1(x,y),\ldots,\ell_{n-s}(x,y),t_1(y),\ldots,t_s(y))$$
is also a bent function. By Theorem \ref{theo:plateH}-$(i)$, this  means that $g=\overline{f}^*$ is at bent distance to functions $v\cdot (\phi_{b_1},\ldots,\phi_{b_n})$, for every $v\in \mathbb{F}^n_2$. Thus, $f$ is $s$-plateaued.\qed

Using the same arguments as in the proof of Theorem \ref{th:Sf}, one can easily employ $\mathcal{C}$ and $\mathcal{D}$ classes of bent functions in order to construct semi-bent functions ($s=1$) as follows.
\begin{theo}\label{th:CD}
Let $g(x,y)=x\cdot \psi(y)$, $x,y\in \mathbb{F}^{n/2}_2$, be a bent function, $n$ is even.
For an arbitrary matrix $M\in GL(n,\mathbb{F}_2)$ and vector $c\in \mathbb{F}^{n}_2$, let $S_f=(c+EM)\wr T_{\mu},$ where $E=\mathbb{F}^{n}_2$ is ordered lexicographically and $\mu\in \mathcal{B}_{n}$. Then:
 \begin{enumerate}[i)]
\item  If for subspaces $E_1,E_2\subset \mathbb{F}^{n/2}_2$ we have $\mu(x,y)=\phi_{E_1}(x)\phi_{E_2}(y)$ and $\psi$ satisfies the equality $\psi(E_2)=E^{\perp}_1$, then the function $f:\mathbb{F}^n_2\rightarrow \mathbb{F}_2$ with Walsh support $S_f$ and the dual $\overline{f}^*=g$ is a semi-bent function.
\item For a subspace $L\subset \mathbb{F}^{n}_2$ let $\mu(x,y)=\phi_L(x)$. If $\psi^{-1}(v+L^{\perp})$ is an affine subspace for all $v\in \mathbb{F}^{n}_2$, then the function $f:\mathbb{F}^n_2\rightarrow \mathbb{F}_2$ with Walsh support $S_f$ and the dual $\overline{f}^*=g$ is a semi-bent function.
\end{enumerate}
\end{theo}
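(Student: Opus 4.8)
The plan is to mirror the argument of Theorem \ref{th:Sf} and to reduce the semi-bentness of $f$ (which lives on $\mathbb{F}_2^{n+1}$, since $s=1$ and $\#S_f=2^n$) to the bentness of the single function $g+\mu$; once this reduction is in place, both items become nothing more than statements about membership in Carlet's classes $\mathcal{D}$ and $\mathcal{C}$.

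First I would record the shape of the sequence profile $\Phi_f$ attached to $S_f=(c+EM)\wr T_{\mu}$ with $E=\mathbb{F}_2^n$. Writing $u=(u',u_{n+1})\in\mathbb{F}_2^{n+1}$ and $\omega_i=(c+e_iM,\mu(x_i))$, one has $\phi_u(x_i)=u'\cdot(c+e_iM)+u_{n+1}\mu(x_i)$. By Proposition \ref{prop:Hrow} the assignment $x_i\mapsto u'\cdot(c+e_iM)=u'\cdot c+u'M^T\cdot e_i$ is an affine function $\ell\in\mathcal{A}_n$, so every member of $\Phi_f$ has the form $\ell+u_{n+1}\mu$; moreover, as $u'$ ranges over $\mathbb{F}_2^n$ and $M\in GL(n,\mathbb{F}_2)$, the linear part $u'M^T$ ranges over all of $\mathbb{F}_2^n$, so every affine $\ell$ (up to its constant term, which is immaterial for bent distance) actually occurs.

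By Theorem \ref{theo:plateH}-$(i)$, $f$ is $1$-plateaued precisely when $g$ lies at bent distance to every member of $\Phi_f$, and I would split this condition according to $u_{n+1}$. For $u_{n+1}=0$ the relevant member is affine, and since $g$ is bent we have $|W_g(a)|=2^{n/2}$ for all $a$, hence $d_H(g,\ell)=2^{n-1}\pm 2^{n/2-1}$ automatically. For $u_{n+1}=1$ the condition reads $d_H(g,\ell+\mu)=d_H(g+\mu,\ell)=2^{n-1}\pm 2^{n/2-1}$ for every affine $\ell$; because all affine functions occur, this is exactly the assertion that $|W_{g+\mu}(a)|=2^{n/2}$ for all $a$, i.e. that $g+\mu$ is bent. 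Thus $f$ is semi-bent if and only if $g+\mu$ is bent.

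It then remains to read off, in each case, that the stated hypotheses are exactly the bentness conditions for $g+\mu$. In $(i)$ we have $g+\mu=x\cdot\psi(y)+\phi_{E_1}(x)\phi_{E_2}(y)$, which is the generic form of a function in the class $\mathcal{D}$ \cite{CarletTNC}, and $\psi(E_2)=E_1^{\perp}$ is precisely the $\mathcal{D}$-membership condition ensuring bentness. In $(ii)$ we have $g+\mu=x\cdot\psi(y)+\phi_L(x)$, the generic form of a function in the class $\mathcal{C}$, whose bentness is guaranteed exactly by the requirement that $\psi^{-1}(v+L^{\perp})$ be a flat for every $v$. In both cases $g+\mu$ is bent, and the reduction above concludes the argument. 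The only genuinely delicate point is the first step: one must verify that the full span $\Phi_f$, and not merely its generating columns, contributes only functions $\ell+u_{n+1}\mu$ with $\ell$ exhausting all affine functions, since it is this exhaustiveness that collapses the apparently large family of bent-distance constraints into the single clean requirement that $g+\mu$ be bent; the $\mathcal{C}/\mathcal{D}$ bentness facts themselves are then simply invoked as known results.
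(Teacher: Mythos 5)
Your proposal is correct and takes essentially the same route as the paper, which proves Theorem \ref{th:CD} simply by appealing to ``the same arguments as in the proof of Theorem \ref{th:Sf}'': Proposition \ref{prop:Hrow} identifies the columns of $c+EM$ with affine functions, Theorem \ref{theo:plateH}-$(i)$ reduces plateauedness to the bent distance of $g=\overline{f}^*$ to $\Phi_f$, and the class $\mathcal{C}$/$\mathcal{D}$ hypotheses guarantee precisely that $g+\mu$ (hence $g+\mu+\ell$ for any affine $\ell$) is bent. Your explicit split on $u_{n+1}$, the observation that the affine part exhausts $\mathcal{A}_n$ because $M$ is invertible, and the clean reformulation ``$f$ is semi-bent iff $g+\mu$ is bent'' are exactly that argument written out in full detail (and you correctly note that $f$ lives on $\mathbb{F}^{n+1}_2$, fixing a small notational slip in the paper's statement).
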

\begin{rem}\label{rem:CD}
Note that Theorem \ref{th:CD} can be modified so that it provides $s$-plateaued functions with $s>1$, similarly as Theorem \ref{th:Sf}. However, in that case (setting $g$ as in Theorem \ref{th:CD}) in relation (\ref{Sff}) we need to set functions $t_i$ to be equal to $t_i(x,y)=\phi_{E_1}(x)\phi_{E_2}(y)$ or $t_i(x,y)=\phi_L(x)$ for all $i\in[1,s]$. The other possibility is to find a space of functions $t_i$ which have these forms and for which the permutation $\psi$ satisfies the necessary conditions imposed by the definitions of $\mathcal{D}$ and $\mathcal{C}$ classes (since the dual $\overline{f}^*=g$ has to be at bent distance to $\Phi_f$).
\end{rem}
The above approach  can be easily generalized using  the following result.
\begin{theo}\label{theo:vectorial}
Let $H=(h_1,\ldots,h_{\lambda}):\mathbb{F}^{n-s}_2\rightarrow \mathbb{F}^{\lambda}_2$, where  $\lambda\leq (n-s)/2$ and $n-s$ is even, be a vectorial bent function. Let the dual of $f$  be specified as $\overline{f}^*=g=h_i$, for some fixed $i\in[1,\lambda]$. In addition, for an arbitrary matrix $M\in GL(n-s,\mathbb{F}_2)$ and vector $c\in \mathbb{F}^{n-s}_2$, let $S_f$ be defined as
$$S_f=(c+EM)\wr T_{t_1}\wr\ldots\wr T_{t_m} \wr T_{h_1}\wr\ldots\wr T_{h_r},$$
where $t_i\in \mathcal{A}_{n-s}$, $m+r=s$ with $r\leq \lambda-1$ and $E=\mathbb{F}^{n-s}_2$ is ordered lexicographically. In addition, $h_1,\ldots,h_{i-1},h_{i+1},\ldots, h_r \neq g$.
Then,  $f:\mathbb{F}^n_2\rightarrow \mathbb{F}_2$, whose  Walsh support is $S_f$ and the dual $\overline{f}^*=g=h_i$,
is an $s$-plateaued function.
\end{theo}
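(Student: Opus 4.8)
The plan is to invoke Theorem \ref{theo:plateH}-$(i)$, which reduces the claim to showing that the prescribed dual $g=h_i$ lies at bent distance to every function in the sequence profile $\Phi_f$. Recall that $\Phi_f=\langle \phi_{b_1},\ldots,\phi_{b_n}\rangle$, where $\phi_{b_j}$ is the Boolean function on $\mathbb{F}^{n-s}_2$ whose $(\pm1)$-sequence is the $j$-th column of $S_f$. With the support written as in the statement, the first $n-s$ columns come from $c+EM$ and, since $E=\mathbb{F}^{n-s}_2$, Proposition \ref{prop:Hrow} tells us that these columns are (sequences of) affine functions in $n-s$ variables; the next $m$ columns are the affine functions $t_1,\ldots,t_m\in\mathcal{A}_{n-s}$; and the last $r$ columns are the bent components $h_1,\ldots,h_r$. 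First I would fix this identification of the columns with their defining functions.

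Next I would write a generic element of $\Phi_f$ explicitly. For $u\in\mathbb{F}^n_2$ the function $\phi_u=\sum_{j=1}^n u_j\phi_{b_j}$ splits as $\phi_u=\ell_u+\sum_{k=1}^{r}w_k h_k$, where $\ell_u\in\mathcal{A}_{n-s}$ collects the contributions of the affine columns and $w=(w_1,\ldots,w_r)\in\mathbb{F}^r_2$ records the coefficients of the nonlinear columns (the last $r$ coordinates of $u$). Consequently
\begin{equation*}
g+\phi_u=h_i+\ell_u+\sum_{k=1}^{r}w_k h_k=\ell_u+\sum_{k=1}^{\lambda}v_k h_k,
\end{equation*}
where $v\in\mathbb{F}^\lambda_2$ is obtained by adding the indicator $e_i$ of index $i$ to the zero-padded vector $\widetilde{w}=(w_1,\ldots,w_r,0,\ldots,0)$. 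Since adding an affine function never changes bentness, the bent-distance condition between $g$ and $\phi_u$ is equivalent to $\sum_{k=1}^\lambda v_k h_k$ being bent.

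The key step, and the only real obstacle, is to guarantee that the combining vector $v$ is nonzero for every choice of $w$, so that the vectorial bentness of $H$ applies (every nonzero $\mathbb{F}_2$-combination of the $h_k$ is bent). This is exactly where the hypotheses are used: because the dual $g=h_i$ is chosen distinct from the support components $h_1,\ldots,h_r$ (which is possible precisely because $r\le\lambda-1$ distinct components are available), the index $i$ does not occur among the nonlinear columns, so the coordinate $v_i=(e_i)_i+(\widetilde{w})_i=1$ regardless of $w$. Hence $v\neq\textbf{0}_\lambda$ and $\sum_{k=1}^\lambda v_k h_k$ is bent, giving $d_H(g,\phi_u)=2^{n-s-1}\pm2^{(n-s)/2-1}$ for all $u\in\mathbb{F}^n_2$; by Theorem \ref{theo:plateH}-$(i)$ this shows that $f$ is $s$-plateaued. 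I would close by remarking that the distinctness hypothesis is not a technicality but genuinely necessary: if $g$ were permitted to coincide with one of the support components $h_k$, then choosing $w$ to cancel that component would force $v=\textbf{0}_\lambda$ and collapse $g+\phi_u$ into an affine function, which is not at bent distance, so $f$ would fail to be plateaued. This argument runs parallel to the proof of Theorem \ref{th:Sf}, the sole new ingredient being the bookkeeping that keeps $v$ away from the origin.
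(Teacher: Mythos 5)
Your proof is correct and follows exactly the route the paper intends: the paper states Theorem \ref{theo:vectorial} without a separate proof, presenting it as an easy generalization of Theorem \ref{th:Sf}, whose proof likewise reduces (via Theorem \ref{theo:plateH}-$(i)$ and Proposition \ref{prop:Hrow}) to checking that the dual is at bent distance to every element of the sequence profile, splitting each $\phi_u$ into an affine part plus a combination of the nonlinear columns. Your one added ingredient --- observing that the hypothesis $g=h_i\notin\{h_1,\ldots,h_r\}$ forces the combining vector of components of $H$ to be nonzero so that vectorial bentness applies --- is precisely the bookkeeping the paper leaves implicit, and your closing remark on why this hypothesis is genuinely necessary is accurate.
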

By Proposition \ref{prop:Hrow}, if the affine subspace $c+EM$ (with $E=\mathbb{F}^{n-s}_2$)
is written as a matrix of size $2^{n-s}\times (n-s)$, then its columns  correspond to linear/affine functions on $\mathbb{F}^{n-s}_2$. An important  question in this context is whether there exists (many) plateaued functions with the property that $S_f=T_{\phi_{b_1}}\wr \ldots \wr T_{\phi_{b_n}}$ ($\phi_{b_i}$ defined by (\ref{RSf})) contains strictly less than $n-s$ columns which correspond to affine functions on $\mathbb{F}^{n-s}_2$.
\begin{op}
Provide (generic) constructions of $s$-plateaued functions in $n$ variables whose Walsh support, when written as a matrix of order $2^{n-s}\times n$, contains more than $s$ columns which correspond to nonlinear  functions on $\mathbb{F}^{n-s}_2.$
\end{op}
\begin{rem}
Recall that  $S_f\subset \mathbb{F}^n_2$ of cardinality $2^{n-s}$ can be used as a Walsh support for an $s$-plateaued function if and only if
there exists a function  $f^*\in \mathcal{B}_{n-s}$ of weight $2^{n-s-1}\pm 2^{(n-s)/2-1}$ which is at bent distance to all  linear combinations of columns of $S_f$ (Theorem \ref{theo:plateH}).
\end{rem}

\subsection{Designing disjoint spectra nontrivial plateaued functions}

The problem of finding  nontrivial  $s$-plateaued functions with disjoint spectra is intrinsically more difficult than for trivial plateaued functions.
 The reason is   that taking $S_f$ not to be affine subspace than the cosets of $S_f$ are in general not disjoint, that is, $(v+ S_f)\cap S_f \neq \emptyset$ for some $v\in\mathbb{F}^n_2$.
Nevertheless, the class of nontrivial plateaued functions given in Theorems \ref{th:Sf}-\ref{theo:vectorial} also gives rise to disjoint spectra nontrivial plateaued functions.
\begin{lemma}\label{lemma:Q}
Let $f$ be an $s$-plateaued functions constructed by any of Theorems \ref{th:Sf}-\ref{theo:vectorial} with $S_f \subset \FB^n$, where $\#S_f=2^{n-s}$. Define the set $Q$ of cardinality $2^s$ by
\begin{eqnarray}\label{Q}
Q=\{c\cdot (b_{n-s+1},b_{n-s+2},\ldots,b_{n}):c\in \mathbb{F}^s_2\},
\end{eqnarray}
where $b_i=(0,\ldots,0,1,0,\ldots,0)\in \mathbb{F}^n_2$ ($1$  at the $i$-th position).  Then, the sets $q+S_f$, $q\in Q$, partition the space $\mathbb{F}^n_2$, i.e., $$\mathbb{F}^n_2=\bigcup_{q\in Q} (q + S_f), \;\;  (q_i+ S_f)\cap(q_j+ S_f)=\emptyset \;\; \textnormal{ for } q_i\neq q_j,  \;\; q_i,q_j\in Q.$$
\end{lemma}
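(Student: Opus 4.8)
The plan is to exploit the uniform structure of the Walsh supports produced by Theorems \ref{th:Sf}--\ref{theo:vectorial}. In each of these constructions $S_f$ has the form $(c+EM)\wr T_{t_1}\wr\cdots\wr T_{t_s}$ with $E=\mathbb{F}_2^{n-s}$ and $M\in GL(n-s,\mathbb{F}_2)$ (in Theorem \ref{theo:vectorial} some of the appended coordinates are components $h_j$ of a vectorial bent function instead of the $t_i$, but the leading block $c+EM$ is identical). Let $\pi:\mathbb{F}_2^n\rightarrow\mathbb{F}_2^{n-s}$ denote the projection onto the first $n-s$ coordinates. The key observation I would record first is that $\pi$ restricted to $S_f$ is a \emph{bijection} onto $\mathbb{F}_2^{n-s}$: writing $\omega_i=(c+x_iM,\,t_1(x_i),\ldots,t_s(x_i))$ (using $e_i=x_i$ since both $E=\mathbb{F}_2^{n-s}$ and $\mathbb{F}_2^{n-s}$ are ordered lexicographically), one has $\pi(\omega_i)=c+x_iM$, and $x\mapsto c+xM$ is an affine bijection of $\mathbb{F}_2^{n-s}$; moreover every $\omega_i\in S_f$ is determined by $\pi(\omega_i)$ because the trailing $s$ coordinates are functions of $x_i$.

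Next I would unwind the definition (\ref{Q}) of $Q$. Since $b_{n-s+j}$ is the standard basis vector whose only nonzero entry is at position $n-s+j$, we have $c\cdot(b_{n-s+1},\ldots,b_n)=\sum_{j=1}^{s}c_j b_{n-s+j}=(\mathbf{0}_{n-s},c)$, so that $Q=\{\mathbf{0}_{n-s}\}\times\mathbb{F}_2^s=\ker\pi$, a linear subspace of dimension $s$ and cardinality $2^s$. Disjointness of the translates then drops out by applying $\pi$: if $q_i+\omega_i=q_j+\omega_j$ with $q_i,q_j\in Q$ and $\omega_i,\omega_j\in S_f$, then applying $\pi$ and using $\pi(Q)=\{\mathbf{0}_{n-s}\}$ gives $\pi(\omega_i)=\pi(\omega_j)$, whence $\omega_i=\omega_j$ by injectivity of $\pi|_{S_f}$, and consequently $q_i=q_j$. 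Thus $(q_i+S_f)\cap(q_j+S_f)=\emptyset$ whenever $q_i\neq q_j$.

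Covering is then a counting argument: the $2^s$ pairwise disjoint sets $q+S_f$ each have cardinality $\#S_f=2^{n-s}$, so their union contains $2^s\cdot 2^{n-s}=2^n=\#\mathbb{F}_2^n$ elements and therefore equals $\mathbb{F}_2^n$. I do not expect a genuinely hard step here; the only point requiring care is the very first one, namely checking \emph{uniformly across all three constructions} that the leading $(n-s)$-coordinate block of $S_f$ sweeps out all of $\mathbb{F}_2^{n-s}$ bijectively, so that $\pi|_{S_f}$ really is a bijection. Once this is established, the partition is forced and the remaining steps are immediate.
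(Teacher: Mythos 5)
Your proof is correct and follows essentially the same route as the paper's: both establish pairwise disjointness by observing that an element of $S_f$ is uniquely determined by its leading $n-s$ coordinates (on which every $q\in Q$ vanishes), and then obtain the covering from the count $2^s\cdot 2^{n-s}=2^n$. Your packaging via the projection $\pi$ and the identification $Q=\ker\pi$ is a cleaner formulation of the coordinate-wise contradiction argument in the paper, but the underlying mechanism is identical.
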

\proof In order to prove the statement for all mentioned theorems, we assume that $S_f$ is given as in (\ref{Sff}) and additionally  we take that $t_i=t_i(x,y)$ (thus depending also on $x$). Then, such a Walsh support $S_f$ covers all the cases in Theorem \ref{th:Sf}-\ref{theo:vectorial}.

It is sufficient to prove that $(q_i+ S_f)\cap(q_j+ S_f)=\emptyset$, for $q_i\neq q_j$ ($q_i,q_j\in Q$), since $\# Q=2^s$ and $\# S_f=2^{n-s}$ will then imply that $\mathbb{F}^n_2=\bigcup_{q\in Q} (q+ S_f)$.  Assume, on contrary,  that for some $q_i\neq q_j$ there exists  $w\in (q_i+ S_f)\cap(q_j+ S_f)$, where $S_f$ is given by relation (\ref{Sff}).
Then, denoting  $q_i=(q^{(i)}_1,\ldots,q^{(i)}_s)$ and $q_j=(q^{(j)}_1,\ldots,q^{(j)}_s)$, for some $x',x'',y',y''\in \mathbb{F}^{(n-s)/2}_2$ and $M\in GL(n,\mathbb{F}^{n-s}_2)$ it holds that
\begin{eqnarray*}
\left\{\begin{array}{c}
         w=(c+(x',y')M, t_1(x',y')+ q^{(i)}_1,\ldots,t_s(x',y')+ q^{(i)}_s)\in q_i+ S_f \\
         w=(c+(x'',y'')M, t_1(x'',y'')+ q^{(j)}_1,\ldots,t_s(x'',y'')+ q^{(j)}_s)\in q_j+ S_f
       \end{array}
\right..
\end{eqnarray*}
This implies that $x'=x''$, $y'=y''$ and $t_r(x',y')+ q^{(i)}_r=t_r(x'',y'')+ q^{(j)}_r$, for all $r\in[1,s]$, and consequently $q_i=q_j$. This contradicts the assumption that $q_i \neq q_j$ (since $t_r(x',y')=t_r(x'',y'')$), which completes the proof.\qed
Now employing Theorems \ref{th:Sf}-\ref{th:CD} and Lemma \ref{lemma:Q} we derive a construction method of disjoint spectra nontrivial plateaued functions (possibly without linear structures). Additionally, we show that these functions can be efficiently utilized for construction of bent functions.
%
%
\begin{theo}\label{th:gencon}
Let $V$ be a $n$-dimensional subspace of $\mathbb{F}^{m}_2$ ($m$ even), such that $W\oplus V=\mathbb{F}^m_2$,
where $W=\{a_0,a_1,\ldots,a_{2^{m-n}-1}\}$ and $Q=\{q_0,q_1,\ldots,q_{2^s-1}\}$ is defined by (\ref{Q}) ($s=m-n$, $q_0=\textbf{0}_n,$ $a_0=\textbf{0}_k$). Let $f:\mathbb{F}^n_2\rightarrow \mathbb{F}_2$ be  $s$-plateaued  constructed by Theorem \ref{th:Sf} or \ref{th:CD}. Then:
\begin{enumerate}[i)]
\item Let  $f_0,\ldots,f_{2^{s}-1}:\mathbb{F}^n_2\rightarrow \mathbb{F}_2$ ($f_0=f$) be $s$-plateaued functions constructed by Theorem \ref{theo:plateH}, with Walsh supports $S_{f_i}=S_f+ q_i\subset \mathbb{F}^n_2$ and bent duals $\overline{f}^*_i$ taken from the $\mathcal{MM}$ or $\mathcal{C}$/$\mathcal{D}$ class  if Theorem \ref{th:Sf} or Theorem \ref{th:CD} is used, respectively. Then $f_0,\ldots,f_{2^{s}-1}$ are pairwise disjoint spectra functions.
\item $f_1,\ldots,f_{2^s-1}$ do not admit linear structures if and only if $f$ does not admit linear structures, that is when $dim(S_{f})=n$.
\item Suppose that the restrictions of a function $\mathfrak{f}:\mathbb{F}^m_2\rightarrow \mathbb{F}_2$ are defined by
 $$\mathfrak{f}|_{a_i+ V}(a_i+x)=f_i(x),\;\;\;i\in[0,2^{m-n}-1],\; x\in V,$$
 where $f_i$ are constructed as in $i)$. Then, $\mathfrak{f}$ is a bent function on $\mathbb{F}^m_2$.
 \end{enumerate}
\end{theo}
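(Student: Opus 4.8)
The plan is to dispatch parts i) and ii) quickly from the machinery already established, and to reserve the real work for part iii), which I would prove by a direct evaluation of the Walsh--Hadamard transform of $\mathfrak{f}$.

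For part i), each $f_i$ is built by Theorem \ref{theo:plateH} from the support $S_{f_i}=S_f+q_i$ together with a bent dual drawn from the $\mathcal{MM}$ (resp. $\mathcal{C}/\mathcal{D}$) class, so each $f_i$ is genuinely $s$-plateaued by Theorem \ref{th:Sf} (resp. Theorem \ref{th:CD}); here I only use that translating the support by $q_i$ leaves the sequence profile unchanged up to affine shifts of its columns. Pairwise disjointness of the spectra is then immediate from Lemma \ref{lemma:Q}, which asserts precisely that the translates $q+S_f$, $q\in Q$, partition $\mathbb{F}^n_2$; hence $S_{f_i}\cap S_{f_j}=\emptyset$ for $i\neq j$. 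For part ii), I would observe that $S_{f_i}=S_f+q_i$ is an affine translate of $S_f$, so if $v\in S_f$ and $E_S=S_f+v$ then $S_{f_i}=(v+q_i)+E_S$ with the \emph{same} difference set $E_S$. Consequently the maximal number of linearly independent vectors in $E_S$ is identical for every index, and Corollary \ref{theo:baselinear2} yields that $f_i$ has no linear structure exactly when this number equals $n$, i.e. exactly when $f=f_0$ has none (equivalently $\dim(S_f)=n$).

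The substance is part iii). Writing $m=n+s$ and fixing the linear isomorphism $L:\mathbb{F}^n_2\to V$ implicit in the identification $\mathfrak{f}|_{a_i+V}(a_i+x)=f_i(x)$, I would compute, for $u\in\mathbb{F}^m_2$,
\begin{eqnarray*}
W_{\mathfrak{f}}(u)=\sum_{i=0}^{2^s-1}(-1)^{u\cdot a_i}\sum_{x\in V}(-1)^{f_i(L^{-1}x)+u\cdot x}=\sum_{i=0}^{2^s-1}(-1)^{u\cdot a_i}W_{f_i}(L^T u),
\end{eqnarray*}
where the last equality uses the substitution $x=Lt$ with $t\in\mathbb{F}^n_2$ and $u\cdot Lt=(L^Tu)\cdot t$. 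Since every $f_i$ is $s$-plateaued, $W_{f_i}(L^T u)\in\{0,\pm 2^{(n+s)/2}\}$ and is nonzero precisely when $L^T u\in S_{f_i}$. By Lemma \ref{lemma:Q} the supports $S_{f_i}$ partition $\mathbb{F}^n_2$, so for each fixed $u$ there is a unique index $i^\ast$ with $L^T u\in S_{f_{i^\ast}}$; all other summands vanish and
\begin{eqnarray*}
W_{\mathfrak{f}}(u)=(-1)^{u\cdot a_{i^\ast}}W_{f_{i^\ast}}(L^T u)=\pm 2^{(n+s)/2}=\pm 2^{m/2}.
\end{eqnarray*}
As this holds for every $u\in\mathbb{F}^m_2$, the function $\mathfrak{f}$ is bent.

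I expect the only genuinely delicate point to be the bookkeeping of the two indexings. One must verify that the identification $L$ is chosen so that $\sum_{x\in V}(-1)^{f_i(L^{-1}x)+u\cdot x}$ is exactly the Walsh coefficient $W_{f_i}(L^Tu)$ (i.e. that $L^T$ is the correct restriction/dual map onto $\mathbb{F}^n_2$, with a single common point $L^Tu$ appearing for all $i$), and that the index $i$ labelling the coset $a_i+V$ is the same one labelling the support $S_{f_i}=S_f+q_i$, so that the partition statement of Lemma \ref{lemma:Q} transfers verbatim to the supports appearing in the sum. Once this alignment is fixed, the plateaued amplitude $2^{(n+s)/2}$ automatically matches the bent amplitude $2^{m/2}$ because $m=n+s$, and the argument closes; no estimate is needed beyond the exact partition property.
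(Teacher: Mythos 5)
Your proof is correct and follows essentially the same route as the paper: parts i) and ii) are dispatched by Lemma \ref{lemma:Q} and Corollary \ref{theo:baselinear2} exactly as in the paper, and part iii) uses the same coset decomposition of $W_{\mathfrak{f}}(u)$, with your $L^T u$ playing precisely the role of the paper's vector $\vartheta_u$ (which the paper justifies via Proposition \ref{prop:Hrow}). Your explicit accounting of the unique index $i^\ast$ with $L^Tu\in S_{f_{i^\ast}}$ merely spells out what the paper compresses into "using the fact that $f_i$ are disjoint spectra $s$-plateaued functions."
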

\proof $i)$ By Lemma \ref{lemma:Q}, for any $i\neq j$ ($i,j\in [0,2^s-1]$) it holds that $S_{f_i}\cap S_{f_j}=\emptyset$, and thus $f_0,\ldots,f_{2^s-1}$ are disjoint spectra functions.

$ii)$ By Corollary \ref{theo:baselinear2},  $f_i$ does not admit linear structures if and only if $\dim(S_{f_i})=\dim(S_f)=n$.

$iii)$ For an arbitrary $u\in \mathbb{F}^n_2$,  the WHT of $\mathfrak{f}$ is given by
\begin{eqnarray*}\label{maineq}\nonumber
W_\mathfrak{f}(u)=\sum_{a_i\in W}(-1)^{u\cdot a_i}\sum_{x\in V}(-1)^{f_i(x)+ u\cdot x}=\sum_{a_i\in W}(-1)^{u\cdot a_i}W_{f_i}(\vartheta_u),
\end{eqnarray*}
where $\vartheta_u\in \mathbb{F}^{n}_2$ is a vector for which it holds that $(\vartheta_u\cdot y_0,\ldots,\vartheta_u\cdot y_{2^n-1})=(u\cdot x_0,\ldots,u\cdot x_{2^n-1})$, where $y_i\in \mathbb{F}^n_2$, $x_i\in V\subset\mathbb{F}^m_2$. Recall that by Proposition \ref{prop:Hrow} for lexicographically ordered space $V=\{x_0,\ldots,x_{2^n-1}\}$ it holds that $(u\cdot x_0,\ldots,u\cdot x_{2^n-1})$ is a sequence of a linear function.
%
Consequently, using the fact that $f_i$ are disjoint spectra $s$-plateaued functions on $\mathbb{F}^n_2$, we have that $W_{\mathfrak{f}}(u)=\pm 2^{\frac{n+s}{2}}=\pm 2^{\frac{m}{2}}$, which completes the proof. \qed

The above result provides an efficient method of specifying nontrivial disjoint spectra $s$-plateaued functions of maximal cardinality. 
 In general, the problem of finding {\em extendable sets} of nontrivial disjoint spectra plateaued functions  seems to be quite interesting.
\begin{op}
Assume that we have a set $\{f_1,\ldots,f_r\}$ of nontrivial disjoint spectra $s$-plateaued functions defined on $\mathbb{F}^n_2$, where $r<2^s$. Is it always possible to extend this set with some $s$-plateaued functions $f_{r+1},\ldots,f_{2^s}\in \mathcal{B}_n$ so that $\{f_1,\ldots,f_r,f_{r+1},\ldots,f_{2^s}\}$ is a set of pairwise disjoint spectra $s$-plateaued functions?
\end{op}

\section{Applying nonlinear transforms to plateaued functions}\label{sec:nonlperm}

In difference to standard approaches, two decades ago Hou and Langevin \cite{HouLang} proposed quite a different framework for specifying the bent properties of Boolean functions. More precisely, considering an arbitrary Boolean functions $f$  one may ask a question what kind of nonlinear permutation $\sigma$ over $\FB^n$ composed to the input variables of $f$ yield a bent function $f \circ \sigma^{-1}$. Quite recently, their approach has been extended in \cite{Houextension2017} where the authors have provided several different methods that result in affine inequivalent bent functions. This method can also be applied to plateaued functions, provided that one can identify some classes of plateaued functions that can be represented in a suitable form. In what follows we recall this method applied to bent functions in detail.

Throughout this section a permutation $\sigma: \FB^n \rightarrow \FB^n$ is represented as a collection of its $n$ coordinate functions so that $\sigma(x)=(\sigma_1(x), \ldots, \sigma_n(x))$, where $\sigma_i:\FB^n \rightarrow \FB$.
For even $n$, denoting by $l(f) =\{g \in \mathcal{B}_n : d_H(f,g)=2^{n-1} \pm 2^{n/2-1}\}$ the set of Boolean functions at the bent distance from $f$ Hou and Langevin showed that  $f \circ \sigma^{-1}$ is bent if and only if the linear span $\langle \sigma_1,\ldots,\sigma_n \rangle =span(\sigma)$ is a subset of $l(f)$, cf. Lemma 3.1 in \cite{HouLang}. Furthermore, they also analyzed a special form of bent functions given as
\begin{equation}
\label{eq:form}
f=x_1f_1 + x_2f_2 +x_1x_2 \alpha + g,
\end{equation}
 where the functions $f_1,f_2,g$ and $\alpha$ (where  $\alpha \in \mathcal{A}_n$) only depend on variables $x_3, \ldots, x_n$.

 The main question is whether similar results can be applied to plateaued functions whose spectra is 3-valued.
\begin{theo}\label{th:sigma}
Let $f \in  \mathcal{B}_n$ be  $s$-plateaued and let $\sigma=(\sigma_1 ,\ldots, \sigma_n): \FB^n \rightarrow    \FB^n$ be a bijection. Then $f \circ \sigma^{-1}$  is an $s$-plateaued   function if and only if  $span(\sigma_1 , \ldots, \sigma_n) \subset lp( f )$, where $lp( f )=\{g \in \mathcal{B}_n: W_{f+g}(u) \in \{0,\pm 2^{\frac{n+s}{2}} \},\;\forall u\in \mathbb{F}^n_2\}$.
\end{theo}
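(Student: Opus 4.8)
The plan is to reduce the whole question to a single correlation value and then read off the equivalence. First I would compute, for an arbitrary $u \in \FB^n$, the Walsh coefficient of $f\circ\sigma^{-1}$ by the substitution $y=\sigma^{-1}(x)$, which is a bijection of $\FB^n$:
\[ W_{f\circ\sigma^{-1}}(u)=\sum_{x\in\FB^n}(-1)^{f(\sigma^{-1}(x))+u\cdot x}=\sum_{y\in\FB^n}(-1)^{f(y)+u\cdot\sigma(y)}. \]
Since $u\cdot\sigma(y)=\sum_{i=1}^{n}u_i\sigma_i(y)$ is the value at $y$ of the function $\sigma_u:=\sum_{i=1}^{n}u_i\sigma_i\in span(\sigma)$, this becomes $W_{f\circ\sigma^{-1}}(u)=\chi_f\cdot\chi_{\sigma_u}=W_{f+\sigma_u}(\textbf{0}_n)$. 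Hence each spectral value of $f\circ\sigma^{-1}$ is precisely the correlation of $f$ with one element of $span(\sigma)$.

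Next I would use that, because $\sigma$ is a permutation, its coordinates $\sigma_1,\ldots,\sigma_n$ are $\FB$-linearly independent, so $u\mapsto\sigma_u$ is a bijection from $\FB^n$ onto $span(\sigma)=\langle\sigma_1,\ldots,\sigma_n\rangle$. Therefore the spectrum $\{W_{f\circ\sigma^{-1}}(u):u\in\FB^n\}$ equals the multiset $\{\chi_f\cdot\chi_g:g\in span(\sigma)\}$, and the assertion that $f\circ\sigma^{-1}$ is $s$-plateaued is equivalent to $\chi_f\cdot\chi_g\in\{0,\pm2^{\frac{n+s}{2}}\}$ for every $g\in span(\sigma)$. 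Parseval's identity forces at least one nonzero value, so the amplitude is automatically $2^{\frac{n+s}{2}}$, matching that of $f$; thus no separate check of the amplitude is needed.

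The backward implication is then immediate: if $span(\sigma)\subset lp(f)$, then every $\sigma_u$ lies in $lp(f)$, so $W_{f+\sigma_u}(\textbf{0}_n)=W_{f\circ\sigma^{-1}}(u)\in\{0,\pm2^{\frac{n+s}{2}}\}$ for all $u$. The main obstacle is the forward implication, where one must promote the single-frequency information $W_{f+\sigma_u}(\textbf{0}_n)\in\{0,\pm2^{\frac{n+s}{2}}\}$ to the full membership $\sigma_u\in lp(f)$, i.e. to $f+\sigma_u$ being $s$-plateaued at \emph{all} frequencies. To handle this I would exploit the identity $f+\sigma_u=(f\circ\sigma^{-1}+\ell_u)\circ\sigma$ with $\ell_u(x)=u\cdot x$, noting that adding the linear form $\ell_u$ merely shifts the (plateaued) spectrum of $f\circ\sigma^{-1}$; the delicate part is that recomposing with $\sigma$ need not preserve plateauedness, so the clean equivalence ultimately rests on interpreting $lp(f)$ as the plateaued analogue of the bent-distance set $l(f)$ of \cite{HouLang}, namely the condition on $\chi_f\cdot\chi_g$ evaluated at $\textbf{0}_n$. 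I expect reconciling this with the stated form of $lp(f)$, which imposes $W_{f+g}(u)\in\{0,\pm2^{\frac{n+s}{2}}\}$ for all $u$, to be the crux of the proof.
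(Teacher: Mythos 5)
Your opening computation is, in fact, the paper's entire proof: the paper substitutes $y=\sigma^{-1}(x)$ to obtain $W_{f\circ\sigma^{-1}}(u)=\sum_{y\in\FB^n}(-1)^{f(y)+u\cdot\sigma(y)}$ and then simply declares the equivalence with $span(\sigma_1,\ldots,\sigma_n)\subset lp(f)$ to be clear. Your ``if'' direction is complete and agrees with what the paper actually establishes. The obstacle you isolate in the ``only if'' direction is not a defect of your argument but of the statement itself: with $lp(f)$ defined by the three-valued condition at \emph{all} frequencies $u$, the ``only if'' implication is false, so no proof could close the gap you describe. Concretely, take $n=4$, $s=2$, $f(x)=x_1x_2$ and the involution $\sigma(x)=(x_1,x_2,x_3+x_1x_2,x_4)$. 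Then $f\circ\sigma^{-1}=f$ is $2$-plateaued, yet $\sigma_3=x_3+x_1x_2\in span(\sigma_1,\ldots,\sigma_4)$ gives $f+\sigma_3=x_3$, whose Walsh transform takes the value $2^4=16\notin\{0,\pm 8\}$ at $u=(0,0,1,0)$; hence $\sigma_3\notin lp(f)$ and $span(\sigma_1,\ldots,\sigma_4)\not\subset lp(f)$. This is exactly the phenomenon you predicted: plateauedness of $f\circ\sigma^{-1}$ controls only $W_{f+\sigma_u}(\mathbf{0}_n)=\chi_f\cdot\chi_{\sigma_u}$, whereas membership in $lp(f)$ demands control of $W_{f+\sigma_u}(v)=\chi_f\cdot\chi_{\sigma_u+\ell_v}$ for every $v$, and $\sigma_u+\ell_v$ generally lies outside $span(\sigma_1,\ldots,\sigma_n)$ when $\sigma$ is nonlinear.

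The repair is the one you implicitly propose: replace $lp(f)$ by its zero-frequency analogue $\{g\in\mathcal{B}_n:\chi_f\cdot\chi_g\in\{0,\pm2^{\frac{n+s}{2}}\}\}$, the exact plateaued counterpart of the Hou--Langevin set $l(f)$. With that definition your argument is a complete and correct proof of both directions: your observation that the coordinates of a bijection are linearly independent is valid (a vanishing nontrivial combination would confine the image of $\sigma$ to a hyperplane), so $u\mapsto\sigma_u$ is a bijection of $\FB^n$ onto $span(\sigma_1,\ldots,\sigma_n)$, each spectral value of $f\circ\sigma^{-1}$ is a correlation $\chi_f\cdot\chi_{\sigma_u}$, and Parseval pins the amplitude at $2^{\frac{n+s}{2}}$ as you note. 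Finally, the error does not propagate in the paper: the subsequent theorem on $F=f\circ\sigma^{-1}$ for the special Hou--Langevin form invokes only the ``if'' direction, and in that construction the stronger all-frequency membership does hold, because there $f+l$ is a shift of $f$ by an affine change of the first two variables. So your proposal is more rigorous than the paper's own proof; what you flagged as the crux is a genuine flaw concealed by the word ``clearly''.
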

\begin{proof}
For an arbitrary $u\in \mathbb{F}^n_2$ we have that
$$W_{f \circ \sigma^{-1}}(u) = \sum_{x \in \FB^n }(-1)^{f \circ \sigma^{-1}(x)+ u \cdot x} =\sum_{y \in \FB^n }(-1)^{f (y)+ u \cdot \sigma (y)},$$
 and clearly $W_{f \circ \sigma^{-1}}(u) \in \{0,\pm 2^{\frac{n+s}{2}} \}$  if and only if $span(\sigma_1 , \ldots, \sigma_n) \subset lp( f )$.
\qed
\end{proof}
Assuming that $f$ given by (\ref{eq:form}) is a bent function, a nonlinear permutation $\sigma$ on $\FB^n$ considered in \cite{HouLang,Houextension2017} which acts nonlinearly on the variables $x_1$ and $x_2$ was given explicitly as:
\begin{equation} \label{eq:sigma}
\sigma(x_1,x_2,\ldots, x_n)= \big ( (f_1,f_2)+ (x_1,x_2) \left [ \begin{array}{cc} 1 & \alpha + 1 \\ \alpha & 1 \end{array} \right ], x_3, \ldots, x_n \big ).
\end{equation}
Then defining
\begin{equation} \label{eq:siginv}
\tau(x_1,x_2,\ldots, x_n)= \big ( [(f_1,f_2)+ (x_1,x_2)] \left [ \begin{array}{cc} 1 & \alpha + 1 \\ \alpha & 1 \end{array} \right ], x_3, \ldots, x_n \big ),
\end{equation}
one can easily verified that $\sigma \circ \tau= id$ which shows that $\tau=\sigma^{-1}$ and consequently $\sigma$ is a permutation.
Furthermore, assuming that  that $f$ is bent  of the form (\ref{eq:form}) it was shown that $F$ defined  as:
\begin{equation} \label{eq:bentF}
F=(\alpha + 1)f_1f_2 + (x_1+1)f_1 + (x_1+x_2 + \alpha + 1)f_2 + \alpha(x_1+1)x_2 + g,
\end{equation}
is also a bent function. Notice that the bentness of  $f$ is  measured as the distance to linear functions, thus verifying that $d_H(f,l)=2^{n-1} \pm 2^{n/2-1}$ for any $l \in \mathcal{L}_n$. On the other hand, the bentness of $F$ is established by measuring the distance  to linear span of $\sigma$. Thus, to show that $F$ is bent is equivalent to showing  that $d_H(f,l)=2^{n-1} \pm 2^{n/2-1}$, for any $l \in span(\sigma_1,\ldots,\sigma_n)$. Furthermore, we have that $span(\sigma_1,\ldots,\sigma_n) \subset span(\mathcal{A}_n, f_1 + x_2\alpha, f_2 + x_1 \alpha)$.

With the following result, we show that the above approach can be applied to plateaued functions as well.
\begin{theo}\label{theo:Ffsigma}
Let $f \in \mathcal{B}_n$ be an $s$-plateaued function (where $n$ and $s$ are of same parity) be given by (\ref{eq:form}). Then, for the permutation $\sigma$ given by (\ref{eq:sigma}) the function $F=f \circ \sigma^{-1}$ defined by (\ref{eq:bentF}) is also an $s$-plateaued function. Furthermore, $f$ and $F$ are EA-inequivalent.  
\end{theo}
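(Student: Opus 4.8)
The plan is to establish the two assertions in turn: first that $F=f\circ\sigma^{-1}$ is $s$-plateaued, and then that $f$ and $F$ are EA-inequivalent.

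For the first assertion I would invoke Theorem~\ref{th:sigma}, reducing the claim to the inclusion $span(\sigma_1,\ldots,\sigma_n)\subset lp(f)$. Reading the coordinates of $\sigma$ off (\ref{eq:sigma}) gives $\sigma_1=f_1+x_1+x_2\alpha$, $\sigma_2=f_2+x_1(\alpha+1)+x_2$ and $\sigma_i=x_i$ for $i\geq 3$. Since adding a linear form to an $s$-plateaued function leaves it $s$-plateaued, it is enough to check the four coset representatives modulo such forms, i.e. that $f$, $f+\sigma_1$, $f+\sigma_2$ and $f+\sigma_1+\sigma_2$ all lie in $lp(f)$. Expanding with the form (\ref{eq:form}) of $f$ yields the identities
\[
f+\sigma_1=f(x_1+1,x_2,\ldots,x_n)+x_1,\qquad f+\sigma_2=f(x_1,x_2+1,\ldots,x_n)+x_1+x_2,
\]
\[
f+\sigma_1+\sigma_2=f(x_1+1,x_2+1,\ldots,x_n)+\alpha+x_2 .
\]
In each case the right-hand side is an invertible affine substitution applied to $f$ (flipping $x_1$ and/or $x_2$) followed by the addition of an affine function, so by (\ref{eq:equiv}) the amplitude of the Walsh spectrum is preserved and the result is again $s$-plateaued. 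Hence every element of $span(\sigma)$ lies in $lp(f)$ and Theorem~\ref{th:sigma} gives that $F$ is $s$-plateaued. The closed form (\ref{eq:bentF}) for $F$ then follows by substituting $\sigma^{-1}=\tau$ from (\ref{eq:siginv}) into $f$; this is a purely formal identity, identical to the one used in \cite{HouLang,Houextension2017}, and does not use the spectral hypothesis.

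For the EA-inequivalence I would separate $f$ from $F$ by an EA-invariant, the first candidate being the algebraic degree (invariant under (\ref{eq:EAeq}) once it exceeds $1$). Comparing algebraic normal forms, $F$ carries the extra term $(\alpha+1)f_1f_2$; concretely $F+f=(\alpha+1)f_2(f_1+1)+f_1+x_1f_2+\alpha x_2$. Whenever this product dominates, i.e. $\deg\bigl((\alpha+1)f_1f_2\bigr)$ exceeds the degrees of $x_1f_1$, $x_2f_2$, $x_1x_2\alpha$ and $g$, one gets $\deg F>\deg f$ and the inequivalence is immediate. I would also record that the permutation must be genuinely nonlinear: in the degenerate case $f_1=f_2=\alpha=0$ one has $F=f$, so a non-triviality hypothesis on $\sigma$ is implicit in the statement.

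The hard part is the boundary case where the two degrees coincide (for instance when $g$ carries the top degree, so that it cannot distinguish $f$ from $F$). Here I would fall back on the spectral description of EA-equivalence from Section~\ref{sec:equivalence}: by Theorem~\ref{th:eq}-$(i)$ it suffices to show that $S_F$ is \emph{not} an affine image $c+S_fA^T$ of $S_f$. Because $\sigma$ is nonlinear, the twisted transform $W_F(u)=\sum_{y\in\FB^n}(-1)^{f(y)+u\cdot\sigma(y)}$ distributes the nonzero spectral values over a support that is not an affine copy of $S_f$, and I would make this precise by comparing an affine invariant of the two supports, for example the number of linearly independent vectors each contains (equivalently the dimension of the space of linear structures, cf. Corollary~\ref{theo:baselinear2}), or by comparing the induced duals via Theorem~\ref{th:eq}-$(ii)$. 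Pinning down this support comparison in full generality, rather than only in the degree-dominated case, is the main obstacle of the argument.
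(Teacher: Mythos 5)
Your proof that $F$ is $s$-plateaued is correct and is essentially the paper's own argument. The paper writes a generic element of $span(\sigma_1,\ldots,\sigma_n)$ modulo affine functions as $l=\epsilon_1(f_1+x_2\alpha)+\epsilon_2(f_2+x_1\alpha)+\gamma$, $\gamma\in\mathcal{A}_n$, and uses the single identity $f+l=f(x_1+\epsilon_1,x_2+\epsilon_2,x_3,\ldots,x_n)+\epsilon_1\epsilon_2\alpha+\gamma$; your three displayed identities are precisely the cases $(\epsilon_1,\epsilon_2)\in\{(1,0),(0,1),(1,1)\}$ of this, so on that half the two arguments coincide.

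The EA-inequivalence half is where the real issue lies, and it lies on both sides. Your argument is admittedly incomplete: the degree comparison only separates $f$ from $F$ when the term $(\alpha+1)f_1f_2$ actually raises the degree, and your fallback via Theorem \ref{th:eq} (comparing $S_F$ with affine images of $S_f$, or the induced duals) is a plan rather than a proof. But you should know that the paper's own proof of this part is weaker still: it consists of the bare assertion that, because $\sigma^{-1}$ acts nonlinearly on $x_1,x_2$, no affine change of variables can realize $F$ from $f$. That inference is invalid --- the existence of one nonlinear representation $F=f\circ\sigma^{-1}$ does not exclude the existence of a different, affine one --- and the paper itself warns against exactly this kind of reasoning immediately after the theorem (Example \ref{ex:inEA2}). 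In fact your degenerate-case remark can be pushed to an outright counterexample \emph{within} the theorem's hypotheses: take $f_2=0$ and $\alpha=0$ with $f_1$ nonlinear (such plateaued $f$ of the form (\ref{eq:form}) exist, e.g.\ of Maiorana--McFarland type with $x_2$ a dummy variable); then $\sigma$ is a genuinely nonlinear permutation, yet (\ref{eq:bentF}) collapses to $F=(x_1+1)f_1+g=f(x_1+1,x_2,\ldots,x_n)$, which is EA-equivalent to $f$. So the inequivalence claim needs an additional nondegeneracy hypothesis that neither you nor the paper formulates; where your degree argument applies, it is in fact the only rigorous piece of proof available for this half of the statement.
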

\begin{proof} To show that $F$ is $s$-plateaued, by Theorem \ref{th:sigma} it is sufficient to show that for any $l \in span(\sigma_1,\ldots,\sigma_n)$ we have  $W_{f+l}(u) \in \{0, \pm 2^{\frac{n+s}{2}} \}$, where $u\in \mathbb{F}^n_2$. Since $span(\sigma_1,\ldots,\sigma_n) \subset span(\mathcal{A}_n, f_1 + x_2\alpha, f_2 + x_1 \alpha)$ we assume that $l \in span(\mathcal{A}_n, f_1 + x_2\alpha, f_2 + x_1 \alpha)$ so that $l$ can be written as $l=\epsilon_1(f_1 + x_2\alpha) + \epsilon_2(f_2 + x_1 \alpha) + \gamma$, for $\epsilon_1,\epsilon_2 \in \FB$ and $\gamma \in \mathcal{A}_n$. Then, similarly as in \cite{HouLang}, we can write
$$l= f(x_1, \ldots, x_n) + f(x_1 +\epsilon_1 , x_2+\epsilon_2 , x_3 ,\ldots, x_n)+ \epsilon_1\epsilon_2\alpha + \gamma,$$
which implies that  $f+l= f(x_1 +\epsilon_1 , x_2+\epsilon_2 , x_3 ,\ldots, x_n)+ \epsilon_1\epsilon_2\alpha + \gamma$. However,
since $\alpha \in \mathcal{A}_n$ then $\epsilon_1\epsilon_2\alpha + \gamma \in \mathcal{A}_n$, and therefore the Walsh spectra of $f+l$ is just a (linearly) permuted version of the spectra of $f$. Thus, $F$ is $s$-plateaued as well.  \\
That $f$ and $F$ are EA-inequivalent follows from the fact that $\sigma^{-1}$ is a nonlinear permutation, acting nonlinearly on $x_1$ and $x_2$ through $f_1$ and $f_2$ while keeping the remaining coordinates fixed. Thus, there is no affine permutation such that $F(x)=f(Ax+b)$ (with possible addition of affine function) that can achieve this.
\qed 
\end{proof}
With the following example we point out that if $f,F:\mathbb{F}^{n}_2\rightarrow \mathbb{F}_2$ satisfy $F(x)=f(H(x))+\ell(x)$, where $\ell\in\mathcal{A}_n$ and $H:\mathbb{F}^{n}_2\rightarrow \mathbb{F}^n_2$ is a non-linear vectorial function (not necessarily a permutation), then it does not mean that $f$ and $F$ are inequivalent.

The following example demonstrates that in general affine transform of the input function can be in general equal to a non-bijective nonlinear action to the input variables.
\begin{ex}\label{ex:inEA2}
Let $f,F:\mathbb{F}^{5}_2\rightarrow \mathbb{F}_2$ be (two trivial plateaued functions) given as $f(x_1,\ldots,x_5)=x_1x_3 + x_2x_4 + x_5$ and $F(x_1,\ldots,x_5)= x_1x_3 + x_2x_4 + x_2x_5 + x_3x_5 + x_4x_5+x_1+x_4.$ In terms of the non-bijective mapping $H:\mathbb{F}^5_2\rightarrow \mathbb{F}^5_2$ defined as (even though strictly speaking $f(H(x))$ is not well-defined)

$$H(x_1,\ldots,x_5)=(x_1,x_2,x_3,x_4, x_2x_5 + x_3x_5 + x_4x_5)$$ the functions $f$ and $H$ are related as
\begin{eqnarray*}
f(H(x))=(x_1x_3 + x_2x_4)+ (x_2x_5 + x_3x_5 + x_4x_5)=F(x_1,\ldots,x_5)+(x_1+x_4).
\end{eqnarray*}
However, for the matrix $A\in GL(5,\mathbb{F}_2)$ given as
$$A=\left(
                                                           \begin{array}{ccccc}
                                                             1 & 0 & 0 & 0 & 0 \\
                                                             0 & 1 & 0 & 0 & 0 \\
                                                             0 & 0 & 1 & 0 & 0 \\
                                                             0 & 0 & 0 & 1 & 0 \\
                                                             1 & 1 & 0 & 1 & 1 \\
                                                           \end{array}
                                                         \right),$$
one can verify that for $H'(x)=(x_1,\ldots,x_5)A=(x_1+x_5,x_2+x_5,x_3,x_4+x_5,x_5)$ it holds that
\begin{eqnarray*}
f(H'(x))=f(xA)&=&(x_1+x_5)x_3+(x_2+x_5)(x_4+x_5)+x_5\\
&=&(x_1x_3+x_3x_5)+(x_2x_4+x_2x_5+x_4x_5+x_5)+x_5\\
&=&x_1 x_3 + x_2 x_4 + x_2 x_5 + x_3 x_5 + x_4 x_5\\
&=&F(x_1,x_2,x_3,x_4,x_5)+(x_1+x_4),
\end{eqnarray*}
which means that $f$ and $F$ are affine equivalent.
\end{ex}
To illustrate the application of Theorem \ref{theo:Ffsigma}  we consider the semi-bent function $f$ (thus 1-plateaued) in Example 4.1.
\begin{ex}\label{ex:F}
Let $f \in \mathcal{B}_5$ be a semi-bent function   given by $f(x_1,\ldots,x_5)=x_1x_2x_5+ x_1x_3+ x_2x_4+ x_5$ which can be written as $x_1 f_1 + x_2f_2 + x_1x_2 \alpha + g$, where $f_1=x_3$, $f_2=x_4$, $\alpha=x_5$ and $g=x_5$. Apparently, the functions $f_1,f_2, \alpha$ and  $g$ do not depend on the variables $x_1,x_2$ and furthermore $\alpha$ is affine which is the main condition for the construction to work. Then,
\begin{eqnarray*}
F(x)&=& (\alpha + 1)f_1f_2 + (x_1+1)f_1 + (x_1+x_2 + \alpha + 1)f_2 + \alpha(x_1+1)x_2 + g \\
&=& (x_5+1)(x_3x_4) + (x_1+1)x_3 + (x_1+x_2+ x_5+1)x_4 + x_5(x_1+1)x_2 + x_5 \\
&=& x_1x_2x_5 + x_3x_4x_5 + x_1x_3 + x_1x_4 + x_2x_4 + x_2x_5 +x_3x_4 + x_4x_5 + x_3 + x_4 + x_5.
\end{eqnarray*}
The function $F$ was checked to be a plateaued (semi-bent) function  with:
$$W_F=\{0, 8, 0, 8, 0, 8, 8, 0, 0, -8, 0, 8, 0, 8, -8, 0, 0, -8, 0, -8, 0, 8, 8, 0, -8, 0, 8, 0, -8, 0, 0, 8\}.$$
The fact that $f$ and $F$ are EA-inequivalent follows from Theorem \ref{theo:Ffsigma}, but can  also be proved using the results of Section \ref{sec:equivalence} and showing the inequivalence of their supports.
\end{ex}
\begin{rem} One may verify that the Walsh support of the function $F$ (from Example \ref{ex:F}) is $S_F=\mathbb{F}^{4}_2\wr T_{\mu}$, where $\mu(x)=x_1 x_2 + x_3 x_4+1$, and its dual is  given by
$F^*(x)=x_1 + x_2 + x_1 x_2 + x_1 x_3 + x_2 x_3 + x_2 x_4$,  $x\in \mathbb{F}^4_2.$
 It is not difficult to check that there exist many disjoint spectra functions to $F$ (by computer search we find $384$ such functions), and their Walsh supports are given exactly as $\mathbb{F}^4_2\wr T_{\mu+1}$, where the duals are bent functions on $\FB^4$ at bent distance to $\mu$.
\end{rem}
The question whether this nonlinear action can be generalized to provide a set of disjoint
spectra plateaued functions of maximal cardinality seems to be difficult and is left as an open
problem.

\section{Conclusions}\label{sec:OP}

The purpose of this article is to provide further clarity concerning the essential structural
properties of a class of Boolean functions known as plateaued functions. We have provided
an efficient approach of designing these functions in the spectral domain, both trivial and nontrivial ones. The EA-equivalence within and between these classes have been addressed as well as the design of disjoint spectra (regardless of being trivial or not) plateaued functions of maximal cardinality.
The most challenging problem
that remains open is whether the constructed disjoint spectra plateaued functions of maximal
cardinality (whose concatenation is a bent function) may give rise to some new classes of bent
functions.
\\\\
\noindent
{\large \bf Acknowledgment:} Yongzhuang Wei (corresponding author) is supported in part by the
National Key R\&D Program of China (No. 2017YFB0802000)
and in part by the Natural Science Foundation of China (Nos.
61572148, 61872103).  Samir Hod\v zi\' c  is supported in part by the Slovenian Research Agency (research program P3-0384 and Young Researchers Grant). Enes Pasalic is partly supported  by the Slovenian Research Agency (research program P3-0384 and research project J1-9108). For the first two authors, the work is supported in part by H2020 Teaming InnoRenew CoE (grant no. 739574). Fengrong Zhang is supported by Jiangsu Natural Science Foundation (BK20181352).


\section{Appendix}
In the context of Remark \ref{rem:ordering}, we provide an alternative  proof of Theorem \ref{th:EA1}. For this purpose, we also need  to modify Lemma \ref{prop:Hrow1} slightly.

Let $(i)_2$ ($i\in[0,2^{m}-1]$) denote the binary representation of length $m$ of the integer $i$. The linear subspaces $E,E'\subset \mathbb{F}^n_2$ ($\dim(E)=\dim(E')=n-s$) and ${\Bbb F}_2^{n-s}$ we order  as
\begin{eqnarray}\label{eq:EE'}\nonumber
E&=&\{0_n,(1)_2\cdot \mathbf{B}_E,\ldots, (2^{n-s}-1)_2\cdot \mathbf{B}_E\}=\{(i)_2\cdot \mathbf{B}_E: i\in[0,2^{n-s}-1]\},\\
{\Bbb F}_2^{n-s}&=&\{(i)_2\cdot (\beta^{(1)},\beta^{(2)},\ldots,\beta^{(n-s)}): i\in[0,2^{n-s}-1]\},\\\nonumber
E'&=&\{0_n,(1)_2\cdot \mathbf{B}_{E'},\ldots, (2^{n-s}-1)_2\cdot \mathbf{B}_{E'}\}=\{(i)_2\cdot \mathbf{B}_{E'}: i\in[0,2^{n-s}-1]\},
\end{eqnarray}
where the corresponding bases are given as $\mathbf{B}_E=(\alpha^{(1)},\alpha^{(2)},\ldots,\alpha^{(n-s)})$,
$\mathbf{B}_{{\Bbb F}_2^{n-s}}=(\beta^{(1)},\beta^{(2)},$ $\ldots,\beta^{(n-s)})$ and
  $\mathbf{B}_{E'}=(\alpha'^{(1)},\alpha'^{(2)},\ldots,\alpha'^{(n-s)})$.
To clarify the notation, let $n-s=3$. Then the previous notation means that for $(3)_2=(0,1,1)$ and basis $\mathbf{B}_E=(\alpha^{(1)},\alpha^{(2)},\alpha^{(3)})$ we have that $(3)_2\cdot \mathbf{B}_E=\alpha^{(2)}+\alpha^{(3)}.$ Notice that all sets in (\ref{eq:EE'}) satisfy the recursion mentioned in Remark \ref{rem:ordering}.
\begin{lemma}\label{prop:Hrow2}
Let a linear subspace $E\subset {\Bbb F}_2^n$ and space ${\Bbb F}_2^{n-s}$ be ordered as in (\ref{eq:EE'}). Let the mapping $ \sigma:{\Bbb F}_2^{n-s} \rightarrow E$ be defined in terms of indices as
$$\sigma((i)_2\cdot (\beta^{(1)},\beta^{(2)},\ldots,\beta^{(n-s)}))=(i)_2\cdot (\alpha^{(1)},\alpha^{(2)},\ldots,\alpha^{(n-s)}),$$
for fixed bases $\mathbf{B}_E$, $\mathbf{B}_{\mathbb{F}^{n-s}_2}$. Then for an arbitrary matrix $B\in GL(n-s,\mathbb{F}_2)$ and vector $t=c\cdot \mathbf{B}_{\mathbb{F}^{n-s}_2} \in \mathbb{F}^n_2, c\in {\Bbb F}_2^{n-s}$, there exists a matrix $D\in GL(n,\mathbb{F}_2)$ and vector $\gamma\in E$ such that
 $$\sigma\left(\left((i)_2\cdot (\beta^{(1)},\beta^{(2)},\ldots,\beta^{(n-s)})\right)B+t\right)=\left((i)_2\cdot (\alpha^{(1)},\alpha^{(2)},\ldots,\alpha^{(n-s)})\right)D+\gamma.$$
 Furthermore, $\left((i)_2\cdot (\alpha^{(1)},\alpha^{(2)},\ldots,\alpha^{(n-s)})\right)D+\gamma \in E$ for $i\in[0,2^{n-s}-1]$.
\end{lemma}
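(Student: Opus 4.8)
The plan is to exploit the fact that the orderings introduced in (\ref{eq:EE'}) are precisely those for which the indexing map becomes \emph{linear}. Writing $x_i=(i)_2\cdot \mathbf{B}_{\mathbb{F}^{n-s}_2}$ and $e_i=(i)_2\cdot \mathbf{B}_E$, the crucial observation is that the bitwise XOR of integers corresponds to the addition of binary-representation vectors, so that $(i\oplus j)_2=(i)_2+(j)_2$ and hence $e_{i\oplus j}=e_i+e_j$ (and likewise for the $x_i$). Consequently $\sigma$, which by definition sends $x_i$ to $e_i$ for every $i$, is nothing but the $\mathbb{F}_2$-linear isomorphism $\mathbb{F}^{n-s}_2\to E$ determined by $\beta^{(j)}\mapsto \alpha^{(j)}$. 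I would begin the proof by establishing this linearity explicitly, since everything else follows formally from it.

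Once $\sigma$ is known to be linear, I would split the target expression as $\sigma(x_iB+t)=\sigma(x_iB)+\sigma(t)$. The translation part is immediate: setting $\gamma=\sigma(t)$ produces a vector of $E$ (here $t=c\cdot\mathbf{B}_{\mathbb{F}^{n-s}_2}\in \mathbb{F}^{n-s}_2$ and $\sigma$ maps into $E$), which serves as the required $\gamma$. For the linear part, I would introduce the composite map $T=\sigma\circ R_B\circ \sigma^{-1}$ acting on $E$, where $R_B(x)=xB$. Since $\sigma$ is an isomorphism and $B\in GL(n-s,\mathbb{F}_2)$, the map $T$ is an automorphism of $E$, and by construction $T(e_i)=\sigma(\sigma^{-1}(e_i)B)=\sigma(x_iB)$ for every $i$.

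The remaining step is to realise $T$ as right multiplication by a global invertible matrix that fixes $E$ setwise. I would fix a complement $F$ with $\mathbb{F}^n_2=E\oplus F$ and define $D\in GL(n,\mathbb{F}_2)$ by letting it act as $T$ on $E$ and as the identity on $F$; then $D$ is invertible, $ED=E$ (so $D$ is invariant on $E$), and $e_iD=T(e_i)=\sigma(x_iB)$. Combining the two parts yields $\sigma(x_iB+t)=e_iD+\gamma$, and the final assertion $e_iD+\gamma\in E$ is then immediate because both summands lie in the subspace $E$.

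The argument is essentially formal once the linearity of $\sigma$ is in place, so I do not anticipate a genuine obstacle; the only delicate points are bookkeeping. First, the paper works with row vectors and right multiplication, so I must phrase the extension of $T$ as an $n\times n$ matrix acting on the right and verify that the invariance ``$E=ED$'' holds in that convention rather than for a left action. Second, I should note that $D$ is not canonical, as it depends on the chosen complement $F$, which is harmless since only existence is claimed. This mirrors the structure of Lemma \ref{prop:Hrow1} but replaces the explicit basis-completion computation there by the coordinate-free extension of the automorphism $T$, which becomes cleaner under the recursion-based ordering.
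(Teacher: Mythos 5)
Your proof is correct, and it takes a genuinely different --- and cleaner --- route than the paper's. The paper's own argument pushes the affine map $x \mapsto xB+t$ through the index correspondence basis-by-basis: it forms the images $\beta^{(i_j)}=\beta^{(j)}B+t$ of the basis vectors, asserts that these again constitute a basis of $\mathbb{F}_2^{n-s}$, transfers them via the indices $i_j$ to a basis $(\alpha^{(i_1)},\ldots,\alpha^{(i_{n-s})})$ of $E$, and finally computes the entries of $D$ by solving an explicit linear system after completing $\mathbf{B}_E$ to a basis of $\mathbb{F}_2^n$. You instead isolate the single structural fact that makes the lemma work --- the orderings in (\ref{eq:EE'}) are precisely those for which $\sigma$ is $\mathbb{F}_2$-linear, since $(i\oplus j)_2=(i)_2+(j)_2$ on both sides of the correspondence --- then split off the translation as $\gamma=\sigma(t)\in E$ and realize the linear part as the conjugated automorphism $T=\sigma\circ R_B\circ \sigma^{-1}$ of $E$ (with $R_B(x)=xB$), extended to all of $\mathbb{F}_2^n$ by the identity on a complement $F$. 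This buys rigor as well as brevity: the paper's claim that the vectors $\beta^{(j)}B+t$ form a basis is false in general (take $B=I$ and $t=\beta^{(1)}$, which sends $\beta^{(1)}$ to the zero vector), and its key identification $(i)_2\cdot \widetilde{\mathbf{B}}_{\mathbb{F}_2^{n-s}}=(i)_2\cdot \mathbf{B}_{\mathbb{F}_2^{n-s}}B+t$ silently requires $(i)_2$ to have odd weight, since for even weight the copies of $t$ cancel; your linear-plus-translation decomposition sidesteps both defects and handles all $i$ uniformly. What the paper's computational approach offers in exchange is an explicit coordinate recipe for $D$ as the solution of a linear system, whereas your $D$ is non-constructive and depends on the chosen complement $F$ --- which, as you correctly note, is harmless because the lemma only asserts existence. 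Your extension-by-identity also automatically gives the stronger invariance $ED=E$, matching what Lemma \ref{prop:Hrow1} provides in the lexicographic setting.
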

\begin{proof}
If there exist integers $i_{1},i_{2},\ldots, i_{{n-s}}$ such that the vectors
$$(i_{1})_2\cdot \mathbf{B}_{{\Bbb F}_2^{n-s}}, (i_{2})_2\cdot \mathbf{B}_{{\Bbb F}_2^{n-s}},\ldots, (i_{{n-s}})_2\cdot \mathbf{B}_{{\Bbb F}_2^{n-s}}$$
constitute a basis of ${\Bbb F}_2^{n-s}$, then $\left((i_{1})_2,(i_{2})_2, \ldots, (i_{{n-s}})_2 \right)$ is a basis of ${\Bbb F}_2^{n-s}$. Furthermore, the vectors
$$(i_{1})_2\cdot \mathbf{B}_E, (i_{2})_2\cdot  \mathbf{B}_E,\ldots, (i_{{n-s}})_2\cdot  \mathbf{B}_E$$ also constitute a basis of $E$, since the pair of basis $\mathbf{B}_{{\Bbb F}_2^{n-s}}$ and $\mathbf{B}_E$ are related via the same linear combinations of basis vectors determined by $(i)_2$, where $i\in[0,2^{n-s}-1]$.

Without loss of generality, let the vectors $\beta^{(j)}$ and $\beta^{(i_{j})}$  be related as
 $$ \beta^{(j)}B+t= \beta^{(i_{j})}\in \mathbb{F}^{n-s}_2,\;\;\;j\in[1,n-s],$$
where $B\in GL(n-s,\mathbb{F}_2)$ and $t=c\cdot \mathbf{B}_{{\Bbb F}_2^{n-s}}$ (for some vector $c\in {\Bbb F}_2^{n-s}$). Clearly, since  $(\beta^{(i_1)},\beta^{(i_2)},\ldots,\beta^{(i_{n-s)}})$ is a basis of ${\Bbb F}_2^{n-s}$, then $(\alpha^{(i_1)},\alpha^{(i_2)},\ldots,\alpha^{(i_{n-s)}})$ is a basis of $E$.

Denoting by $\widetilde{\mathbf{B}}_E=(\alpha^{(i_1)},\alpha^{(i_2)},\ldots,\alpha^{(i_{n-s)}})$, there exists a matrix $D\in GL(n,\mathbb{F}_2)$ such that
$$\widetilde{\mathbf{B}}_E=\mathbf{B}_ED+c\cdot\mathbf{B}_E.$$
Considering the basis ${\mathbf{B}}_{{\Bbb F}_2^n}=(\alpha^{(1)},\ldots,\alpha^{({n-s)}}, \alpha^{({n-s+1)}},\ldots,\alpha^{({n)}})$, and since we view $\alpha^{(i_1)},\ldots,\alpha^{(i_{n-s})}$ as linear combinations of vectors $\alpha^{(1)},\ldots,\alpha^{(n-s)}$, then $\widetilde{\mathbf{B}}_{{\Bbb F}_2^n}=(\alpha^{(i_1)},\ldots,\alpha^{(i_{n-s)}}, \alpha^{({n-s+1)}},\ldots,\alpha^{({n)}})$ is a basis of ${\Bbb F}_2^n$ as well.
Denoting by
$ D=\left(
    \begin{array}{cccc}
      d_{11} &   d_{12} & \ldots &   d_{1n} \\
      d_{21} &   d_{22} & \ldots &   d_{2n} \\
     \vdots &   \vdots & \ddots &   \vdots \\
     d_{n1} &  d_{n2} & \ldots &   d_{nn} \\
    \end{array}
  \right),
$
and $\alpha^{(j)}=(\alpha^{(j)}_1,\alpha^{(j)}_2,\ldots,\alpha^{(j)}_n)$, the system
$$\left(
    \begin{array}{c}
      \alpha^{(i_1)} \\
      \vdots \\
      \alpha^{(i_{n-s)}}\\
       \alpha^{({n-s+1)}}\\
       \vdots\\
      \alpha^{({n)}}  \\
    \end{array}
  \right)=\left(
    \begin{array}{c}
      \alpha^{(1)} \\
      \vdots \\
      \alpha^{({n-s)}}\\
       \alpha^{({n-s+1)}}\\
       \vdots\\
      \alpha^{({n)}}  \\
    \end{array}
  \right)D+c'\cdot \mathbf{B}_{{\Bbb F}_2^n}
 $$
 means that
 $$
 \left\{
   \begin{array}{l}
     \alpha^{(1)}_1d_{1k}+\alpha^{(1)}_2d_{2k}+\ldots+\alpha^{(1)}_nd_{nk}+\gamma_1=\alpha^{(i_1)}_k, \\
    \alpha^{(2)}_1d_{1k}+\alpha^{(2)}_2d_{2k}+\ldots+\alpha^{(2)}_nd_{nk}+\gamma_2=\alpha^{(i_2)}_k, \;\;\;\;\;k\in[1,n],\\
\vdots\\
 \alpha^{(n)}_1d_{1k}+\alpha^{(n)}_2d_{2k}+\ldots+\alpha^{(n)}_nd_{nk}+\gamma_n=\alpha^{(i_n)}_k,
   \end{array}
 \right.
 $$
where for $c'=(c,0,\ldots,0)\in {\Bbb F}_2^n$ the vector $\gamma$ is given as $\gamma=c'\cdot{\mathbf{B}}_{{\Bbb F}_2^n}=c\cdot\mathbf{B}_E\in E$. This means that the matrix $D$ can be computed.
%
Furthermore, denoting by $\widetilde{\mathbf{B}}_{\mathbb{F}^{n-s}_2}=(\beta^{(i_1)},\beta^{(i_2)},\ldots,\beta^{(i_{n-s})})$ we  have that
$$\sigma((i)_2\cdot \widetilde{\mathbf{B}}_{\mathbb{F}^{n-s}_2})=\sigma
((i)_2\cdot \mathbf{B}_{\mathbb{F}^{n-s}_2}B+t)=(i)_2\cdot\mathbf{B}_ED+\gamma,$$
holds for all $i\in[0,2^{n-s}-1]$. Since $\mathbf{B}_ED$ is also a basis of $E$ and $\gamma \in E$, we have $(i)_2\cdot\mathbf{B}_ED+\gamma\in E$
 which completes the proof.\qed
\end{proof}
\begin{theo}
Two trivial $s$-plateaued functions $f,h:\mathbb{F}^n_2\rightarrow \mathbb{F}_2$ are EA-equivalent if and only if their duals $\overline{f}^*,\overline{h}^*:\mathbb{F}^{n-s}_2\rightarrow\mathbb{F}_2$ 
are EA-equivalent bent functions. 
\end{theo}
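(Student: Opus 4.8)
The plan is to mirror the structure of the proof of Theorem \ref{th:EA1}, but to replace the lexicographic ordering of $E$ by the more general recursive ordering described in Remark \ref{rem:ordering} and to substitute Lemma \ref{prop:Hrow1} with its refinement Lemma \ref{prop:Hrow2}. The forward implication $(\Rightarrow)$ is immediate: if $f$ and $h$ are EA-equivalent, then by relation (\ref{eq:equiv22}) together with the identification (\ref{eq:identif}) we obtain $\overline{h}^*(x_i)=\overline{f}^*(x_i)+\phi_b(x_i)+\varepsilon$ for an appropriate $b\in\mathbb{F}^n_2$; since for trivial plateaued functions the sequence profile $\Phi_f$ consists exactly of affine functions (Proposition \ref{prop:Hrow}), the term $\phi_b$ is affine, so $\overline{h}^*$ and $\overline{f}^*$ are affine-equivalent, and both are bent by Theorem \ref{theo:plateH}-$(ii)$.

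For the converse $(\Leftarrow)$ I would fix $f$ and $h$ and assume their duals satisfy $\overline{h}^*(x_i)=\overline{f}^*(x_iB+t)+r\cdot x_i+\kappa$ for some $B\in GL(n-s,\mathbb{F}_2)$, vectors $t,r\in\mathbb{F}^{n-s}_2$ and $\kappa\in\mathbb{F}_2$, the goal being to manufacture affine data on $\mathbb{F}^n_2$ realizing EA-equivalence. First, because $S_f$ is an affine subspace, the linear part $r\cdot x_i$ lifts to a linear functional on $S_f$: there exist $b\in\mathbb{F}^n_2$ and $\varepsilon\in\mathbb{F}_2$ with $b\cdot\omega_i+\varepsilon=r\cdot x_i+\kappa$ for all $i$, exactly as in (\ref{eq:w1}). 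Second, the input-side composition $x_i\mapsto x_iB+t$ is handled by Lemma \ref{prop:Hrow2}, which furnishes a matrix $D\in GL(n,\mathbb{F}_2)$ and a vector $\gamma\in E$ such that the reordering induced by $B$ and $t$ corresponds to $e_i\mapsto e_iD+\gamma\in E$, whence $\overline{f}^*(x_iB+t)=f^*(v+(e_iD+\gamma))$.

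With these two ingredients in hand I would substitute into the inverse Walsh--Hadamard transform of $h$, as in the chain of equalities (\ref{eq:w3}): writing $(-1)^{h(x)}=2^{-n}\sum_{z_i\in S_h}W_h(z_i)(-1)^{x\cdot z_i}$, replacing $h^*(z_i)$ by $f^*(v+e_iD+\gamma)+b\cdot\omega_i+\varepsilon$, and performing the change of summation variable $t_i=v+e_iD+\gamma\in S_f$ together with $z_i=c+\omega_iM$. Collecting the linear terms in $x$ and in $t_i$ then exhibits $(-1)^{h(x)}=(-1)^{f(xA+b')+c'\cdot x+\varepsilon'}$ for an explicit $A$ built from $D$ and $M$, and suitable $b',c',\varepsilon'$, which is precisely EA-equivalence.

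The main obstacle, and the reason the appendix needs Lemma \ref{prop:Hrow2} rather than Lemma \ref{prop:Hrow1}, is ensuring consistency once $E$ is only assumed to satisfy the recursion $\hat{e}_j=\hat{e}_{2^i}+\hat{e}_{j-2^i}$ instead of being lexicographically ordered. One must verify that under this weaker ordering the induced ordering of $S_h=c+S_fM$ inherits the same recursion, so that $\overline{h}^*$ is still genuinely bent on $\mathbb{F}^{n-s}_2$ via Theorem 3.1 in \cite{Secondary}, and that the index-level map $\sigma$ of Lemma \ref{prop:Hrow2} correctly translates the affine action $x\mapsto xB+t$ into an affine action $e\mapsto eD+\gamma$ preserving $E$. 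The delicate point is the basis-change argument establishing that the matrix $D$ is invariant on $E$, i.e. that $\mathbf{B}_ED$ is again a basis of $E$; once this is granted, the remaining computation is the same bookkeeping as in Theorem \ref{th:EA1}.
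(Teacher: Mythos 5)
Your overall architecture (forward direction via relation (\ref{eq:equiv22}) together with the affineness of the sequence profile of an affine support; converse via lifting the linear part $r\cdot x_i$ with Proposition \ref{prop:Hrow}, handling the input transformation $x_i\mapsto x_iB+t$ with Lemma \ref{prop:Hrow2}, and substituting into the inverse WHT) is the same as the paper's, but there is one genuine gap in your converse. Your computation uses the index-wise relation $z_i=c+\omega_iM$, i.e.\ it presumes that the ordered supports $S_h=\{z_0,\ldots,z_{2^{n-s}-1}\}$ and $S_f=\{\omega_0,\ldots,\omega_{2^{n-s}-1}\}$ are related by an invertible affine map matching the $i$-th element to the $i$-th element. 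In Theorem \ref{th:EA1} this is a \emph{hypothesis} ($S_h=c+S_fM$, with the ordering of $S_h$ induced from that of $S_f$); in the present statement it is not, and dropping that hypothesis is precisely what distinguishes this theorem from Theorem \ref{th:EA1}. Without producing such a pair $(c,M)$ you cannot rewrite the phase $x\cdot z_i$ in terms of $e_i$ (equivalently $\omega_i$): in your sum, $h^*(z_i)=f^*(v+e_iD+\gamma)+b\cdot\omega_i+\varepsilon$ is indexed through $E$ and $S_f$, while $x\cdot z_i$ is indexed through $S_h$, and the sum cannot be recognized as a Walsh transform of $f$ unless the two indexings are linked. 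Mere set-wise existence of an affine bijection between two $(n-s)$-dimensional affine subspaces is easy but insufficient; the map must intertwine the two orderings used to define $\overline{f}^*$ and $\overline{h}^*$.

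The missing step is exactly what the paper's appendix proof supplies, and it is where the ordering (\ref{eq:EE'}) earns its keep: writing $S_f=v+E$, $S_h=v'+E'$ with $E=\{(i)_2\cdot\mathbf{B}_E\}$ and $E'=\{(i)_2\cdot\mathbf{B}_{E'}\}$ both generated by the \emph{same} coefficient vectors $(i)_2$, one extends $\mathbf{B}_E$ and $\mathbf{B}_{E'}$ to bases of $\mathbb{F}^n_2$ and defines $A\in GL(n,\mathbb{F}_2)$ by $\alpha^{(j)}=\alpha'^{(j)}A$ for all $j$. Linearity then gives $e_i=e'_iA$ for \emph{every index} $i$ (not merely $E=E'A$ as sets), whence $z_i=v'+e'_i=(v'+vA^{-1})+\omega_iA^{-1}$; this is your assumed relation with $M=A^{-1}$ and $c=v'+vA^{-1}$. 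Once this is inserted, your substitution and bookkeeping close the argument as in the final chain of equalities of the appendix proof, and the points you flag as delicate (that $D$ leaves $E$ invariant, i.e.\ that $\mathbf{B}_ED$ is again a basis of $E$, and that the recursive ordering is respected) are already contained in the statement and proof of Lemma \ref{prop:Hrow2}, so citing it suffices.
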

\begin{proof} $(\Rightarrow)$ By \cite[Theorem 3.1-$(ii)$]{Secondary} and relations (\ref{eq:identif})-(\ref{eq:equiv22}) it is clear that $\overline{f}^*,\overline{h}^*:\mathbb{F}^{n-s}_2\rightarrow\mathbb{F}_2$  are EA-equivalent bent functions.

$(\Leftarrow)$ In this part, we assume that $f$ and $h$ are two fixed trivial plateaued functions and thus the values of $h^*$ (on $S_h$) and $f^*$ (on $S_f$) are fixed in advance. In this context, we need to specify their values on $\mathbb{F}^{n-s}_2$.

 Suppose that $S_f$ and $S_h$ are given as $S_f=v+E$ and $S_h=v'+E'$ respectively,   where  $E$ and $E'$ are defined as (\ref{eq:EE'}). Without loss of generality, denote  $E= \{e_0,\ldots,e_{2^{n-s}-1}\}$ and $E= \{e'_0,\ldots,e'_{2^{n-s}-1}\}$, that is $e_i= (i)_2\cdot \mathbf{B}_E $  and $e'_i= (i)_2\cdot \mathbf{B}_{E'} $.
 With respect to $S_f=v+E$, the dual $\overline{f}^*:\mathbb{F}^{n-s}_2\rightarrow \mathbb{F}_2$ is defined as
\begin{equation}\label{equ add1}\overline{f}^*(x_i)=f^*(v+e_i)=f^*(\omega_i),\;\;\;x_i\in \mathbb{F}^{n-s}_2,\;e_i\in E.\end{equation}
 With respect to $S_h=v'+E'$, the dual $\overline{h}^*:\mathbb{F}^{n-s}_2\rightarrow \mathbb{F}_2$ is defined as
\begin{equation}\label{equ add2}\overline{h}^*(x_i)=h^*(v'+e'_i)=h^*(z_i),\;\;\;x_i\in \mathbb{F}^{n-s}_2,\;e'_i\in E'.\end{equation}
Due to Proposition \ref{prop:Hrow} and Theorem \ref{theo:plateH}-$(ii)$, it is not difficult to see that both duals $\overline{h}^*$ and $\overline{f}^*$ are bent functions.

Now, let us assume that $\overline{f}^*,\overline{h}^*$  are affine equivalent. That is, there exist a matrix $B\in GL(n-s,\mathbb{F}_2)$ ant two vectors $t,r\in \mathbb{F}^{n-s}_2$ and $\kappa\in \mathbb{F}_2$ such that  $$\overline{h}^*(x_i)=\overline{f}^*(x_iB+t)+r\cdot x_i+\kappa, \;\;\;\;\; x_i\in \mathbb{F}^{n-s}_2.$$
By Proposition \ref{prop:Hrow}, there  exists  $c\in \mathbb{F}^n_2$ such that $c\cdot \omega_i=r\cdot x_i$ holds for all $i\in[0,2^{n-s}-1]$, i.e.,
$$(c\cdot \omega_0,\ldots,c\cdot \omega_{2^{n-s}-1})=(r\cdot x_0,\ldots,r\cdot x_{2^{n-s}-1})$$
is truth table of a linear function in $n-s$ variables ($S_f$ is an affine subspace). Furthermore, by Lemma \ref{prop:Hrow2}, (\ref{equ add1}) and  (\ref{equ add2}),  we have
  \begin{eqnarray}\label{equ add3}
  {h}^*(z_i)&=& \overline{h}^*(x_i)=\overline{f}^*(x_iB+t)+r\cdot x_i+\kappa=f^*(v+(e_iD+\gamma ))+c\cdot \omega_i+\kappa,
  \end{eqnarray}
where $D$ and $ \gamma$ can be obtained by $B$ and $t$. Note that by Lemma \ref{prop:Hrow2} we have that $e_iD+\gamma\in E$ and  consequently $v+(e_iD+\gamma)\in S_f$.

Now setting ${\mathbf{B}}_{{\Bbb F}_2^n}=(\alpha^{(1)},\ldots,\alpha^{({n-s)}}, \alpha^{({n-s+1)}},\ldots,\alpha^{({n)}})$ and $\widetilde{\mathbf{B}}_{{\Bbb F}_2^n}=(\alpha'^{(1)},\ldots,\alpha'^{({n-s)}},$ $ \alpha'^{({n-s+1)}},\ldots,\alpha'^{({n)}})$, there exists a matrix $A\in GL(n,\mathbb{F}_2)$
such that
$$ \left(
     \begin{array}{c}
       \alpha^{(1)}\\
       \vdots \\
       \alpha^{({n)}} \\
     \end{array}
   \right)=  \left(
     \begin{array}{c}
       \alpha'^{(1)}\\
       \vdots \\
       \alpha'^{({n)}} \\
     \end{array}
   \right)A,
$$
holds, i.e., we have $\alpha^{(i)}=\alpha'^{(i)}A$, $i=1,2,\ldots, n$.  Furthermore, we have $e_i=e'_i A$ for $i=0,1,2\ldots,2^{n-s}-1$. Now, for $x\in \mathbb{F}^n_2$, by (\ref{WHT}), (\ref{equ add1}), (\ref{equ add2}) and  (\ref{equ add3}) we have
 \begin{equation}\label{equ IWHT}
 \begin{array}{rl}
 (-1)^{h(x)}=&2^{-n}\sum\limits_{e'_i\in E'}W_h(e'_i+v')(-1)^{e'_i\cdot x+v'\cdot x}\\
=&2^{\frac{s-n}{2}}\sum\limits_{e'_i\in E'}(-1)^{h^*(e'_i+v')+e'_i\cdot x+v'\cdot x}\\
=&2^{\frac{s-n}{2}}\sum\limits_{e_i\in E}(-1)^{f^*(v+e_iD+\gamma)+e_iA^{-1}\cdot x+v'\cdot x}\\
=&2^{\frac{s-n}{2}}\sum\limits_{e_i\in E}(-1)^{f^*(\omega_iD+vD+v+\gamma )+c\cdot \omega_i+\kappa+e_iA^{-1}\cdot x+v'\cdot x}\\
=&2^{-n}\sum\limits_{\omega_i\in S_f}W_f(\omega_iD+vD+v+\gamma)(-1)^{e_iA^{-1}\cdot x+v'\cdot x+c\cdot \omega_i+\kappa}\\
\stackrel{t_i=\omega_iD+vD+v+\gamma}{=} &2^{-n}\sum\limits_{t_i\in S_f}W_f(t_i)(-1)^{[(t_i+\gamma+v)D^{-1}A^{-1}+v']\cdot x+c\cdot (t_i+vD+\gamma+v)D^{-1}+\kappa}\\
{=} &2^{-n}\sum\limits_{t_i\in S_f}W_f(t_i)(-1)^{t_i\cdot [x(D^{-1}A^{-1})^T+cD^{-T}]+[(\gamma+v)D^{-1}A^{-1}+v']\cdot x+[c\cdot (vD+\gamma+v)D^{-1}+\kappa]}\\
{=}& (-1)^{f(x(D^{-1}A^{-1})^T+cD^{-T})+[(\gamma+v)D^{-1}A^{-1}+v']\cdot x+[c\cdot (vD+\gamma+v)D^{-1}+\kappa]}.\\
 \end{array}
 \end{equation}
Hence $f$ and $h$ are EA-equivalent, and the proof is completed.\qed
\end{proof}

\end{document}